\newenvironment{sciabstract}{%
\begin{quote} \bf}
{\end{quote}}
\newcommand{\xhdr}[1]{\vspace{1.7mm}\noindent{{\bf #1.}}}
\newcommand{\hocond}[2]{\phi_{#1}(#2)}
\newcommand{\hocut}[1]{\text{cut}_M(#1)}
\newcommand{\hovol}[1]{\text{vol}_M(#1)}
\newcommand{\vect}[1]{\bm{#1}}
\newcommand{\hide}[1]{}
\newcommand{\motiftype}[2]{M_{#2}}
\newcommand{\mbifan}{M_{\textnormal{bifan}}}
\newcommand{\medge}{M_{\textnormal{edge}}}
\DeclareMathOperator{\truth}{\textbf{1}}
\newcommand{\indicator}[1]{\truth(#1)}
\newcommand{\newcut}[2]{\textnormal{cut}^{(#1)}(#2)}
\newcommand{\newvol}[2]{\textnormal{vol}^{(#1)}(#2)}
\newcommand{\newcond}[2]{\phi^{(#1)}(#2)}
\newcommand{\newmotifcond}[3]{\phi^{(#1)}_{#2}(#3)}
\newcommand{\newmotifcut}[3]{\textnormal{cut}^{(#1)}_{#2}(#3)}
\newcommand{\newmotifvol}[3]{\textnormal{vol}^{(#1)}_{#2}(#3)}
\newcommand{\anchorset}{\mathcal{A}}
\newcommand{\anchornodes}{\chi_{\anchorset}(\vect{v})}
\newcommand{\minstance}{(\vect{v}, \anchornodes)}
\newcommand{\motifweightedij}{(W_M)_{ij}}
\newcommand{\normmotiflap}{\mathcal{L}_M}
\newcommand{\projecturl}{\url{http://snap.stanford.edu/higher-order/}}
\newtheorem{theorem}{Theorem} 
\newtheorem{lemma}[theorem]{Lemma}
\newif\ifheaders
\newif\ifarxiv
\newcounter{lastnote}
\title{
Higher-order organization of complex networks 
}
\author
{Austin R.~Benson,$^{1}$ David F.~Gleich,$^{2}$ Jure Leskovec$^{3\ast}$\\
\\
\normalsize{$^{1}$Institute for Computational and Mathematical Engineering, Stanford University} \\
\normalsize{$^{2}$Department of Computer Science, Purdue University}\\
\normalsize{$^{3}$Computer Science Department, Stanford University}\\
\\
\normalsize{$^\ast$To whom correspondence should be addressed;}\\
\normalsize{E-mail: jure@cs.stanford.edu}
}
\date{}
\begin{document} 


\baselineskip24pt


\maketitle 



\begin{sciabstract}

Networks are a fundamental tool for understanding and modeling complex systems
in physics, biology, neuroscience, engineering, and social science.
Many networks are known to exhibit rich, lower-order connectivity patterns that
can be captured at the level of individual nodes and edges.
However, higher-order organization of complex networks---at the level of small
network subgraphs---remains largely unknown.
Here we develop a generalized framework for clustering networks based on higher-order
connectivity patterns.  This framework provides mathematical guarantees on the
optimality of obtained clusters and scales to networks with billions of edges.
The framework reveals higher-order organization in a number
of networks including information propagation units in neuronal networks and hub
structure in transportation networks.
Results show that networks exhibit rich higher-order organizational structures
that are exposed by clustering based on higher-order connectivity patterns.

\end{sciabstract}
\pagebreak


\ifheaders
\xhdr{Motifs are important}
\fi
Networks are a standard representation of data throughout the sciences, and
higher-order connectivity patterns are essential to understanding the
fundamental structures that control and mediate the behavior of many complex
systems~\cite{milo2002network,mangan2003coherent,yang2014overlapping,holland1970method,rosvall2014memory,prvzulj2004modeling,leskovec2009community}.
The most common higher-order structures are small network subgraphs, which we
refer to as network motifs (Figure~1A).
Network motifs are considered building blocks for complex
networks~\cite{milo2002network,yaverouglu2014revealing}.  For example,
feedforward loops (Figure~1A $M_5$) have proven fundamental to understanding
transcriptional regulation networks~\cite{mangan2003structure}, triangular
motifs (Figure~1A $M_1$--$M_7$) are crucial for social
networks~\cite{holland1970method}, open bidirectional wedges
(Figure~1A $M_{13}$) are key to structural hubs in the
brain~\cite{honey2007network}, and two-hop paths (Figure~1A $M_{8}$--$M_{13}$)
are essential to understanding air traffic patterns~\cite{rosvall2014memory}.
While network motifs have been recognized as fundamental units of networks, the
higher-order \emph{organization} of networks at the level of network motifs
largely remains an open question.

\ifheaders
\xhdr{Our work}
\fi
Here we use higher-order network structures to gain new insights into the
organization of complex systems.  We develop a framework that identifies
clusters of network motifs.  For each network motif (Figure~1A), a different
higher-order clustering may be revealed (Figure~1B), which means that different
organizational patterns are exposed depending on the chosen motif.

\ifheaders
\xhdr{The objective}
\fi
Conceptually, given a network motif $M$, our framework searches for a cluster of
nodes $S$ with two goals.  First, the nodes in $S$ should participate in many
instances of $M$.  Second, the set $S$ should avoid cutting instances of $M$, which
occurs when only a subset of the nodes from a motif are in the set $S$ (Figure~1B).
More precisely, given a motif $M$, the higher-order
clustering framework aims to find a cluster (defined by a set of nodes $S$) that minimizes the following ratio:
\begin{equation}\label{eqn:mcond}
\hocond{M}{S} = \hocut{S, \bar{S}} / \min(\hovol{S}, \hovol{\bar{S}}),
\end{equation}
where $\bar{S}$ denotes the remainder of the nodes (the complement of $S$),
$\hocut{S, \bar{S}}$ is the number of instances of motif $M$ with at least one
node in $S$ and one in $\bar{S}$, and $\hovol{S}$ is the number of nodes in
instances of $M$ that reside in $S$.  Equation~\ref{eqn:mcond} is a generalization of
the conductance metric in spectral graph theory, one of the most useful graph
partitioning scores~\cite{schaeffer2007graph}. We refer to $\hocond{M}{S}$ as
the motif conductance of $S$ with respect to $M$.


\ifheaders
\xhdr{Algorithm overview}
\fi
Finding the exact set of nodes $S$ that minimizes the motif conductance is
computationally infeasible~\cite{note-spectral-np-hard}.  To approximately
minimize Equation~\ref{eqn:mcond} and hence identify higher-order clusters, we
develop an optimization framework that provably finds near-optimal
clusters~(Supplementary Materials~\cite{supp}). We extend the spectral graph clustering methodology, which
is based on the eigenvalues and eigenvectors of matrices associated with the
graph~\cite{schaeffer2007graph}, to account for higher-order structures in
networks.  The resulting method maintains the properties of traditional spectral
graph clustering: computational efficiency, ease of implementation, and
mathematical guarantees on the near-optimality of obtained clusters.
Specifically, the clusters identified by our higher-order clustering framework
satisfy the motif Cheeger inequality~\cite{note-supp-cheeger}, which means that
our optimization framework finds clusters that are at most a quadratic factor
away from optimal.

\ifheaders
\xhdr{Steps of the method}
\fi
The algorithm (illustrated in Figure~1C) efficiently identifies a cluster of
nodes $S$ as follows:

\begin{itemize}
    \item{Step 1:} Given a network and a motif $M$ of interest, form the motif
                   adjacency matrix $W_M$ whose entries $(i,j)$ are the co-occurrence
                   counts of nodes $i$ and $j$ in the motif $M$:
                   \begin{equation}\label{eqn:weighting}
                   (W_M)_{ij} = \text{number of instances of $M$ that contain nodes $i$ and $j$}.
                   \end{equation}%
    \item{Step 2:} Compute the spectral ordering $\sigma$ of the nodes from the
                   normalized motif Laplacian matrix constructed via
                   $W_M$~\cite{note-spectral-ordering}.  

    \item{Step 3:} Find the prefix set of
                   $\sigma$ with the smallest motif conductance, formally:\
                   $S := \arg\min_{r} \hocond{M}{S_r}$, where %
                   $S_r = \{\sigma_1, \ldots, \sigma_r\}$.
\end{itemize}

\ifheaders
\xhdr{Summary and extensions}
\fi
For triangular motifs, the algorithm scales to networks with billions of edges
and typically only takes several hours to process graphs of such size.  On
smaller networks with hundreds of thousands of edges, the algorithm can process
motifs up to size 9~\cite{supp}.
While the worst-case computational complexity of the algorithm for triangular
motifs is $\Theta(m^{1.5})$ , where $m$ is the number of edges in the network,
in practice the algorithm is much faster.  By analyzing 16 real-world networks
where the number of edges $m$ ranges from 159,000 to 2 billion we found the
computational complexity to scale as $\Theta(m^{1.2})$.  Moreover, the algorithm
can easily be parallelized and sampling techniques can be used to further
improve performance~\cite{seshadhri2014wedge}.

The framework can be applied to directed, undirected, and weighted networks as
well as motifs~\cite{supp}. Moreover, it can also be applied to networks with
positive and negative signs on the edges, which are common in social networks
(friend vs.~foe or trust vs.~distrust edges) and metabolic networks (edges signifying 
activation vs.~inhibition)~\cite{supp}.  
The framework can be used to identify higher-order structure in networks where domain knowledge suggests the motif of interest. In the Supplementary Material~\cite{supp} we also show that when domain-specific higher-order pattern is not known in advance, the framework can also serve to identify which motifs are important for the modular organization of a given network~\cite{supp}.
Such a general framework allows for a study of complex
higher-order organizational structures in a number of different networks using
individual motifs and sets of motifs.  The framework and mathematical theory
immediately extend to other spectral methods such as localized algorithms that
find clusters around a seed node~\cite{andersen2006local} and algorithms for
finding overlapping clusters~\cite{whang2015non}.  To find several clusters, one
can use embeddings from multiple eigenvectors and $k$-means
clustering~\cite{ng2002spectral,supp} or apply recursive
bi-partitioning~\cite{boley1998principal,supp}.


\ifheaders
\xhdr{Applications}
\fi

The framework can serve to identify higher-order modular organization of networks.
We apply the higher-order clustering framework to the \emph{C. elegans} neuronal network, 
where the four-node ``bi-fan'' motif (Figure~2A) is
over-expressed~\cite{milo2002network}.  The higher-order clustering
framework then reveals the organization of the motif within the \emph{C. elegans} neuronal network.
We find a cluster of 20
neurons in the frontal section with low bi-fan motif conductance
(Figure~2B).  The cluster shows a way that nictation is
controlled.  Within the cluster, ring motor neurons (RMEL/V/R), proposed
pioneers of the nerve ring~\cite{riddle1997celegans}, propagate information to
IL2 neurons, regulators of nictation~\cite{lee2012nictation}, through the neuron
RIH and several inner labial sensory neurons (Figure~2C).  Our
framework contextualizes the sifnifance of the bi-fan motif in this control
mechanism.

The framework also provides new insights into network organization beyond the
clustering of nodes based only on edges.  Results on a transportation reachability
network~\cite{frey2007clustering} demonstrate how it finds the essential hub
interconnection airports (Figure~3).  These appear as extrema on the primary
spectral direction (Figure~3C) when two-hop motifs (Figure~3A) are used to
capture highly connected nodes and non-hubs.  (The first spectral coordinate of
the normalized motif Laplacian embedding was positively correlated with the
airport city's metropolitan population with Pearson correlation 99\% confidence
interval [0.33, 0.53]).  The secondary spectral direction identified the
West-East geography in the North American flight network (it was negatively
correlated with the airport city's longitude with Pearson correlation 99\%
confidence interval [-0.66, -0.50]).  On the other hand, edge-based methods
conflate geography and hub structure.  For example, Atlanta, a large hub, is
embedded next to Salina, a non-hub, with an edge-based method (Figure~3D).


\ifheaders
\xhdr{Conclusion}
\fi

Our higher-order network clustering framework unifies motif analysis and network
partitioning---two fundamental tools in network science---and reveals new
organizational patterns and modules in complex systems.
%
Prior efforts along these lines do not provide worst-case performance guarantees
on the obtained clustering~\cite{serrour2011detecting}, do not reveal which
motifs organize the network~\cite{michoel2011enrichment}, or rely on expanding
the size of the network~\cite{benson2015tensor,krzakala2013spectral}.
%
Theoretical results in the Supplementary Material~\cite{supp} also explain why
classes of hypergraph partitioning methods are more general than previously
assumed and how motif-based clustering provides a rigorous framework for the
special case of partitioning directed graphs.  Finally, the higher-order network
clustering framework is generally applicable to a wide range of networks types,
including directed, undirected, weighted, and signed networks.

\clearpage
\begin{figure}[t]
\vspace{-1.1cm}
\centering
\includegraphics[width=0.93\textwidth]{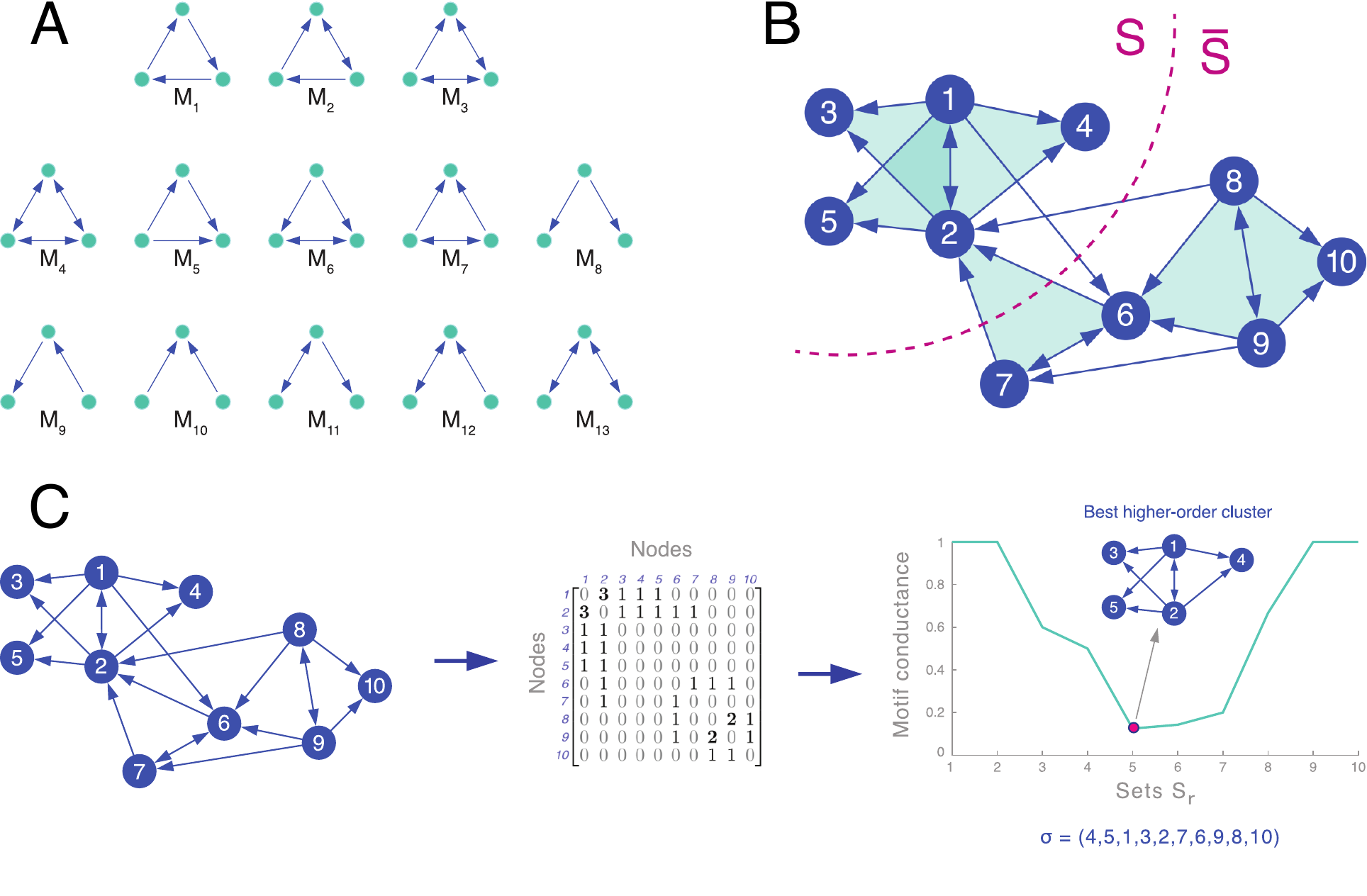}
\vspace{-0.2cm}
\caption{%
{\bf Higher-order network structures and the higher-order network clustering framework.}
{\bf A:}
Higher-order structures are captured by network motifs.  For
example, all 13 connected three-node directed motifs are shown here.
{\bf B:} 
Clustering of a network based on motif $M_7$.  For a given motif $M$, our
framework aims to find a set of nodes $S$ that minimizes motif conductance,
$\phi_M(S)$, which we define as the ratio of the number of motifs cut (filled
triangles cut) to the minimum number of nodes in instances of the motif in
either $S$ or $\bar{S}$~\cite{supp}.  In this case, there is one motif cut.
{\bf C:}
The higher-order network clustering framework.  Given a graph and a motif of
interest (in this case, $M_7$), the framework forms a motif adjacency matrix
($W_M$) by counting the number of times two nodes co-occur in an instance of the
motif.  An eigenvector of a Laplacian transformation of the motif adjacency
matrix is then computed.  The ordering $\sigma$ of the nodes provided by the
components of the eigenvector~\cite{note-spectral-ordering} produces nested sets
$S_r = \{\sigma_1, \ldots, \sigma_r\}$ of increasing size $r$.
We prove that the set $S_r$ with the smallest motif-based conductance,
$\phi_M(S_r)$, is a near-optimal higher-order cluster~\cite{supp}.
}
\label{fig:basics}
\end{figure}

\clearpage
\begin{figure}[t]
\centering
\includegraphics[width=0.9\textwidth]{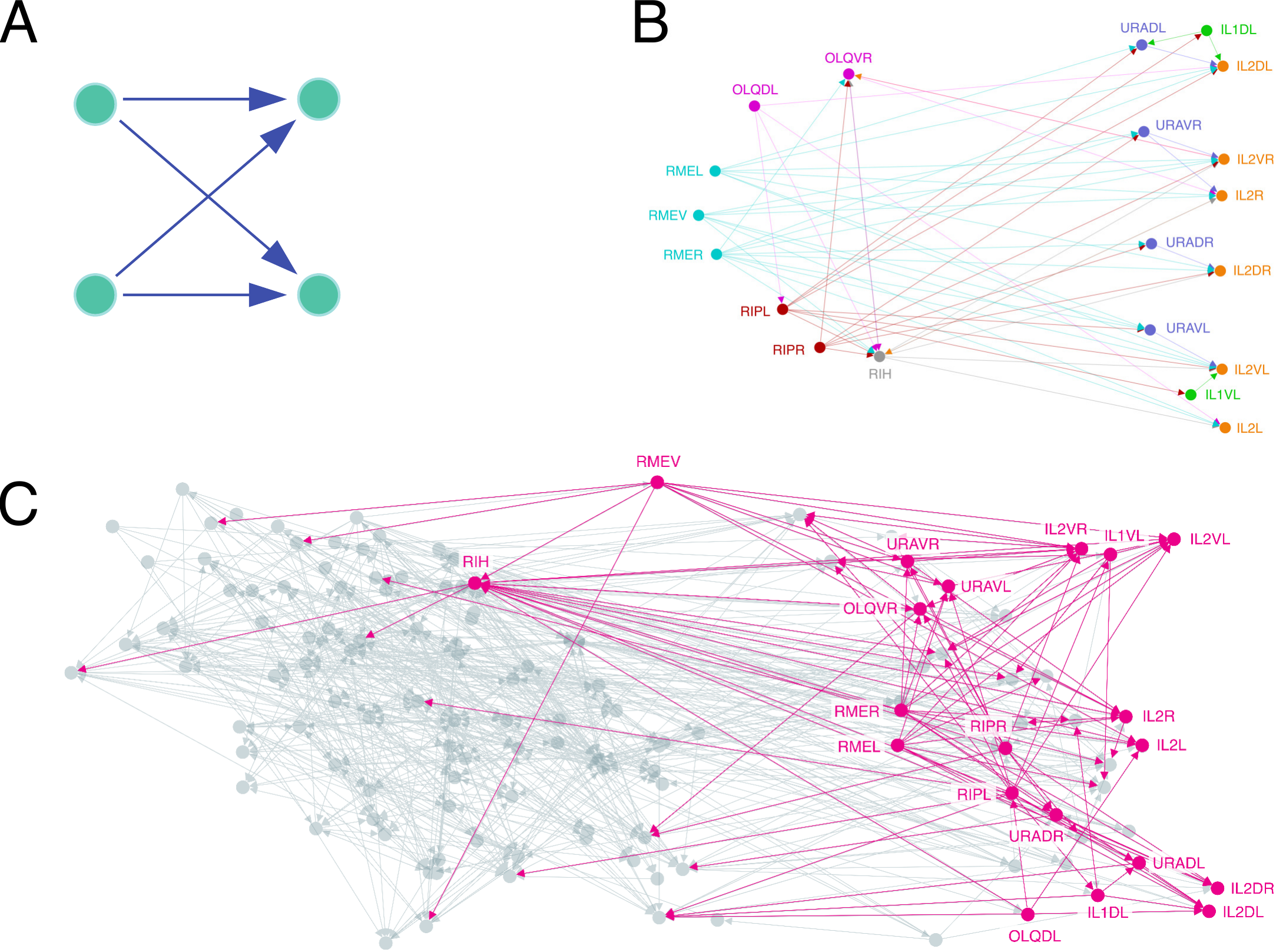}
\caption{%
{\bf Higher-order cluster in the \emph{C.~elegans} neuronal network~\cite{kaiser2006nonoptimal}.}
{\bf A:}
The 4-node ``bi-fan" motif, which is over-expressed in neuronal
networks~\cite{milo2002network}.  Intuitively, this motif describes a
cooperative propagation of information from the nodes on the left to the nodes
on the right.
{\bf B:}
The best higher-order cluster in the \emph{C.~elegans} frontal neuronal network
based on the motif in (A).  The cluster contains three ring motor neurons
(RMEL/V/R; cyan) with many outgoing connections, serving as the source of
information; six inner labial sensory neurons (IL2DL/VR/R/DR/VL; orange) with
many incoming connections, serving as the destination of information; and four
URA neurons (purple) acting as intermediaries.  These RME neurons have been
proposed as pioneers for the nerve ring~\cite{riddle1997celegans}, while the IL2
neurons are known regulators of nictation~\cite{lee2012nictation}, and the
higher-order cluster exposes their organization.  The cluster also reveals that
RIH serves as a critical intermediary of information processing.  This neuron
has incoming links from all three RME neurons, outgoing connections to five of
the six IL2 neurons, and the largest total number of connections of any neuron
in the cluster.
{\bf C:}
Illustration of the higher-order cluster in the context of the entire network.
Node locations are the true two-dimensional spatial embedding of the neurons.
Most information flows from left to right, and we see that RME/V/R/L and RIH
serve as sources of information to the neurons on the right.
}
\label{fig:celegans}
\end{figure}

\clearpage
\begin{figure}[t]
\centering
\includegraphics[width=\textwidth]{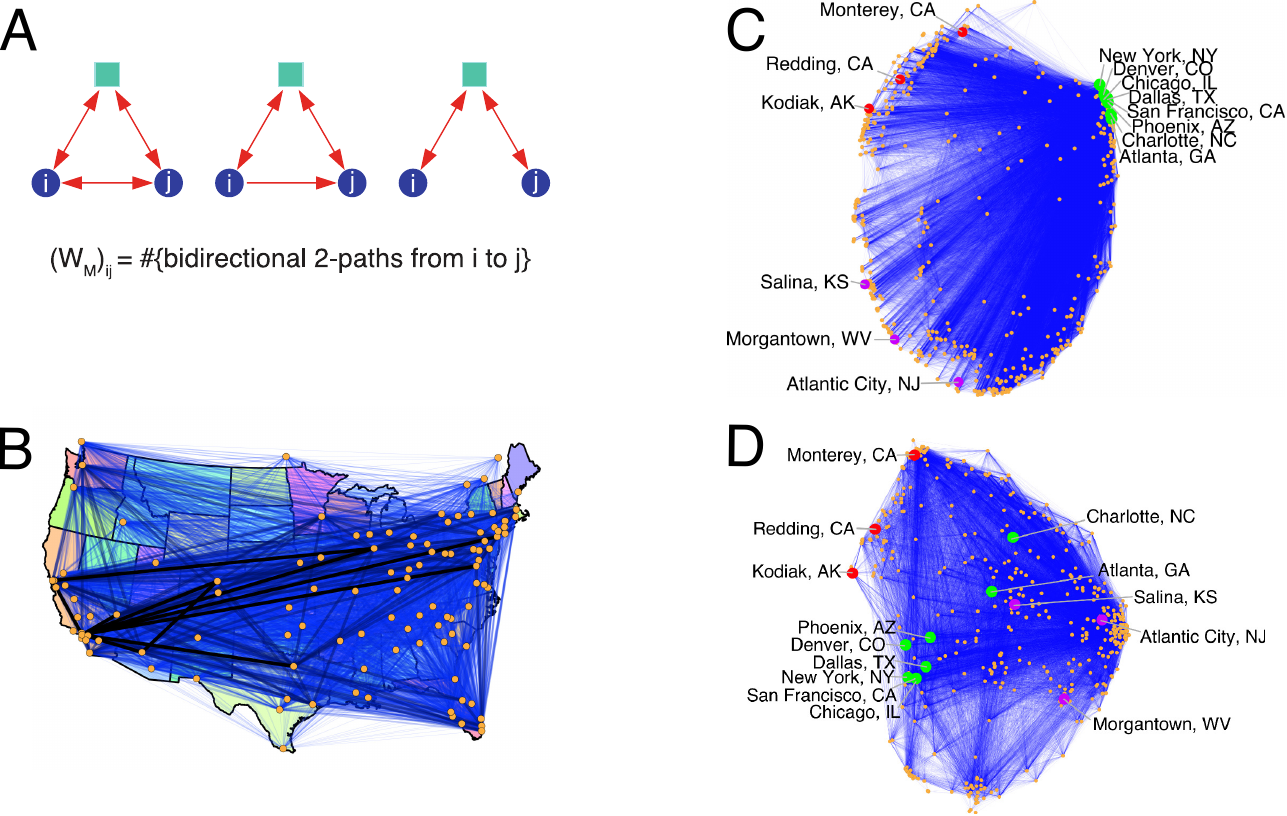}
\caption{%
{\bf Higher-order spectral analysis of a network of airports in Canada and the United States~\cite{frey2007clustering}.}
{\bf A:}
The three higher-order structures used in our analysis.  Each motif is
``anchored'' by the blue nodes $i$ and $j$, which means our framework only seeks
to cluster together the blue nodes.  Specifically, the motif adjacency matrix
adds weight to the $(i, j)$ edge based on the number of third intermediary nodes
(green squares).  The first two motifs correspond to highly-connected cities and
the motif on the right connects non-hubs to non-hubs.
{\bf B:}
The top 50 most populous cities in the United States which correspond to nodes
in the network.  The edge thickness is proportional to the weight in the motif
adjacency matrix $W_M$.  The thick, dark lines indicate that large weights
correspond to popular mainline routes.
{\bf C:}
Embedding of nodes provided by their corresponding components of the first two
non-trivial eigenvectors of the normalized Laplacian for $W_M$.  The marked
cities are eight large U.S. hubs (green), three West coast non-hubs (red), and
three East coast non-hubs (purple).  The primary spectral coordinate (left to
right) reveals how much of a hub the city is, and the second spectral coordinate
(top to bottom) captures West-East geography~\cite{supp}.
{\bf D:}
Embedding of nodes provided by their corresponding components in the first two
non-trivial eigenvectors of the standard, edge-based (non-higher-order)
normalized Laplacian.  This method does not capture the hub and geography found
by the higher-order method.  For example, Atlanta, the largest hub, is in the
center of the embedding, next to Salina, a non-hub.
}
\label{fig:airports}
\end{figure}

\clearpage
\renewcommand{\thepage}{S\arabic{page}}  
\renewcommand{\thesection}{S\arabic{section}}   
\renewcommand{\thetable}{S\arabic{table}}   
\renewcommand{\thefigure}{S\arabic{figure}}
\renewcommand{\theequation}{S\arabic{equation}}

\ifarxiv
\else
\setcounter{page}{1}
Title page
\clearpage

\tableofcontents
\clearpage
\fi

\section{Derivation and analysis of the motif-based spectral clustering method}

We now cover the background and theory for deriving and understanding the method
presented in the main text.  We will start by reviewing the graph Laplacian and
cut and volume measures for sets of vertices in a graph.  We then define network
motifs in Section~\ref{sec:motif_def} and generalizes the notions of cut and
volume to motifs.  Our new theory is presented in
Section~\ref{sec:motif_cheeger} and then we summarize some extensions of the
method.  Finally, we relate our method to existing methods for directed
graph clustering and hypergraph partitioning.

\subsection{Review of the graph Laplacian for weighted, undirected graphs}

Consider a weighted, undirected graph $G = (V, E)$, with $\lvert V \rvert = n$.
Further assume that $G$ has no isolated nodes.  Let $W$ encode the weights, of
the graph, \emph{i.e.},
$W_{ij} = W_{ji} = \text{weight of edge (i, j)}$.
The diagonal degree matrix $D$ is defined as $D_{ii} = \sum_{j=1}^{n} W_{ij}$,
and the graph Laplacian is defined as $L = D - W$.  We now relate these matrices
to the conductance of a set $S$, $\newcond{G}{S}$:
\begin{eqnarray}
\newcond{G}{S} &=& \newcut{G}{S, \bar{S}} / \min(\newvol{G}{S}, \newvol{G}{\bar{S}}), \\
\newcut{G}{S, \bar{S}} &=& \sum_{i \in S,\;j \in \bar{S}} W_{ij}, \\
\newvol{G}{S} &=& \sum_{i \in S} D_{ii}
\end{eqnarray}
Here, $\bar{S} = V \backslash S$.
(Note that conductance is a symmetric measure in $S$ and $\bar{S}$,
\emph{i.e.}, $\newcond{G}{S} = \newcond{G}{\bar{S}}$.)
Conceptually, the cut and volume measures are defined as follows:
\begin{eqnarray}\label{eqn:cutvol_words}
\newcut{G}{S, \bar{S}} &=& \text{weighted sum of weights of edges that are cut} \label{eqn:cut_words} \\
\newvol{G}{S} &=& \text{weighted number of edge end points in $S$} \label{eqn:vol_words}
\end{eqnarray}
Since we have assumed $G$ has no isolated nodes, $\newvol{G}{S} > 0$.
If $G$ is disconnected, then for any connected component $C$, $\newcond{G}{C} = 0$.
Thus, we usually consider breaking $G$ into connected components as 
a pre-processing step for algorithms that try to find low-conductance sets.

We now relate the cut metric to a quadratic form on $L$.  Later, we will derive
a similar form for a motif cut measure.  Note that for any vector
$y \in \mathbb{R}^{n}$,
\begin{equation}
y^TLy = \sum_{(i, j) \in E}w_{ij}(y_i - y_j)^2.
\end{equation}
Now, define $x$ to be an indicator vector for a set of nodes $S$
\emph{i.e.}, $x_i = 1$ if node $i$ is in $S$ and $x_i = 0$ if node $i$ is in
$\bar{S}$.  Note that if an edge $(i, j)$ is cut, then $x_i$ and $x_j$ take
different values and $(x_i - x_j)^2 = 1$; otherwise, $(x_i - x_j)^2 = 0$.  Thus,
\begin{equation}\label{eqn:quad_cut}
x^TLx = \newcut{G}{S, \bar{S}}.
\end{equation}

\subsection{Definition of network motifs}\label{sec:motif_def}

\begin{figure}[htb]
\centering
\includegraphics[width=0.7\columnwidth]{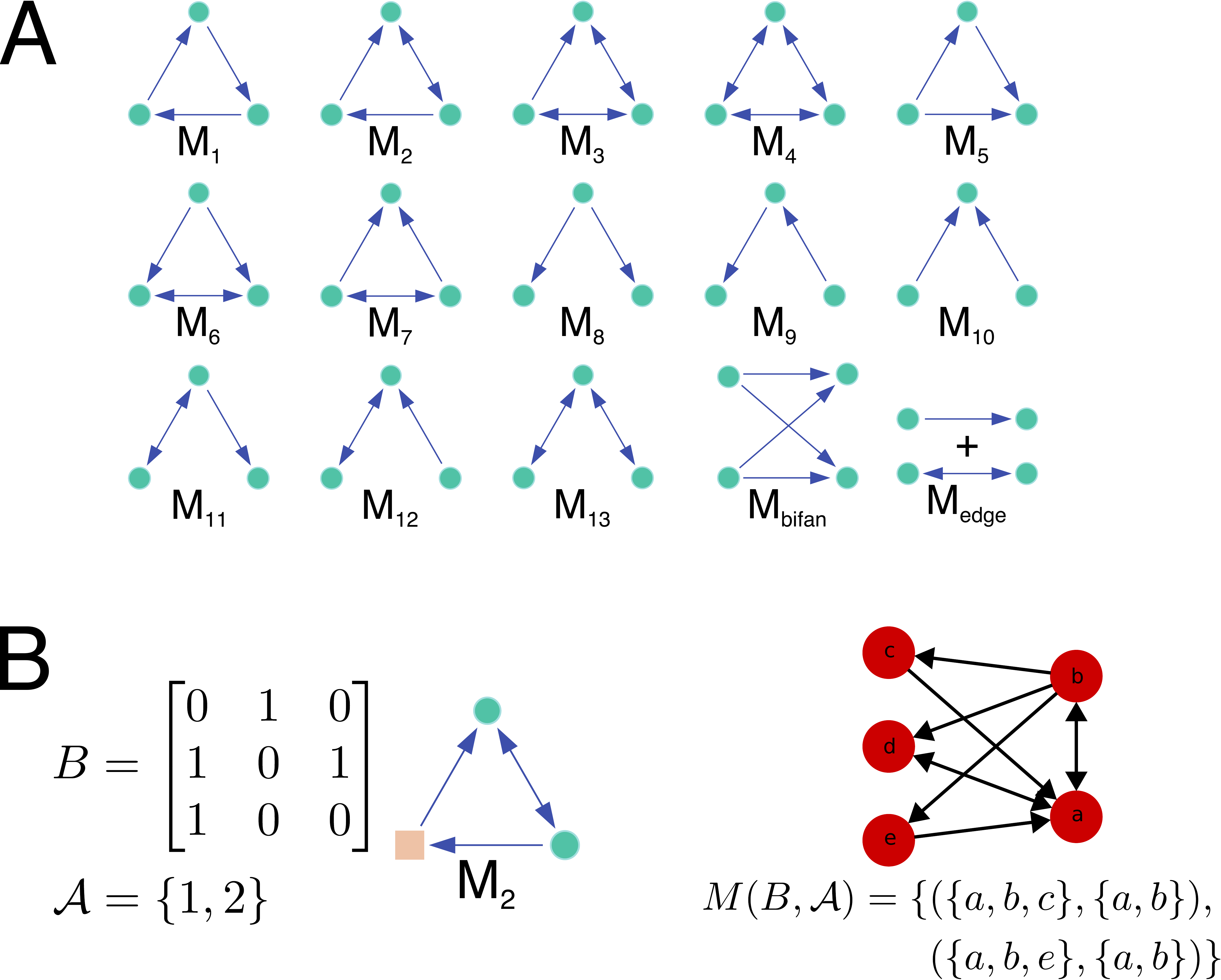}
\caption{%
{\bf A:}
Illustration of network motifs used throughout the main text and supplementary
material.  The motif $\medge$ is used to represent equivalence to undirected the
graph.
{\bf B:}
Diagram of motif definitions.  The motif is defined by a binary matrix $B$ and
an anchor set of nodes.  The figure shows an anchored version of motif
$\motiftype{3}{2}$ with anchors on the nodes that form the bi-directional edge.
There are two instances of the motif in the graph on the right.  Note that
$(\{a, b, d\}, \{a, b\})$ is not included in the set of motif instances because
the induced subgraph on the nodes $a$, $b$, and $d$ is not isomorphic to the
graph defined by $B$.
}
\label{fig:motif_defs}
\end{figure}

We now define network motifs as used in our work.  We note that there are
alternative definitions in the literature~\cite{milo2002network}.  We consider
motifs to be a pattern of edges on a small number of nodes (see
Figure~\ref{fig:motif_defs}).  Formally, we define a motif on $k$ nodes by a
tuple $(B, \anchorset)$, where $B$ is a $k \times k$ binary matrix and
$\anchorset \subset \{1, 2, \ldots, k\}$ is a set of anchor nodes.  The matrix
$B$ encodes the edge pattern between the $k$ nodes, and $\anchorset$ labels a
relevant subset of nodes for defining motif conductance.  In many cases,
$\anchorset$ is the entire set of nodes.  Let $\chi_{\anchorset}$ be a selection
function that takes the subset of a $k$-tuple indexed by $\anchorset$, and let
$\text{set}(\cdot)$ be the operator that takes an (ordered) tuple to an
(unordered) set.  Specifically,
\[
\text{set}((v_1, v_2, \ldots, v_k)) = \{v_1, v_2, \ldots, v_k\}.
\]
The set of motifs in an unweighted (possibly directed) graph with
adjacency matrix $A$, denoted $M(B, \anchorset)$, is defined by
\begin{equation}\label{eqn:motif_set}
M(B, \anchorset) =
\{
(\text{set}(\vect{v}), \text{set}(\chi_{\anchorset}(\vect{v})) ) \;\mid\;
 \vect{v} \in V^k,\quad v_1, \ldots, v_k \text{ distinct},\quad A_{\vect{v}} = B
\},
\end{equation}
where $A_{\vect{v}}$ is the $k \times k$ adjacency matrix on the subgraph
induced by the $k$ nodes of the \emph{ordered} vector $\vect{v}$.
Figure~\ref{fig:motif_defs} illustrates these definitions.  The set operator is
a convenient way to avoid duplicates when defining $M(B, \anchorset)$ for motifs
exhibiting symmetries.  Henceforth, we will just use $\minstance$ to denote
$(\text{set}(\vect{v}), \text{set}(\chi_{\anchorset}(\vect{v})))$ when
discussing elements of $M(B,\anchorset)$.  Furthermore, we call any
$\minstance \in M(B, \anchorset)$ a \emph{motif instance}.  When $B$ and
$\anchorset$ are arbitrary or clear from context, we will simply denote the
motif set by $M$.

We call motifs where $\anchornodes = \vect{v}$ \emph{simple motifs} and motifs
where $\anchornodes \neq \vect{v}$ \emph{anchored motifs}.  Motif analysis in
the literature has mostly analyzed simple motifs~\cite{alon2007network}.
However, the anchored motif provides us with a more general framework, and we
use an anchored motif for the analysis of the transportation reachability
network.

Often, a distinction is made between a \emph{functional} and a \emph{structural}
motif~\cite{sporns2004motifs} (or a subgraph and an induced
subgraph~\cite{inokuchi2000apriori}) to distinguish whether a motif specifies
simply the existence of a set of edges (functional motif or subgraph) or the
existence and non-existence of edges (structural motif or induced subgraph).  By
the definition in Equation~\ref{eqn:motif_set}, we refer to structural motifs in
this work.  Note that functional motifs consist of a set of structural motifs.
Our clustering framework allows for the simultaneous consideration of several
motifs (see Section~\ref{sec:multiple}), so we have not lost any generality in
our definitions.

\subsection{Definition of motif conductance}\label{sec:motif_cond}

Recall that the key definitions for defining conductance are the notions of cut
and volume.  For an unweighted graph, these are
\begin{eqnarray}
\newcond{G}{S, \bar{S}} &=& \newcut{G}{S, \bar{S}} / \min(\newvol{G}{S}, \newvol{G}{\bar{S}}), \\
\newcut{G}{S, \bar{S}} &=& \text{number of edges cut}, \\
\newvol{G}{S} &=& \text{number of edge end points in $S$}.
\end{eqnarray}
Our conceptual definition of motif conductance simply replaces an edge with a
motif instance of type $M$:
\begin{eqnarray}
\newmotifcond{G}{M}{S} &=& \newmotifcut{G}{M}{S, \bar{S}} / \min(\newmotifvol{G}{M}{S}, \newmotifvol{G}{M}{\bar{S}}), \\
\newmotifcut{G}{M}{S, \bar{S}} &=& \text{number of motif instances cut}, \\
\newmotifvol{G}{M}{S} &=& \text{number of motif instance end points in $S$}.
\end{eqnarray}

We say that a motif instance is cut if there is at least one anchor node in $S$
and at least one anchor node in $\bar{S}$.  We can formalize this when given a
motif set $M$ as in Equation~\ref{eqn:motif_set}:
\begin{eqnarray}
\newmotifcut{G}{M}{S, \bar{S}} &=& \sum_{\minstance \in M}\indicator{\exists\; i, j  \in \anchornodes \;\mid\; i \in S, j \in \bar{S}} \label{eqn:motif_cut},  \\
\newmotifvol{G}{M}{S} &=& \sum_{\minstance \in M} \sum_{i \in \anchornodes} \indicator{i \in S},
\end{eqnarray}
where $\indicator{s}$ is the truth-value indicator function on $s$, \emph{i.e.},
$\indicator{s}$ takes the value $1$ if the statement $s$ is true and $0$
otherwise.  Note that Equation~\ref{eqn:motif_cut} makes explicit use of the
anchor set $\anchorset$.  The motif cut measure only counts an instance of a
motif as cut if the anchor nodes are separated, and the motif volume counts the
number of anchored nodes in the set.  However, two nodes in an achor set may a
part of several motif instances.  Specifically, following the definition in
Equation~\ref{eqn:motif_set}, there may be many different $\vect{v}$ with the
same $\anchornodes$, and the nodes in $\anchornodes$ still get counted
proportional to the number of motif instances.

\subsection{Definition of the motif adjacency matrix and motif Laplacian}

Given an unweighted, directed graph and a motif set $M$, we conceptually
define the motif adjacency matrix by
\begin{equation}\label{eqn:informal_weighting}
\motifweightedij = \text{number of motif instances in $M$ where $i$ and $j$
  participate in the motif}.
\end{equation}
Or, formally,
\begin{equation}\label{eqn:formal_weighting}
\motifweightedij = \sum_{\minstance \in M} \indicator{\{i, j\} \subset \chi_{\anchorset}(\vect{v})},
\end{equation}
for $i \neq j$.  Note that weight is added to $\motifweightedij$ only if $i$ and $j$
appear in the anchor set.  This is important for the transportation reachability
network analyzed in the main text and in Section~\ref{sec:airports}, where weight is
added between cities $i$ and $j$ based on the number of intermediary cities that
can be traversed between them.

Next, we define the motif diagonal degree matrix by $(D_M)_{ii} =
\sum_{j=1}^{n} \motifweightedij$ and the motif Laplacian as $L_M = D_M - W_M$.
Finally, the normalized motif Laplacian is
$\normmotiflap = D_M^{-1/2}L_MD_M^{-1/2} = I - D_M^{-1/2}W_MD_M^{-1/2}$.
The theory in the next section will examine quadratic forms $L_M$ and derive the
main clustering method that uses an eigenvector of $\normmotiflap$.

\subsection{Algorithm for finding a single cluster}

We are now ready to describe the algorithm for finding a single cluster in a
graph.  The algorithm finds a partition of the nodes into $S$ and $\bar{S}$.
The motif conductance is symmetric in the sense that $\newmotifcond{G}{M}{S}
= \newmotifcond{G}{M}{\bar{S}}$, so either set of nodes ($S$ or $\bar{S}$) could
be interpreted as a cluster.  However, in practice, it is common that one set is
substantially smaller than the other.  We consider this smaller set to represent
a module in the network.  The algorithm is based on the Fiedler
partition~\cite{chung2007four} of the motif weighted adjacency matrix and is
presented below in Algorithm~\ref{alg:motif_fiedler}.%
\footnote{An implementation of Algorithm~\ref{alg:motif_fiedler} is available
in SNAP.  See \url{http://snap.stanford.edu/higher-order/}.}

\RestyleAlgo{boxruled}
\begin{algorithm}[H]\label{alg:motif_fiedler}
  \DontPrintSemicolon
  \KwIn{Directed, unweighted graph $G$ and motif $M$}
  \KwOut{Motif-based cluster (subset of nodes in $G$)}
   $(W_M)_{ij} \leftarrow \text{number of instances of $M$ that contain nodes $i$ and $j$}.$\;
   $G_M \leftarrow $ weighted graph induced $W_M$\;
   $D_M \leftarrow $ diagonal matrix with $(D_M)_{ii} = \sum_{j} (W_M)_{ij}$\;
   $z \leftarrow $ eigenvector of second smallest eigenvalue for $\normmotiflap = I - D_M^{-1/2}W_MD_M^{-1/2}$\;
   $\sigma_i \leftarrow$ to be index of $D_M^{-1/2}z$ with $i$th smallest value\;
   \tcc{Sweep over all prefixes of $\sigma$}
   $S \leftarrow \arg\min_{l} \newcond{G_M}{S_l}$, where $S_l = \{\sigma_1, \ldots, \sigma_l\}$ \;
  \eIf{$\lvert S \rvert < \lvert \bar{S} \rvert$}{
      \Return $S$\;
  }{
      \Return $\bar{S}$\;
  }      
  \caption{Motif-based clustering algorithm for finding a single
  cluster.}
\end{algorithm}

It is often informative to look at all conductance values found from the sweep
procedure.  We refer to a plot of $\newcond{G_M}{S_l}$ versus $l$ as a \emph{sweep
profile plot}.  In the following subection, we show that when the motif has
three nodes, $\newcond{G_M}{S_l} = \newmotifcond{G}{M}{S_l}$.  In this case, the sweep profile
shows how motif conductance varies with the size of the sets in
Algorithm~\ref{alg:motif_fiedler}.

In the following subsection, we show that when the motif $M$ has three nodes, the
cluster satisfies $\newmotifcond{G}{M}{S} \le 4\sqrt{\phi^*}$, where $\phi^*$ is the
smallest motif conductance over all sets of nodes.  In other words, the cluster
is nearly optimal.  Later, we extend this algorithm to allow for signed,
colored, and weighted motifs and to simultaneously finding multiple clusters.

\subsection{Motif Cheeger inequality for network motifs with three nodes}
\label{sec:motif_cheeger}

We now derive the motif Cheeger inequality for simple three-node motifs, or, in
general, motifs with three anchor nodes.  The crux of this result is deriving a
relationship between the motif conductance function and the weighted motif
adjacency matrix, from which the Cheeger inequality is essentially a corollary.
For the rest of this section, we will use the following notation.  Given an
unweighted, directed $G$ and a motif $M$, the corresponding weighted graph
defined by Equation~\ref{eqn:formal_weighting} is denoted by $G_M$.

The following Lemma relates the motif volume to the volume in the weighted
graph.  This lemma applies to any anchor set $\anchorset$ consisting of
at least two nodes.  For our main result, we will apply the lemma assuming
$\vert \anchorset \vert = 3$.  However, we will apply the lemma more generally
when discussing four node motifs in Section~\ref{sec:fournode}.
\begin{lemma}\label{lem:motif_vol}
Let $G = (V, E)$ be a directed, unweighted graph and let $G_M$ be the weighted
graph for a motif on $k$ nodes and $\vert \anchorset \vert \ge 2$ anchor nodes.
Then for any $S \subset V$,
\[
\newmotifvol{G}{M}{S} = \frac{1}{\vert \anchorset \vert  - 1}\newvol{G_M}{S}
\]
\end{lemma}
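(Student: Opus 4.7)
The plan is to prove this by a direct double-counting argument: we expand $\newvol{G_M}{S}$ using the definitions of $D_M$ and $W_M$, swap the order of summation, and then recognize the inner sum as counting anchor-node/anchor-node pairs within each motif instance.

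First I would unfold the definitions: $\newvol{G_M}{S} = \sum_{i \in S}(D_M)_{ii} = \sum_{i \in S}\sum_{j \neq i}(W_M)_{ij}$, and then substitute the formal weighting from Equation~\ref{eqn:formal_weighting}, giving $\newvol{G_M}{S} = \sum_{i \in S}\sum_{j \neq i}\sum_{\minstance \in M}\indicator{\{i,j\} \subset \anchornodes}$. Next, I would swap the sums so that the motif instances become the outermost index, yielding $\sum_{\minstance \in M}\sum_{i \in S}\sum_{j \neq i}\indicator{\{i,j\} \subset \anchornodes}$.

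Now for a fixed motif instance with anchor set $A = \anchornodes$, the inner double sum counts ordered pairs $(i,j)$ with $i \in S \cap A$, $j \in A$, and $j \neq i$. For each choice of $i \in S \cap A$, there are exactly $|A| - 1 = |\anchorset| - 1$ valid choices of $j$, so the inner double sum evaluates to $(|\anchorset| - 1) \cdot |S \cap A|$. Plugging this back in, I get $\newvol{G_M}{S} = (|\anchorset| - 1)\sum_{\minstance \in M}|S \cap \anchornodes| = (|\anchorset| - 1)\sum_{\minstance \in M}\sum_{i \in \anchornodes}\indicator{i \in S}$, and the rightmost expression is exactly $(|\anchorset| - 1)\cdot\newmotifvol{G}{M}{S}$ by definition. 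Dividing by $|\anchorset| - 1$ (which is nonzero since $|\anchorset| \ge 2$) yields the claim.

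There is no real obstacle here; the only subtle point is remembering that the $(W_M)_{ij}$ definition excludes $i = j$, which is why the sum over $j$ ranges only over $j \neq i$ and produces the factor $|\anchorset| - 1$ rather than $|\anchorset|$. This is also the reason the hypothesis $|\anchorset| \ge 2$ is needed: with a single anchor node, there are no pairs contributing to $W_M$, and the identity would degenerate.
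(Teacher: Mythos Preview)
Your proof is correct and follows essentially the same double-counting idea as the paper's own argument. The paper's proof is a terse two-sentence sketch observing that each anchor node $u_1$ in a motif instance contributes $\lvert\anchorset\rvert - 1$ increments to $(D_M)_{u_1,u_1}$ via the other anchor nodes; your version simply makes this explicit by writing out the sums, swapping their order, and evaluating the inner count per motif instance.
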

\begin{proof}
Consider an instance $\minstance$ of a motif.  Let
$(u_1, \ldots, u_{\vert \anchorset \vert}) = \chi_{\anchorset}(\vect{v})$.
By Equation~\ref{eqn:formal_weighting}, $(W_M)_{u_1, j}$ is incremented by one
for $j = u_2, \ldots, u_{\vert \anchorset \vert}$.  Since 
$(D_M)_{u_1, u_1} = \sum_{j} {(W_M)_{u_1,j}}$,
the motif end point $u_1$ is counted $\vert \anchorset \vert - 1$ times.
\end{proof}

\clearpage
\begin{figure}[h]
\centering
\includegraphics[width=0.9\columnwidth]{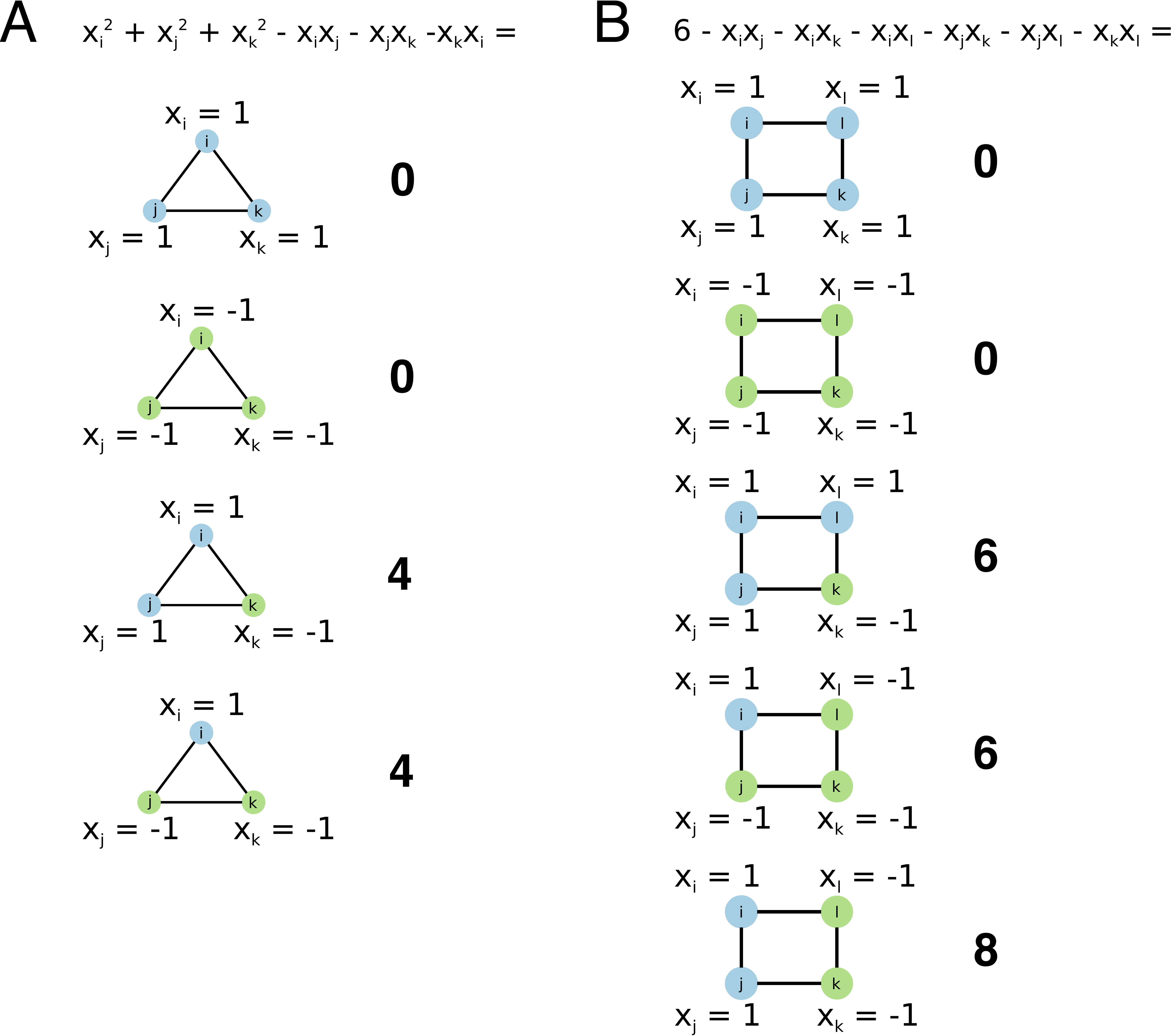}
\caption{%
Illustrations of the quadratic forms on indicator functions for set assignment.
Here, the blue nodes have assignment to set $S$ and the green nodes have
assignment to set $\bar{S}$.  The quadratic function gives the penalty for
cutting that motif.
{\bf A:}
Illustration of Equation~\ref{eqn:ind3}.  The quadratic form is proportional
to the indicator on whether or not the motif is cut.
{\bf B:}
Illustration of Equation~\ref{eqn:motif4_quadratic}.  The quadratic form is equal
to zero when all nodes are in the same set.  However, the form penalizes 2/2
splits more than 3/1 splits.
}
\label{fig:motif_cuts}
\end{figure}

The following lemma states that the truth value for determining whether three
binary variables in $\{-1, 1\}$ are not all equal is a quadratic function of the
variables (see Figure~\ref{fig:motif_cuts}).  Because this function is
quadratic, we will be able to relate motif cuts on three nodes to a quadratic
form on the motif Laplacian.

\begin{lemma}\label{lem:ind3}
Let $x_i, x_j, x_k \in \{-1, 1\}$.  Then
\begin{equation}\label{eqn:ind3}
4\cdot\indicator{x_i, x_j, x_k \text{ not all the same}}
= x_i^2 + x_j^2 + x_k^2 - x_ix_j - x_jx_k - x_kx_i. \nonumber
\end{equation}
\end{lemma}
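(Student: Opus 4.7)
The plan is to prove this identity by a short case analysis, exploiting the fact that $x_i, x_j, x_k \in \{-1,1\}$ forces $x_i^2 = x_j^2 = x_k^2 = 1$. After this simplification the right-hand side becomes $3 - (x_ix_j + x_jx_k + x_kx_i)$, so it suffices to verify that this quantity equals $0$ when the three values agree and equals $4$ otherwise.

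First I would note that by the symmetry of the expression under permutations of $(x_i, x_j, x_k)$ and under the simultaneous sign flip $x \mapsto -x$, there are only two essentially distinct configurations to check. In the first configuration all three variables are equal (either $(1,1,1)$ or $(-1,-1,-1)$), in which case each pairwise product equals $1$, so $x_ix_j + x_jx_k + x_kx_i = 3$ and the right-hand side is $0$, matching the indicator value $0$. In the second configuration, exactly one variable differs from the other two; without loss of generality this is $x_i = x_j = 1$, $x_k = -1$, giving pairwise products $1, -1, -1$ that sum to $-1$, so the right-hand side equals $3 - (-1) = 4$, matching $4 \cdot 1$.

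Since the two cases exhaust all $2^3 = 8$ assignments of signs (four all-equal and four not-all-equal, paired by the symmetries above), the identity holds in all cases. There is no real obstacle here: the only subtlety is recognizing that because the values are $\pm 1$ the quadratic form collapses to a linear function of the three pairwise products, after which the problem reduces to inspecting two sign patterns.
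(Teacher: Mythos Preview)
Your argument is correct and is exactly the kind of direct verification one would expect; the paper in fact states this lemma without proof, treating it as an elementary identity to be checked by inspection, so there is nothing to compare against.

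One small slip in your final sentence: among the $2^3 = 8$ sign assignments there are only two all-equal configurations ($(1,1,1)$ and $(-1,-1,-1)$) and six not-all-equal ones, not four and four. This does not affect the argument, since your symmetry reduction (permutations and the global sign flip) still collapses everything to the two representative cases you checked.
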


It will be easier to derive our results with binary indicator variables taking
values in $\{-1, 1\}$.  However, in terms of the quadratic form on the
Laplacian, we have already seen how indicator vectors taking values in $\{0,
1\}$ relate to the cut value (Equation~\ref{eqn:quad_cut}).  The following lemma
shows that the $\{0, 1\}$ and $\{-1, 1\}$ indicator vectors are equivalent, up
to a constant, for defining the cut measure in terms of the Laplacian.

\begin{lemma}\label{lem:ind_pm}
Let $z \in \{0, 1\}^{n}$ and define $x$ by $x_{i} = 1$ if $z_i = 1$ and $x_{i} =
-1$ if $z_i = 0$.  Then for any graph Laplacian $L = D - W$, $4z^TLz = x^TLx$.
\end{lemma}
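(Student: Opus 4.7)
The plan is to exploit the standard quadratic-form identity for any graph Laplacian $L = D - W$ with symmetric $W$, namely
\[
y^T L y = \sum_{(i,j) \in E} w_{ij} (y_i - y_j)^2,
\]
which was recalled just above Equation~\eqref{eqn:quad_cut}. The value of $y^T L y$ therefore depends only on the pairwise differences $y_i - y_j$. This is the only fact I will need.

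Next, I would observe the affine relationship between the two indicator conventions: $x = 2z - \vect{1}$, where $\vect{1}$ is the all-ones vector. Indeed, if $z_i = 1$ then $x_i = 1 = 2(1) - 1$, and if $z_i = 0$ then $x_i = -1 = 2(0) - 1$. Consequently $x_i - x_j = 2(z_i - z_j)$ for every pair of nodes $i, j$, and so $(x_i - x_j)^2 = 4 (z_i - z_j)^2$.

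Plugging this pairwise identity into the quadratic-form expression above yields
\[
x^T L x = \sum_{(i,j)\in E} w_{ij}(x_i - x_j)^2 = 4 \sum_{(i,j)\in E} w_{ij}(z_i - z_j)^2 = 4\, z^T L z,
\]
which is the claimed equality. Alternatively, one could expand $(2z - \vect{1})^T L (2z - \vect{1})$ directly and use $L\vect{1} = 0$ (which holds because $L = D - W$ and $D$ is the row-sum diagonal of $W$) to kill the cross terms and the $\vect{1}^T L \vect{1}$ term; this is the same argument phrased in matrix language. There is no real obstacle here, since the result is a one-line consequence of Laplacian invariance under constant shifts combined with homogeneity of degree two.
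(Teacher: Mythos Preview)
Your proof is correct and follows essentially the same approach as the paper: both use the Laplacian quadratic form $y^TLy = \sum_{(i,j)\in E} w_{ij}(y_i-y_j)^2$ and the observation that $(x_i-x_j)^2 = 4(z_i-z_j)^2$. Your added remark that $x = 2z - \vect{1}$ and the alternative argument via $L\vect{1}=0$ are nice touches but not needed beyond what the paper does.
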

\begin{proof}
\[
x^TLx
= \sum_{(i, j) \in E}W_{ij}(x_{i} - x_{j})^2
= \sum_{(i, j) \in E}W_{ij}4(z_{i} - z_{j})^2
= 4z^TLz.
\]
\end{proof}

The next lemma contains the essential result that relates motif cuts in the
original graph $G$ to weighted edge cuts in $G_M$.  In particular, the lemma
shows that the motif cut measure is proportional to the cut on the weighted
graph defined in Equation~\ref{eqn:informal_weighting} when there are three
anchor nodes.
\begin{lemma}\label{lem:motif_cut}
Let $G = (V, E)$ be a directed, unweighted graph and let $G_M$ be the weighted
graph for a motif with $\vert \anchorset \vert = 3$.  Then for any $S \subset
V$,
\[
\newmotifcut{G}{M}{S, \bar{S}} = \frac{1}{2}\newcut{G_M}{S, \bar{S}}
\]
\end{lemma}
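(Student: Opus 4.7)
The plan is to express both sides of the claimed identity as quadratic forms on the $\{-1,+1\}$ indicator vector of $S$, then match them using Lemma~\ref{lem:ind3} together with the combinatorial interpretation of $W_M$ from Equation~\ref{eqn:formal_weighting}. First, I would let $x \in \{-1,1\}^n$ be defined by $x_i = 1$ if $i \in S$ and $x_i = -1$ if $i \in \bar{S}$. For a fixed motif instance $\minstance$ with $\chi_{\anchorset}(\vect{v}) = (u_1, u_2, u_3)$, the motif is cut iff $x_{u_1}, x_{u_2}, x_{u_3}$ are not all equal, so by Lemma~\ref{lem:ind3} the cut-indicator equals $\tfrac14(x_{u_1}^2 + x_{u_2}^2 + x_{u_3}^2 - x_{u_1}x_{u_2} - x_{u_2}x_{u_3} - x_{u_3}x_{u_1})$.

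Next I would sum this identity over all instances in $M$. The diagonal contribution $\sum_{\mu}(x_{u_1}^2 + x_{u_2}^2 + x_{u_3}^2)$ equals $3\lvert M \rvert$ since $x_i^2 = 1$; by the counting in Equation~\ref{eqn:formal_weighting}, each instance increments $W_M$ by $1$ on each of the three unordered anchor pairs, so the total weight $\sum_i (D_M)_{ii} = \sum_{i,j}(W_M)_{ij}$ equals $6\lvert M \rvert$, giving $\sum_{\mu}(x_{u_1}^2 + x_{u_2}^2 + x_{u_3}^2) = \tfrac12 \sum_i (D_M)_{ii} x_i^2$. The off-diagonal contribution $\sum_{\mu}(x_{u_1}x_{u_2} + x_{u_2}x_{u_3} + x_{u_3}x_{u_1})$ likewise collects into $\sum_{i<j}(W_M)_{ij} x_i x_j = \tfrac12 \sum_{i,j}(W_M)_{ij} x_i x_j$, by the same pair-counting.

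Combining these two observations yields $4 \cdot \newmotifcut{G}{M}{S,\bar{S}} = \tfrac12\bigl(\sum_i (D_M)_{ii}x_i^2 - \sum_{i,j}(W_M)_{ij}x_ix_j\bigr) = \tfrac12\, x^T L_M x$. Applying Lemma~\ref{lem:ind_pm} to $L_M$ converts $x^T L_M x$ into $4\, \newcut{G_M}{S,\bar{S}}$, and the resulting equation simplifies to the claimed $\newmotifcut{G}{M}{S,\bar{S}} = \tfrac12 \newcut{G_M}{S,\bar{S}}$. After returning an arbitrary $x \in \{-1,1\}^n$ to the indicator form, the identity holds for every $S \subset V$.

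The only real obstacle is bookkeeping the factors of two: one must verify carefully that a single motif instance on three anchor nodes contributes exactly one unit to each of three $(i,j)$ entries (and three $(j,i)$ entries by symmetry) of $W_M$, so that the per-instance quadratic $\sum_{u<v} x_u x_v$ on the anchor triple reassembles into $\tfrac12 \sum_{i,j}(W_M)_{ij} x_i x_j$ without over- or under-counting. Once that accounting is pinned down, everything else is just substituting Lemma~\ref{lem:ind3} and Lemma~\ref{lem:ind_pm}.
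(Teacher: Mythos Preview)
Your proposal is correct and follows essentially the same route as the paper: express the motif cut as a sum of indicators, apply Lemma~\ref{lem:ind3} to rewrite each as a quadratic, aggregate into $\tfrac{1}{2}x^T L_M x$, and then invoke Lemma~\ref{lem:ind_pm} together with Equation~\ref{eqn:quad_cut} to obtain $2\cdot\newcut{G_M}{S,\bar{S}}$. The only cosmetic difference is that the paper handles the diagonal term by citing Lemma~\ref{lem:motif_vol} (each anchor contributes $\lvert\anchorset\rvert-1=2$ to its degree), whereas you reach the same $\tfrac{1}{2}\sum_i (D_M)_{ii}x_i^2$ via the global count $\sum_i(D_M)_{ii}=6\lvert M\rvert$ and $x_i^2=1$.
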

\begin{proof}
Let $x \in \{-1, 1\}^{n}$ be an indicator vector of the node set $S$.
\begin{eqnarray*}
4 \cdot \newmotifcut{G}{M}{S, \bar{S}}
&=& \sum_{(\vect{v}, \{i, j, k\}) \in M} 4\cdot\indicator{x_i, x_j, x_k \text{ not all the same}} \\
&=& \sum_{(\vect{v}, \{i, j, k\}) \in M} \left(x_i^2 + x_j^2 + x_k^2\right) - \left(x_ix_j + x_jx_k + x_kx_i\right) \\
&=& \frac{1}{2}x^TD_Mx  - \frac{1}{2}x^TW_Mx \\
&=& \frac{1}{2}x^TL_Mx \\
&=& 2\cdot\newcut{G_M}{S, \bar{S}}.
\end{eqnarray*}
The first equality follows from the definition of cut motifs
(Equation~\ref{eqn:motif_cut}).  The second equality follows from
Lemma~\ref{lem:ind3}.  The third equality follows from Lemma~\ref{lem:motif_vol}
and Equation~\ref{eqn:formal_weighting}.  The fourth equality follows from the
definition of $L_M$.  The fifth equality follows from Lemma~\ref{lem:ind_pm}.
\end{proof}

We are now ready to prove our main result, namely that motif conductance on the
original graph $G$ is equivalent to conductance on the weighted graph $G_M$ when
there are three anchor nodes.  The result is a consequence of the volume and cut
relationships provided by Lemmas~\ref{lem:motif_vol}~and~\ref{lem:motif_cut}.

\begin{theorem}\label{thm:motif_cond}
Let $G = (V, E)$ be a directed, unweighted graph and let $W_M$ be the weighted
adjacency matrix for any motif with $\vert \anchorset \vert = 3$.  Then for any
$S \subset V$,
\[
\newmotifcond{G}{M}{S} = \newcond{G_M}{S}
\]
In other words, when the number of anchor nodes is $3$, the motif conductance is
equal to the conductance on the weighted graph defined by
Equation~\ref{eqn:informal_weighting}.
\end{theorem}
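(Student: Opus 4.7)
The plan is to derive the theorem as a direct corollary of Lemmas~\ref{lem:motif_vol} and \ref{lem:motif_cut}, since these two lemmas already do all the substantive work by relating motif volume and motif cut to their standard counterparts on the weighted graph $G_M$.

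First I would write out the definition of $\newmotifcond{G}{M}{S}$ as a ratio of motif cut to minimum motif volume. Then I would substitute in Lemma~\ref{lem:motif_cut} for the numerator, obtaining a factor of $1/2$ in front of $\newcut{G_M}{S, \bar{S}}$. For the denominator, I would apply Lemma~\ref{lem:motif_vol} with $\vert\anchorset\vert = 3$, which yields $\newmotifvol{G}{M}{T} = \frac{1}{2}\newvol{G_M}{T}$ for both $T = S$ and $T = \bar{S}$; since the minimum commutes with the positive scalar $1/2$, the denominator becomes $\frac{1}{2}\min(\newvol{G_M}{S}, \newvol{G_M}{\bar{S}})$.

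Finally, the two factors of $1/2$ cancel, leaving exactly $\newcut{G_M}{S, \bar{S}} / \min(\newvol{G_M}{S}, \newvol{G_M}{\bar{S}}) = \newcond{G_M}{S}$, which is the desired equality. There is no real obstacle here, as the proof is essentially an algebraic simplification once one has the two lemmas; the conceptual content sits entirely in Lemma~\ref{lem:motif_cut} (which relies on the three-variable quadratic identity in Lemma~\ref{lem:ind3}) and in the degree counting argument of Lemma~\ref{lem:motif_vol}. The only minor thing worth remarking on in the writeup is that the proportionality constants in the cut and volume lemmas happen to match, so they cancel cleanly; this is why the three-anchor-node case is especially clean, and why the analogous statement fails for motifs with four or more anchor nodes (as foreshadowed by Figure~\ref{fig:motif_cuts}B, where the quadratic form no longer matches the cut indicator up to a single global constant).
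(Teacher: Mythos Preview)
Your proposal is correct and matches the paper's own proof essentially verbatim: the paper also derives the theorem by invoking Lemmas~\ref{lem:motif_vol} and~\ref{lem:motif_cut} to get matching factors of $1/2$ in numerator and denominator, which then cancel. There is nothing to add.
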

\begin{proof}
When $\vert \anchorset \vert = 3$, the motif cut and motif volume are both equal
to half the motif cut and motif volume measures by
Lemmas~\ref{lem:motif_vol}~and~\ref{lem:motif_cut}.
\end{proof}

For any motif with three anchor nodes, conductance on the weighted graph is
equal to the motif conductance.  Because of this, we can use results from
spectral graph theory for weighted graphs~\cite{chung2007four} and re-interpret
the results in terms of motif conductance.  In particular, we get the following
``motif Cheeger inequality".

\begin{theorem}\label{thm:motif_cheeger}
\textnormal{\textbf{Motif Cheeger Inequality}}.
Suppose we use Algorithm~\ref{alg:motif_fiedler} to find a low-motif conductance set $S$.
Let $\phi_* = \min_{S'} \newmotifcond{G}{M}{S'}$ be the optimal motif conductance over
any set of nodes $S'$.  Then
\begin{enumerate}
\item $\newmotifcond{G}{M}{S} \le 4\sqrt{\phi^*}$ and
\item $\phi^* \ge \lambda_2 / 2$
\end{enumerate}
\end{theorem}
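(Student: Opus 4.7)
The whole proof will hinge on Theorem~\ref{thm:motif_cond}, which already establishes that for any motif with three anchor nodes, the motif conductance $\newmotifcond{G}{M}{S}$ coincides exactly with the ordinary weighted conductance $\newcond{G_M}{S}$ on the derived graph $G_M$. This reduces everything to the classical Cheeger machinery on a weighted, undirected graph, which is the plan I would follow.

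First I would observe that Algorithm~\ref{alg:motif_fiedler} is \emph{verbatim} the standard Fiedler sweep-cut procedure applied to $G_M$: it computes the eigenvector $z$ of the second-smallest eigenvalue of $\normmotiflap = I - D_M^{-1/2} W_M D_M^{-1/2}$, orders the nodes by the components of $D_M^{-1/2} z$, and returns the prefix with smallest conductance (where, by Theorem~\ref{thm:motif_cond}, "conductance" and "motif conductance" may be used interchangeably for the sweep). Because $W_M$ is nonnegative and symmetric, $G_M$ is a bona fide weighted undirected graph, so the classical spectral theory of Chung~\cite{chung2007four} applies directly.

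Next I would invoke the standard Cheeger inequality for weighted undirected graphs applied to $G_M$. In one direction, the sweep-cut analysis yields a set $S$ satisfying $\newcond{G_M}{S} \le C \sqrt{\phi^*_{G_M}}$ for an absolute constant $C$; tracking the constants through the usual argument (bounding the Rayleigh quotient of the indicator vector, relating it to $\lambda_2$, and then invoking $\phi^*_{G_M} \le \sqrt{2\lambda_2}$) gives a safe factor of $4$. In the other direction, the easy half of Cheeger's inequality gives $\phi^*_{G_M} \ge \lambda_2 / 2$ (by a Rayleigh-quotient argument on the indicator of the optimal set). Both are results for the weighted graph $G_M$.

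Finally I would translate back: by Theorem~\ref{thm:motif_cond}, $\newcond{G_M}{S} = \newmotifcond{G}{M}{S}$ for the output set, and $\phi^*_{G_M} = \phi^*$ since the two conductance functionals agree setwise. Substituting yields $\newmotifcond{G}{M}{S} \le 4\sqrt{\phi^*}$ and $\phi^* \ge \lambda_2/2$, which is exactly the claim. I expect no genuine obstacle here, since Theorem~\ref{thm:motif_cond} already does all the motif-specific work; the only mild care is in citing the right form of the classical inequality so that the Laplacian spectrum and the sweep-cut guarantee both refer to $\normmotiflap$ rather than the unnormalized $L_M$, and in making sure the constant $4$ is safely absorbed (the standard sweep bound is $2\sqrt{\phi^*}$, so $4\sqrt{\phi^*}$ is comfortably loose).
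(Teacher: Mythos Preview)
Your proposal is correct and follows essentially the same approach as the paper: reduce to the classical Cheeger inequality on the weighted graph $G_M$ via Theorem~\ref{thm:motif_cond}, then cite the standard result~\cite{chung2007four}. The paper's own proof is in fact a one-line invocation of exactly these two ingredients, so your write-up is a faithful (and more detailed) expansion of it.
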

\begin{proof}
The result follows from Theorem~\ref{thm:motif_cond} and the standard Cheeger
ineqaulity~\cite{chung2007four}.
\end{proof}

The first part of the result says that the set of nodes $S$ is within a
quadratic factor of optimal.  This provides the mathematical guarantees that our
procedure finds a good cluster in a graph, if one exists.  The second result
provides a lower bound on the optimal motif conductance in terms of the
eigenvalue.  We use this bound in our analysis of a food web (see
Section~\ref{sec:foodweb}) to show that certain motifs do not provide good
clusters, regardless of the procedure to select $S$.

\subsection{Discussion of motif Cheeger inequality for network motifs with four or more nodes}
\label{sec:fournode}

Analogs of the indicator function in Lemma~\ref{lem:ind3} for four or more
variables are not quadratic.  Subsequently, for motifs with $\vert \anchorset
\vert > 3$, we no longer get the motif Cheeger inequalities guaranteed by
Theorem~\ref{thm:motif_cheeger}.  That being said, solutions found by
motif-based partitioning approximate a related value of conductance.  We now
provide the details.

We begin with a lemma that shows a functional form for four binary variables
taking values in $\{-1, 1\}$ to not all be equal.  We see that it is quartic,
not quadratic.
\begin{lemma}\label{lem:ind4_1}
Let $x_i, x_j, x_k, x_l \in \{-1, 1\}$.
Then the indicator function on all four elements not being equal is
\begin{eqnarray}
&& 8 \cdot \indicator{x_i, x_j, x_k, x_l \text{ not all the same}} \\
&=& \left(7 - x_ix_j - x_ix_k - x_ix_l - x_jx_k - x_jx_l - x_kx_l - x_ix_jx_kx_l\right) \nonumber.
\end{eqnarray}
\end{lemma}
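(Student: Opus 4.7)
The plan is to avoid a brute 16-case enumeration by working with the complementary event ``all four variables are equal'' and exploiting the fact that, for $x,y \in \{-1,1\}$, the equality indicator has the quadratic form $\indicator{x = y} = (1+xy)/2$. The final formula then follows by subtracting from $1$.

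First, I would observe that $x_i = x_j = x_k = x_l$ is equivalent to the three simultaneous pairwise equalities $x_i = x_j$, $x_i = x_k$, and $x_i = x_l$, since equality is transitive. Using the observation above, this gives the product formula
\[
\indicator{x_i, x_j, x_k, x_l \text{ all the same}} = \frac{1+x_ix_j}{2}\cdot\frac{1+x_ix_k}{2}\cdot\frac{1+x_ix_l}{2}.
\]

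Next, I would expand this product and simplify using $x_i^2 = 1$, which collapses every term in which $x_i$ appears an even number of times back to the other three variables: $(x_ix_j)(x_ix_k) = x_jx_k$, and analogously for the other two cross terms, while $(x_ix_j)(x_ix_k)(x_ix_l) = x_ix_jx_kx_l$. This yields
\[
8\,\indicator{x_i, x_j, x_k, x_l \text{ all the same}} = 1 + x_ix_j + x_ix_k + x_ix_l + x_jx_k + x_jx_l + x_kx_l + x_ix_jx_kx_l.
\]
Subtracting from $8$ using $\indicator{\text{not all same}} = 1 - \indicator{\text{all same}}$ produces the claimed identity.

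There is no real obstacle here; the only thing to be careful about is the symmetry breaking introduced by singling out $x_i$ as the reference variable. The symmetry of the original event is restored automatically once the $x_i^2 = 1$ collapse is applied, and one may optionally verify the resulting identity on the three representative patterns (all equal, a 3--1 split, and a 2--2 split) as a sanity check, covering all $16$ cases by symmetry.
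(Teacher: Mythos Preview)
Your derivation is correct. The paper itself states this lemma without proof, presumably leaving it to direct case verification (just as it does for the analogous three-variable Lemma~\ref{lem:ind3}). Your approach is therefore not merely a match but an actual proof where the paper supplies none: by writing the ``all same'' indicator as the product $\prod_{m\in\{j,k,l\}}\tfrac{1+x_ix_m}{2}$ and collapsing via $x_i^2=1$, you recover the symmetric quartic form cleanly, and the complement gives the stated identity. The only minor remark is that your optional sanity check on the three orbit representatives (4--0, 3--1, 2--2) is indeed sufficient, since the right-hand side is manifestly symmetric in all four variables and invariant under the global sign flip $x\mapsto -x$.
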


We \emph{almost} have a quadratic form, if not for the quartic term
$x_ix_jx_kx_l$.  However, we could use the following related quadratic form:
\begin{align}
& 6 - x_ix_j - x_ix_k - x_ix_{l} - x_jx_k - x_jx_{l} - x_kx_{l} \nonumber \\
&= \left\{
     \begin{array}{ll}
       0 &  x_i, x_j, x_k, x_l \text{ are all the same} \\
       6 &  \text{exactly three of } x_i, x_j, x_k, x_l \text{ are the same} \\
       8 &  \text{exactly two of } x_i, x_j, x_k, x_l \text{ are $-1$}.
     \end{array}
   \right.\label{eqn:motif4_quadratic}
\end{align}

The quadratic still takes value $0$ if all four entries are the same, and takes
a non-zero value otherwise.  However, the quadratic takes a larger value if
exactly two of the entries are $-1$.  Figure~\ref{fig:motif_cuts} illustrates
this idea.  From this, we can provide an analogous statement to
Lemma~\ref{lem:motif_cut} for motifs with $\vert \anchorset \vert = 4$.

\begin{lemma}\label{lem:motif_cut4}
Let $G = (V, E)$ be a directed, unweighted graph and let $G_M$ be the weighted
graph for a motif with $\vert \anchorset \vert = 4$.  Then for any $S \subset
V$,
\[
\newmotifcut{G}{M}{S, \bar{S}} =
\frac{1}{3}\newcut{G_M}{S, \bar{S}} - 
\sum_{(\vect{v}, \{i, j, k, l\}) \in M}
\frac{1}{3} \cdot \indicator{\text{exactly two of } i, j, k, l \text{ in } S}
\]
\end{lemma}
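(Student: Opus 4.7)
The plan is to mirror the proof strategy of Lemma~\ref{lem:motif_cut}, but to compensate for the fact that, with four anchor nodes, the ``not all equal'' indicator is no longer a pure quadratic in $\{-1,1\}$ variables. Let $x \in \{-1,1\}^n$ be the $\pm 1$ indicator vector of $S$. I will evaluate the quadratic form $x^\top L_M x$ in two ways: (i) as a weighted Laplacian sum, which by Lemma~\ref{lem:ind_pm} equals $4\,\newcut{G_M}{S,\bar{S}}$, and (ii) as a sum over motif instances using the identity in Equation~\ref{eqn:motif4_quadratic}.

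For (ii), the key observation is that for $x_i,x_j,x_k,x_l \in \{-1,1\}$, the quadratic form $6 - \sum_{p<q} x_p x_q = \tfrac{1}{2}\sum_{p<q}(x_p - x_q)^2$ takes the value $0$ on an uncut motif, $6$ on a $3/1$ split, and $8$ on a $2/2$ split. Summing $\tfrac{1}{2}\sum_{p<q}(x_p-x_q)^2$ over all motif instances in $M$ and expanding via $(W_M)_{pq} = \sum_{(\vect{v},\chi) \in M} \indicator{\{p,q\} \subset \chi}$ exactly reproduces $\tfrac{1}{2} x^\top L_M x$, which by step (i) equals $2\,\newcut{G_M}{S,\bar{S}}$. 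On the other hand, decomposing the same sum per motif instance yields $6 \cdot \bigl(\text{\# 3/1 cuts}\bigr) + 8 \cdot \bigl(\text{\# 2/2 cuts}\bigr)$, and since $\newmotifcut{G}{M}{S,\bar S}$ counts both kinds of cuts, this equals $6\,\newmotifcut{G}{M}{S,\bar{S}} + 2\,\bigl(\text{\# 2/2 cuts}\bigr)$.

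Equating the two expressions and dividing by $6$ gives
\[
\newmotifcut{G}{M}{S,\bar{S}} = \tfrac{1}{3}\,\newcut{G_M}{S,\bar{S}} - \tfrac{1}{3}\,\bigl(\text{\# of motifs instances with exactly two anchors in }S\bigr),
\]
and rewriting the correction term as $\sum_{(\vect{v},\{i,j,k,l\})\in M}\tfrac{1}{3}\,\indicator{\text{exactly two of }i,j,k,l\text{ in }S}$ yields the claim.

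The main conceptual obstacle is the lack of a purely quadratic indicator for four variables (Lemma~\ref{lem:ind4_1} shows it is genuinely quartic), which is why a direct analogue of Lemma~\ref{lem:motif_cut} cannot hold. The way around this is to use the strictly quadratic surrogate from Equation~\ref{eqn:motif4_quadratic}: it agrees with the true cut indicator on the ``all same'' and $3/1$ configurations but over-counts $2/2$ splits by a factor of $\tfrac{4}{3}$ (relative to the $6$-value), and the subtracted $2/2$ term is exactly the discrepancy. The computational steps are short once this correspondence is set up; no appeal to Lemma~\ref{lem:motif_vol} is needed since only cuts are involved.
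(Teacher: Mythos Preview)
Your proof is correct and follows essentially the same route as the paper: both use the $\{-1,1\}$ indicator vector $x$, invoke the values $0/6/8$ of the quadratic surrogate from Equation~\ref{eqn:motif4_quadratic}, and equate the motif-by-motif sum to $\tfrac{1}{2}x^\top L_M x = 2\,\newcut{G_M}{S,\bar S}$ via Lemma~\ref{lem:ind_pm}. The one minor difference is that the paper splits the quadratic into a diagonal part $\tfrac{1}{2}x^\top D_M x$ and an off-diagonal part $-\tfrac{1}{2}x^\top W_M x$, then uses Lemma~\ref{lem:motif_vol} to identify the diagonal contribution; you instead write the per-motif quadratic as $\tfrac{1}{2}\sum_{p<q}(x_p-x_q)^2$ and exchange sums directly with the definition of $W_M$, which is slightly cleaner and, as you note, sidesteps Lemma~\ref{lem:motif_vol} entirely.
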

\begin{proof}
Let $x \in \{-1, 1\}^n$ be an indicator vector of the node set $S$.
\begin{eqnarray*}
&& 6 \cdot \newmotifcut{G}{M}{S, \bar{S}} +
\sum_{(\vect{v}, \{i, j, k, l\}) \in M}
2 \cdot \indicator{\text{exactly two of } i, j, k, l \text{ in } S} \\
&=& \sum_{(\vect{v}, \{i, j, k, l\}) \in M} 6 - x_ix_j - x_ix_k - x_ix_l - x_jx_k - x_jx_l - x_kx_l \\
&=& \sum_{(\vect{v}, \{i, j, k, l\}) \in M} \frac{3}{2}
\left(x_i^2 + x_j^2 + x_k^2 + x_l^2\right) -
\left(x_ix_j + x_ix_k + x_ix_l + x_jx_k + x_jx_l + x_kx_l\right) \\
&=& \frac{1}{2}x^TD_Mx  - \frac{1}{2}x^TW_Mx \\
&=& \frac{1}{2}x^TL_Mx \\
&=& 2\cdot\newcut{G_M}{S, \bar{S}}.
\end{eqnarray*}
The first equality follows from
Equations~\ref{eqn:motif_cut}~and~\ref{eqn:motif4_quadratic}.  The third
equality follows from Lemma~\ref{lem:motif_vol}.  The fourth equality follows
from the definition of $L_M$.  The fifth equality follows from
Lemma~\ref{lem:ind_pm}.
\end{proof}

With four anchor nodes, the motif cut in $G$ is slightly different than the
weighted cut in the weighted graph $G_M$. However, Lemma~\ref{lem:motif_vol}
says that the motif volume in $G$ is still the same as the weighted volume in
$G_M$.  We use this to derive the following result.

\begin{theorem}\label{thm:motif_cond4}
Let $G = (V, E)$ be a directed, unweighted graph and let $W_M$ be the weighted
adjacency matrix for any motif with $\vert \anchorset \vert = 4$.  Then for any
$S \subset V$,
\[
\newmotifcond{G}{M}{S} =
\newcond{G_M}{S} -
\frac{
\sum_{(\vect{v}, \{i, j, k, l\}) \in M} \indicator{\text{exactly two of } i, j, k, l \text{ in } S}
}{
\newvol{G_M}{S}
}
\]
In other words, when there are four anchor nodes, the weighting scheme in
Equation~\ref{eqn:informal_weighting} models the exact conductance with an
additional penalty for splitting the four anchor nodes into two groups of two.
\end{theorem}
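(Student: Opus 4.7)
The plan is a direct algebraic substitution using the two preceding lemmas, in close analogy with the derivation of Theorem~\ref{thm:motif_cond} in the three-anchor case. Concretely, Lemma~\ref{lem:motif_cut4} rewrites the numerator $\newmotifcut{G}{M}{S,\bar S}$ as $\tfrac{1}{3}\newcut{G_M}{S,\bar S} - \tfrac{1}{3}\sum_{(\vect{v},\{i,j,k,l\})\in M}\indicator{\text{exactly two of } i, j, k, l \text{ in } S}$, while Lemma~\ref{lem:motif_vol} with $\vert\anchorset\vert=4$ rewrites $\newmotifvol{G}{M}{S}$ as $\tfrac{1}{3}\newvol{G_M}{S}$. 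After taking the quotient, the factor $\tfrac{1}{3}$ cancels between numerator and denominator, and splitting the fraction produces $\newcond{G_M}{S}$ plus the stated correction.

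Before substituting, I would first reconcile the two $\min$'s implicit in the two conductance definitions. Because Lemma~\ref{lem:motif_vol} makes $\newmotifvol{G}{M}{\cdot}$ a positive scalar multiple of $\newvol{G_M}{\cdot}$, the side of the bipartition achieving the smaller motif volume is exactly the side achieving the smaller $G_M$-volume. Together with the fact that both $\newmotifcond{G}{M}{\cdot}$ and the indicator sum $\sum_{(\vect{v},\{i,j,k,l\})\in M}\indicator{\text{exactly two of } i, j, k, l \text{ in } S}$ are symmetric under swapping $S$ and $\bar S$, this lets me assume without loss of generality that $\newvol{G_M}{S}\le\newvol{G_M}{\bar S}$, so both conductance denominators collapse to $\newvol{G_M}{S}$ and $\newmotifvol{G}{M}{S}$ respectively.

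Once that reduction is in place, the remainder is a one-line calculation, so there is no genuine obstacle; the only mild subtlety is the consistency-of-minimum-side argument above. Unlike the three-anchor case, I cannot conclude a motif Cheeger-type inequality, since the correction term has no sign-definite relationship with $\newcond{G_M}{S}$, but the identity itself goes through unchanged.
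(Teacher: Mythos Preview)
Your proposal is correct and follows essentially the same route as the paper, which simply states that the result follows from Lemmas~\ref{lem:motif_vol} and~\ref{lem:motif_cut4}. Your added observation about the consistency of the $\min$ on the two sides (via the positive scalar relationship from Lemma~\ref{lem:motif_vol} and the $S\leftrightarrow\bar S$ symmetry of the indicator sum) is a genuine refinement: the paper's statement writes $\newvol{G_M}{S}$ rather than $\min(\newvol{G_M}{S},\newvol{G_M}{\bar S})$ in the correction denominator, and your WLOG argument is exactly what is needed to make that literal.
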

\begin{proof}
This follows from Lemmas~\ref{lem:motif_vol}~and~\ref{lem:motif_cut4}.
\end{proof}

To summarize, we still get a Cheeger inequality from the weighted graph, but it
is in terms of a penalized version of the motif conductance $\newmotifcond{G}{M}{S}$.
However, the penalty makes sense---if the group of four nodes is ``more split"
(2 and 2 as opposed to 3 and 1), the penalty is larger.  When $\vert \anchorset
\vert > 4$, we can derive similar penalized approximations to $\newmotifcond{G}{M}{S}$.

\subsection{Methods for simultaneously finding multiple clusters}

For clustering a network into $k > 2$ clusters based on motifs, we could
recursively cut the graph using the sweep procedure with some stopping
criterion~\cite{boley1998principal}.  For example, we could continue to cut the
largest remaining cluster until the graph is partitioned into some pre-specified
number of clusters.  We refer to this method as recursive bi-partitioning.

In addition, we can use the following method of Ng~\emph{et al.}~\cite{ng2002spectral}.

\RestyleAlgo{boxruled}
\begin{algorithm}[H]\label{alg:motif_ngetal}
  \DontPrintSemicolon
  \KwIn{Directed, unweighted graph $G$, motif $M$, number of clusters $k$}
  \KwOut{$k$ disjoint motif-based clusters}
   $(W_M)_{ij} \leftarrow \text{number of instances of $M$ that contain nodes $i$ and $j$}.$\;
   $D_M \leftarrow $ diagonal matrix with $(D_M)_{ii} = \sum_{j} (W_M)_{ij}$\;
   $z_1, \ldots, z_k \leftarrow $ eigenvectors of $k$ smallest eigenvalues for $\normmotiflap = I - D_M^{-1/2}W_MD_M^{-1/2}$\;
   $Y_{ij} \leftarrow z_{ij} / \sqrt{\sum_{j=1}^{k} z_{ij}^2}$\;
   Embed node $i$ into $\mathbb{R}^k$ by taking the $i$th row of the matrix $Y$\;
   Run $k$-means clustering on the embedded nodes\;
  \caption{Motif-based clustering algorithm for finding several clusters.}
\end{algorithm}

This method does not have the same Cheeger-like guarantee on quality.  However,
recent theory shows that by replacing $k$-means with a different clustering
algorithm, there is a performance guarantee~\cite{lee2014multiway}.  While this
provides motivation, we use $k$-means for its simplicity and empirical success.

\subsection{Extensions of the method for simultaneously analyzing several network motifs}
\label{sec:multiple}

All of our results carry through when considering several motifs simultaneously.
In particular, suppose we are interested in clustering based on motif sets $M_1,
\ldots, M_q$ for $q$ different motifs.  Further suppose that we want to weight
the impact of some motifs more than other motifs.  Let $W_{M_j}$ be the weighted
adjacency matrix for motif $M_j$, $j = 1, \ldots, q$, and let $\alpha_j \ge 0$
be the weight of motif $M_j$, then we can form the weighted adjacency matrix
\begin{equation}
W_M = \sum_{j=1}^{q} \alpha_j W_{M_j}.
\end{equation}

Now, the cut and volume measures are simply weighted sums by linearity.
Suppose that the $M_j$ all have three anchor nodes and let $G_M$ be the
weighted graph corresponding to $W_M$.  Then
\[
\newcut{G_M}{S, \bar{S}} = \sum_{j=1}^{q}\alpha_j\newmotifcut{G}{M_j}{S, \bar{S}}, \quad
\newvol{G_M}{S} = \sum_{j=1}^{q}\alpha_j\newmotifvol{G}{M_j}{S},
\]
and Theorem~\ref{thm:motif_cheeger} applies to a weighted motif conductance
equal to
\[
\frac{
\sum_{j=1}^{q}\alpha_j\newmotifcut{G}{M_j}{S, \bar{S}}
}{
\min\left(
\sum_{j=1}^{q}\alpha_j\newmotifvol{G}{M_j}{S},
\sum_{j=1}^{q}\alpha_j\newmotifvol{G}{M_j}{\bar{S}}
\right)
}.
\]

\subsection{ Extensions of the method to signed, colored, and weighted motifs}

Our results easily generalize for signed networks.  We only have to generalize
Equation~\ref{eqn:motif_set} by allowing the adjacency matrix $B$ to be signed.
Extending the method for motifs where the edges or nodes are ``colored'' or ``labeled''
is similar.  If the edges are colored, then we again just allow the adjacency
matrix $B$ to capture this information.  If the nodes in the motif are colored,
we only count motif instances with the specified pattern.

We can also generalize the notions of motif cut and motif volume for ``weighted
motifs", \emph{i.e.}, each motif has an associated nonnegative weight.  Let
$\omega_{\minstance}$ be the weight of a motif instance.  Our cut and volume
metrics are then
\begin{eqnarray*}
\newmotifcut{G}{M}{S, \bar{S}} &=&
\sum_{\minstance \in M}
\omega_{\minstance}\indicator{\exists\; i, j  \in \anchornodes \;\mid\; i \in S, j \in \bar{S}},  \\
\newmotifvol{G}{M}{S} &=&
\sum_{\minstance \in M}
\omega_{\minstance}
\sum_{i \in \anchornodes} \indicator{i \in S}.
\end{eqnarray*}
Subsequently, we adjust the motif adjacency matrix as follows:
\begin{equation}
\motifweightedij = \sum_{\minstance \in M}
\omega_{\minstance} \indicator{\{i, j\} \subset \chi_{\anchorset}(\vect{v})}
\end{equation}


\subsection{ Connections to directed graph partitioning}

Our framework also provides a way to analyze methods for clustering directed
graphs.  Existing principled generalizations of undirected graph partitioning to
directed graph partitioning proceed from graph
circulations~\cite{chung2005laplacians} or random walks~\cite{boley2011commute}
and are difficult to interpret.  Our motif-based clustering framework provides a
simple, rigorous framework for directed graph partitioning.  For example,
consider the common heuristic of clustering the symmetrized graph $W = A + A^T$,
where $A$ is the (directed) adjacency matrix~\cite{malliaros2013clustering}.
Following Theorem~\ref{thm:motif_cond}, conductance-minimizing methods for
partitioning $W$ are actually trying to minimize a weighted sum of motif-based
conductances for the directed edge motif and the bi-directional edge motif:
\[
B_1 = \begin{bmatrix} 0 & 1 \\ 0 & 0 \end{bmatrix},\quad
B_2 = \begin{bmatrix} 0 & 1 \\ 1 & 0 \end{bmatrix},
\]
where both motifs are simple ($\anchorset = \{1, 2\}$).  If $W_1$ and $W_2$ are
the motif adjacency matrices for $B_1$ and $B_2$, then $A + A^T = W = W_1 +
2W_2$.  This weighting scheme gives a weight of two to bi-directional edges in
the original graph and a weight of one to uni-directional edges.

An alternative strategy for clustering a directed graph is to simply remove the
direction on all edges, treating bi-directional and uni-directional edges the
same.  The resulting adjacency matrix is equivalent to the motif adjacency
matrix for the bi-directional and uni-directional edges (without any relative
weighting).  Formally, $W = W_1 + W_2$.  We refer to this ``motif'' as $\medge$
(Figure~\ref{fig:motif_defs}), which will later provide a convenient notation
when discussing both motif-based clustering and edge-based clustering.

\subsection{ Connections to hypergraph partitioning}

Finally, we contextualize our method in the context of existing literature on
hypergraph partitioning.  The problem of partitioning a graph based on
relationships between more than two nodes has been studied in hypergraph
partitioning~\cite{karypis1999multilevel}, and we can interpret motifs as
hyperedges in a graph. In contrast to existing hypergraph partitioning problems,
we \emph{induce} the hyperedges from motifs rather than take the hyperedges as
given \emph{a priori}.  The goal with our analysis of the Florida Bay food web,
for example, was to find which hyperedge sets (induced by a motif) provide a
good clustering of the network (see Section~\ref{sec:foodweb}).

In general, our motif-based spectral clustering methodology falls into the area
of encoding a hypergraph partitioning problem by a graph partitioning
problem~\cite{agarwal2006higher,zhou2006learning}.  With simple motifs on $k$
nodes, the motif Laplacian $\mathcal{L}_M$ formed from $W_M$
(Equation~\ref{eqn:formal_weighting}) is a special case of the Rodr\'{i}guez
Laplacian~\cite{agarwal2006higher,rodriguez2002laplacian} for $k$-regular
hypergraphs.  The motif Cheeger inequality we proved
(Theorem~\ref{thm:motif_cheeger}) explains why this Laplacian is appropriate for
$3$-regular hypergraphs.  Specifically, it respects the standard cut and volume
metrics for graph partitioning.

\section{Computational complexity and scalability of the method}

We now analyze the computation of the higher-order clustering method.  We first
provide a theoretical analysis of the computational complexity, which depends on
motif.  After, we empirically analyze the time to find clusters for triangular
motifs on a variety of real-world networks, ranging in size from a few hundred
thousand edges to nearly two billion edges.  Finally, we show that we can
practically compute the motif adjacency matrix for motifs up to size 9 on a
number of real-world networks.

\subsection{Analysis of computational complexity}

We now analyze the computational complexity of the algorithm presented in
Theorem~\ref{thm:motif_cheeger}.  Overall, the complexity of the algorithm is
governed by the computations of the motif adjacency matrix $W_M$, an
eigenvector, and the sweep cut procedure.  For simplicity, we assume that we can
access edges in a graph in $O(1)$ time and access and modify matrix entries in
$O(1)$ time.  Let $m$ and $n$ denote the number of edges in the graph.
Theoretically, the eigenvector can be computed in $O((m + n)(\log n)^{O(1)})$
time using fast Laplacian solvers~\cite{trevisan2015notes}.  For the sweep cut,
it takes $O(n\log n)$ to sort the indices given the eigenvector using a standard
sorting algorithm such is merge sort.  Computing motif conductance for each set
$S_r$ in the sweep also takes linear term.  In pratice, the sweep cut step takes
a small fraction of the total running time of the algorithm.  For the remainder
of the analysis, we consider the more nuanced issue of the time to compute
$W_M$.

The computational time to form $W_M$ is bounded by the time to find all
instances of the motif in the graph.  Naively, for a motif on $k$ nodes, we can
compute $W_M$ in $\Theta(n^k)$ time by checking each $k$-tuple of nodes.
Furthermore, there are cases where there are $\Theta(n^k)$ motif instances in
the graph, \emph{e.g.}, there are $\Theta(n^3)$ triangles in a complete graph.
However, since most real-world networks are sparse, we instead focus on the
complexity of algorithms in terms of the number of edges and the maximum degree
in the graph.  For this case, there are several efficient practical algorithms
for real networks with available software%
~\cite{demeyer2013index,houbraken2014index,wernicke2006efficient,wernicke2006fanmod,aberger2015emptyheaded}.

Theoretically, motif counting is efficient.  Here we consider four classes of
motifs: (1) triangles, (2) wedges (connected, non-triangle three-node motifs),
(3) four-node motifs, and (4) $k$-cliques.  Let $m$ be the number of edges in a
graph.  Latapy analyzed a number of algorithms for listing all triangles in an
undirected network, including an algorithm that has computational complexity
$\Theta(m^{1.5})$~\cite{latapy2008main}.  For a directed graph $G$, we can use
the following algorithm: (1) form a new graph $G_{\textnormal{undir}}$ by
removing the direction from all edges in $G$ (2) find all triangles in
$G_{\textnormal{undir}}$, (3) for every triangle in $G_{\textnormal{undir}}$,
check which directed triangle motif it is in $G$.  Since step 1 is linear and we
can perform the check in step 3 in $O(1)$ time, the same $\Theta(m^{1.5})$
complexity holds for directed networks.  This analysis holds regardless of the
structure of the networks.  However, additional properties of the network can
lead to improved algorithms.  For example, in networks with a power law degree
sequence with exponent greater than $7/2$, Berry \emph{et al.} provide a
randomized algorithm with expected running time $\Theta(m)$~\cite{berry2014why}.
In the case of a bounded degree graph, enumerating over all nodes and checking
all pairs of neighbors takes time $\Theta(nd_{\max}^2)$, where $d_{\max}$ is the
maximum degree in the graph.  We note that with triangular motifs, the number of
non-zeros in $W_M$ is less than the number of non-zeros in the original
adjacency matrix.  Thus, we do not have to worry about additional storage
requirements.

Next, we consider wedges (open triangles).  We can list all wedges by looking at
every pair of neighbors of every node.  This algorithm has $\Theta(nd_{\max}^2)$
computational complexity, where $n$ is the number of nodes and $d_{\max}$ is
again the maximum degree in the graph (a more precise bound is $\Theta(\sum_{j}
d^2_j)$, where $d_j$ is the degree of node $j$.)  If the graph is sparse, the
motif adjacency matrix will have more non-zeros than the original adjacency
matrix, so additional storage is required.  Specifically, there is fill-in for
all two-hop neighbors, so the motif adjacency matrix has $O(\sum_{j}d^2_j)$
non-zeros.  This is impractical for large real-world networks but manageable for
modestly sized networks.

Marcus and Shavitt present an algorithm for listing all four-node motifs in an
undirected graph in $O(m^2)$ time~\cite{marcus2010efficient}.  We can employ the
same edge direction check as for triangles to extend this result to directed
graphs.  Chiba and Nishizeki develop an algorithm for finding a representation
of all quadrangles (motif on four nodes that contains a four-node cycle as a
subgraph) in $O(am)$ time and $O(m)$ space, where $a$ is the arboricity of the
graph~\cite{chiba1985arboricity}.  The arboricity of any connected graph is
bounded by $O(m^{1/2})$, so this algorithm runs in time $O(m^{3/2})$.

Chiba and Nishizeki present an algorithms for $k$-clique enumeration that also
depends on the arboricity of the graph.  Specifically, they provide an algorithm
for enumerating all $k$-cliques in $O(ka^{k-2}m)$ time, where $a$ is the
arboricity of the graph.  This algorithm achieves the $\Theta(m^{3/2})$ bound
for arbitrary graphs.  (We note that the triangle listing sub-case is similar in
spirit to the algorithm proposed by Schank and Wagner~\cite{schank2005finding}).
For four-node cliques, the algorithm runs in time $O(m^2)$ time, which matches
the complexity of Marcus and Shavitt~\cite{marcus2010efficient}.

We note that we could also employ approximation algorithms to estimate the
weights in the motif adjacency matrix~\cite{becchetti2008efficient}.  Such
methods balance computation time and accuracy.  Finally, we note that the
computation of $W_M$ and the computation of the eigenvector are suitable for
parallel computation.  There are already distributed algorithms for triangle
enumeration~\cite{cohen2009graph}, and the (parallel) eigenvector computation of
a sparse matrix is a classical problem in scientific
computing~\cite{parlett1980symmetric,maschhoff1996p_arpack}.

\subsection{Experimental results on triangular motifs}

In this section, we demonstrate that our method scales to real-world networks
with billions of edges.  We tested the scalability of our method on 16 large
directed graphs from a variety of real-world applications.  These networks range
from a couple hundred thousand to two billion edges and from 10 thousand to over
50 million nodes.  Table~\ref{tab:scalability_networks} lists short descriptions
of these networks.  The wiki-RfA, email-EuAll, cit-HepPh, web-NotreDame,
amazon0601, wiki-Talk, ego-Gplus, soc-Pokec, and soc-LiveJournal1 networks were
downloaded from the SNAP collection at
\url{http://snap.stanford.edu/data/}~\cite{snapnets}.  The uk-2014-tpd,
uk-2014-host, enwiki-2013, uk-2002, arabic-2005, twitter-2010, and sk-2005
networks were downloaded from the Laboratory for Web Algorithmics collection at
\url{http://law.di.unimi.it/datasets.php}%
~\cite{boldi2004ubicrawler,boldi2004webgraph,boldi2011layered,boldi2014bubing}.
Links to all datasets are available on our project website: \projecturl.

Recall that Algorithm~\ref{alg:motif_fiedler} consists of two major computational components:
\begin{enumerate}
\item Form the weighted graph $W_M$.
\item Compute the eigenvector $z$ of second smallest eigenvalue of the matrix
  $\mathcal{L}_M$.
\end{enumerate}
After computing the eigenvector, we sort the vertices and loop over prefix sets
to find the lowest motif conductance set.  We consider these final steps as part
of the eigenvector computation for our performance experiments.

For each network in Table~\ref{tab:scalability_networks}, we ran the method for
all directed triangular motifs ($\motiftype{3}{1}$--$\motiftype{3}{7}$).
To compute $W_M$, we used a standard algorithm that meets the $O(m^{3/2})$
bound~\cite{schank2005finding,latapy2008main} with some additional pre-processing
based on the motif.  Specifically, the algorithm is:

\begin{enumerate}
\item Take motif type $M$ and graph $G$ as input.
\item (Pre-processing.)  If $M$ is $M_1$ or $M_5$, remove all bi-directional
  edges in $G$ since these motifs only contain uni-directional edges.  If $M$ is
  $M_4$, remove all uni-directional edges in $G$ as this motif only contains
  bi-directional edges.
\item Form the undirected graph $G_{\textnormal{undir}}$ by removing the
  direction of all edges in $G$.
\item Let $d_u$ be the degree of node $u$ in $G_{\textnormal{undir}}$.  Order the
  nodes in $G_{\textnormal{undir}}$ by increasing degree, breaking ties
  arbitrarily.  Denote this ordering by $\psi$.
\item For every edge undirected edge $\{u, v\}$ in $G_{\textnormal{undir}}$, if
  $\psi_u < \psi_v$, add directed edge $(u, v)$ to $G_{\textnormal{dir}}$;
  otherwise, add directed edge $(v, u)$ to $G_{\textnormal{dir}}$.
\item For every node in $u$ in $G_{\textnormal{dir}}$ and every pair of directed
  edges $(u, v)$ and $(u, w)$, check to see if $u$, $v$, and $w$ form motif $M$
  in $G$.  If they do, check if the triangle forms motif $M$ in $G$ and update
  $W_M$ accordingly.
\end{enumerate}

The algorithm runs in time $\Theta(m^{3/2})$ time in the worst case, and is also
known as an effective heuristic for real-world networks~\cite{berry2014why}.
After, we find the largest connected component of the graph corresponding to the
motif adjacency matrix $W_M$, form the motif normalized Laplacian
$\normmotiflap$ of the largest component, and compute the eigenvector of second
smallest eigenvalue of $\normmotiflap$.  To compute the eigenvector, we use
MATLAB's \texttt{eigs} routine with tolerance 1e-4 and the ``smallest
algebraic'' option for the eigenvalue type.

Table~\ref{tab:scalability_results} lists the time to compute $W_M$ and the time
to compute the eigenvector for each network.  We omitted the time to read the
graph from disk because this time strongly depends on how the graph is
compressed.  All experiments ran on a 40-core server with four 2.4 GHz Intel
Xeon E7-4870 processors.  All computations of $W_M$ were in serial and the
computations of the eigenvectors were in parallel.

Over all networks and all motifs, the longest computation of $W_M$ (including
pre-processing time) was for $\motiftype{3}{2}$ on the sk-2005 network and took
roughly 52.8 hours.  The longest eigenvector computation was for
$\motiftype{3}{6}$ on the sk-2005 network, and took about 1.62 hours. We note
that $W_M$ only needs to be computed once per network, regardless of the
eventual number of clusters that are extracted.  Also, the computation of $W_M$
can easily be accelerated by parallel computing (the enumeration of motifs can
be done in parallel over nodes, for example) or by more sophisticated
algorithms~\cite{berry2014why}.  In this work, we perform the computation of
$W_M$ in serial in order to better understand the scalability.

In theory, the triangle enumeration time is $O(m^{1.5})$.  We fit a linear
regression of the log of the computation time of the last step of the
enumeration algorithm to the regressor $\log(m)$ and a constant term:
\begin{equation}\label{eqn:regression}
\log(\text{time}) \sim a\log(m) + b
\end{equation}
If the computations truly took $cm^{1.5}$ for some constant $c$, then the
regression coefficient for $\log(m)$ would be $1.5$.  Because of the
pre-processing of the algorithm, the number of edges $m$ depends on the motif.
For example, with motifs $\motiftype{3}{1}$ and $\motiftype{3}{5}$, we only
count the number of uni-directional edges.  The pre-processing time, which is
linear in the total number of edges, is not included in the time.  The
regression coefficient for $\log(m)$ ($a$ in Equation~\ref{eqn:regression}) was
found to be smaller $1.5$ for each motif (Table~\ref{tab:scale_conf}).  The
largest regression coefficient was $1.31$ for $\motiftype{3}{3}$ (with 95\%
confidence interval $1.31 \pm 0.19$).  We also performed a regression over the
aggregate times of the motifs, and the regression coefficient was $1.17$ (with
95\% confidence interval $1.17 \pm 0.09$).  We conclude that on real-world
datasets, the algorithm for computing $W_M$ performs much better than the
worst-case guarantees.

\begin{table}[hb]
\centering
\caption{%
Summary of networks used in scalability experiments with triangular motifs.  The
total number of edges is the sum of the number of unidirectional edges and twice
the number of bidirectional edges.
}
\scalebox{0.9}{
\begin{tabular}{l l c @{\hskip 1cm} c c c}
\toprule
Name             & description                                  & \# nodes & \multicolumn{3}{l}{\# edges} \\
&                                              &          & total                        & unidir. & bidir. \\
\midrule
wiki-RfA         & Adminship voting on Wikipedia                & 10.8K & 189K  & 175K  & 7.00K \\
email-EuAll      & Emails in a research institution             & 265K  & 419K  & 310K  & 54.5K \\
cit-HepPh        & Citations for papers on arXiv HEP-PH         & 34.5K & 422K  & 420K  & 657 \\
web-NotreDame    & Hyperlinks on \texttt{nd.edu} domain         & 326K  & 1.47M & 711K  & 380K \\
amazon0601       & Product co-purchasing on Amazon              & 403K  & 3.39M & 1.50M & 944K \\
wiki-Talk        & Wikipedia users interactions                 & 2.39M & 5.02M & 4.30M & 362K \\
ego-Gplus        & Circles on Google+                           & 108K  & 13.7M & 10.8M & 1.44M \\
uk-2014-tpd      & top private domain links on \texttt{.uk} web & 1.77M & 16.9M & 13.7M & 1.58M \\
soc-Pokec        & Pokec friendships                            & 1.63M & 30.6M & 14.0M & 8.32M \\
uk-2014-host     & Host links on \texttt{.uk} web               & 4.77M & 46.8M & 33.7M & 6.55M \\
soc-LiveJournal1 & LiveJournal friendships                      & 4.85M & 68.5M & 17.2M & 25.6M \\
enwiki-2013      & Hyperlinks on English Wikipedia              & 4.21M & 101M  & 82.6M & 9.37M \\
uk-2002          & Hyperlinks on \texttt{.uk} web               & 18.5M & 292M  & 231M  & 30.5M \\
arabic-2005      & Hyperlinks on arabic-language web pages      & 22.7M & 631M  & 477M  & 77.3M \\
twitter-2010     & Twitter followers                            & 41.7M & 1.47B & 937M  & 266M \\
sk-2005          & Hyperlinks on \texttt{.sk} web               & 50.6M & 1.93B & 1.69B & 120M  \\
\bottomrule
\end{tabular}
}
\label{tab:scalability_networks}
\end{table}


\clearpage
\begin{table}[hb]
\caption{%
Time to compute the motif adjacency matrix $W_M$ and the second eigenvector of
the motif normalized Laplacian $\normmotiflap$ in seconds for each directed
triangular motif.
}
\scalebox{0.55}{
\begin{tabular}{l l l l l l l l @{\hskip 1cm} l l l l l l l}
\toprule
& \multicolumn{7}{l}{Motif adjacency matrix $W_M$} & \multicolumn{7}{l}{Second eigenvector of $\normmotiflap$} \\
Network & $\motiftype{3}{1}$ & $\motiftype{3}{2}$ & $\motiftype{3}{3}$ & $\motiftype{3}{4}$ & $\motiftype{3}{5}$ & $\motiftype{3}{6}$ & $\motiftype{3}{7}$
        & $\motiftype{3}{1}$ & $\motiftype{3}{2}$ & $\motiftype{3}{3}$ & $\motiftype{3}{4}$ & $\motiftype{3}{5}$ & $\motiftype{3}{6}$ & $\motiftype{3}{7}$ \\
\midrule
wiki-RfA         & 1.19e+00 & 2.67e+00 & 1.71e+00 & 2.06e-02 & 1.79e+00 & 2.42e+00 & 2.35e+00 & 1.14e-01 & 2.12e-01 & 1.22e-01 & 2.12e-01 & 2.12e-01 & 2.94e-01 & 2.93e-01 \\
email-EuAll      & 4.74e-01 & 8.29e-01 & 6.26e-01 & 2.46e-01 & 5.02e-01 & 5.40e-01 & 5.41e-01 & 2.29e-01 & 1.62e-01 & 2.43e-01 & 1.62e-01 & 1.62e-01 & 2.35e-01 & 1.92e-01 \\
cit-HepPh        & 7.65e+00 & 3.36e+00 & 2.73e+00 & 6.22e+00 & 8.20e+00 & 3.29e+00 & 3.35e+00 & 2.11e+00 & 2.10e+00 & 2.11e+00 & 2.10e+00 & 2.10e+00 & 2.24e+00 & 2.30e+00 \\
web-NotreDame    & 9.42e-01 & 2.39e+01 & 2.33e+01 & 2.30e+00 & 1.17e+00 & 8.29e+00 & 8.40e+00 & 1.86e-01 & 3.62e-01 & 5.97e-01 & 3.62e-01 & 3.62e-01 & 9.61e-01 & 2.06e+00 \\
amazon0601       & 2.35e+00 & 8.66e+00 & 6.91e+00 & 1.82e+00 & 2.94e+00 & 5.47e+00 & 5.73e+00 & 1.23e-01 & 6.96e-01 & 4.62e+00 & 6.96e-01 & 6.96e-01 & 4.97e+00 & 4.53e+00 \\
wiki-Talk        & 1.07e+01 & 3.00e+01 & 2.20e+01 & 3.11e+00 & 1.35e+01 & 2.09e+01 & 2.10e+01 & 1.28e+00 & 2.40e+00 & 2.51e+00 & 2.40e+00 & 2.40e+00 & 2.54e+00 & 4.52e+00 \\
ego-Gplus        & 8.55e+02 & 2.42e+03 & 1.73e+03 & 2.08e+01 & 1.63e+03 & 2.07e+03 & 2.17e+03 & 4.42e+00 & 1.68e+01 & 2.11e+01 & 1.68e+01 & 1.68e+01 & 2.57e+01 & 4.42e+01 \\
uk-2014-tpd      & 8.10e+01 & 5.31e+02 & 4.07e+02 & 2.56e+01 & 1.15e+02 & 3.04e+02 & 2.85e+02 & 3.59e+00 & 9.66e+00 & 9.92e+00 & 4.35e+00 & 9.66e+00 & 2.10e+01 & 2.16e+01 \\
soc-Pokec        & 4.17e+01 & 1.34e+02 & 1.21e+02 & 3.04e+01 & 4.88e+01 & 1.00e+02 & 1.04e+02 & 1.96e+00 & 1.75e+01 & 3.91e+01 & 1.75e+01 & 1.75e+01 & 2.39e+01 & 2.45e+01 \\
uk-2014-host     & 9.98e+02 & 4.68e+03 & 2.76e+03 & 8.90e+01 & 1.32e+03 & 2.89e+03 & 2.99e+03 & 1.81e+01 & 4.38e+01 & 6.80e+01 & 2.04e+01 & 4.38e+01 & 8.28e+01 & 8.73e+01 \\
soc-LiveJournal1 & 9.08e+01 & 7.66e+02 & 6.24e+02 & 1.24e+02 & 1.24e+02 & 4.41e+02 & 4.49e+02 & 2.32e+00 & 2.20e+01 & 1.06e+02 & 2.20e+01 & 2.20e+01 & 4.49e+01 & 6.13e+01 \\
enwiki-2013      & 8.36e+02 & 9.62e+02 & 7.09e+02 & 3.13e+01 & 9.77e+02 & 8.19e+02 & 8.38e+02 & 2.18e+01 & 7.58e+01 & 8.45e+01 & 7.58e+01 & 7.58e+01 & 2.14e+02 & 1.48e+02 \\
uk-2002          & 1.47e+03 & 8.59e+03 & 5.17e+03 & 2.45e+02 & 1.73e+03 & 4.53e+03 & 5.29e+03 & 1.66e+01 & 8.65e+01 & 2.52e+02 & 8.65e+01 & 8.65e+01 & 7.87e+02 & 5.32e+02 \\
arabic-2005      & 6.51e+03 & 7.64e+04 & 6.05e+04 & 6.08e+03 & 8.39e+03 & 3.59e+04 & 3.69e+04 & 1.98e+01 & 1.64e+02 & 4.80e+02 & 3.26e+02 & 1.64e+02 & 1.95e+03 & 1.40e+03 \\
twitter-2010     & 1.21e+04 & 1.38e+05 & 1.31e+05 & 3.33e+04 & 1.99e+04 & 8.03e+04 & 7.65e+04 & 2.23e+02 & 1.23e+03 & 1.95e+03 & 1.23e+03 & 1.23e+03 & 2.22e+03 & 2.18e+03 \\
sk-2005          & 5.52e+04 & 1.63e+05 & 1.29e+05 & 1.55e+04 & 5.23e+04 & 9.64e+04 & 8.42e+04 & 5.73e+01 & 2.94e+02 & 7.98e+02 & 2.94e+02 & 2.94e+02 & 5.83e+03 & 3.81e+03 \\
\bottomrule
\end{tabular}
}
\label{tab:scalability_results}
\end{table}

\begin{table}[h]
\centering
\caption{%
The 95\% confidence interval (CI) for the regression coefficient of the regressor
$\log(m)$ in a linear model for predicting the time to compute $W_M$, based on
the computational results for the networks in
Table~\ref{tab:scalability_networks}.  The algorithm runs is guranteed to run in
time $O(m^{3/2})$.  ``Combined'' refers to the regression coefficient when
considering all of the times.
}
\scalebox{0.8}{
\begin{tabular}{c c c c c c c c c}
  \toprule
  & Motif \\
  & $\motiftype{3}{1}$ &  $\motiftype{3}{2}$ &  $\motiftype{3}{3}$ &  $\motiftype{3}{4}$
  & $\motiftype{3}{5}$ &  $\motiftype{3}{6}$ &  $\motiftype{3}{7}$ & Combined \\
\midrule  
95\% CI & $1.20 \pm 0.19$ & $1.30 \pm 0.20$ & $1.31 \pm 0.19$ & $0.90 \pm 0.31$
& $1.20 \pm 0.20$ & $1.27 \pm 1.21$ & $1.27 \pm 0.21$ & $1.17 \pm 0.09$ \\
\end{tabular}
}
\label{tab:scale_conf}
\end{table}
\clearpage

\subsection{Experimental results on $k$-cliques}

On smaller graphs, we can compute larger motifs.  To illustrate the computation
time, we formed the motif adjacency matrix $W$ based on the $k$-cliques motif
for $k = 4, \dots, 9$.  We implemented the $k$-clique enumeration algorithm by
Chiba and Nishizeki with the additional pre-processing of computing the
($k-1$)-core of the graph.  (This pre-processing improves the running time in
practice but does not affect the asymptotic complexity.)  The motif adjacency
matrices for $k$-cliques are sparser than the adjacency matrix of the original
graph.  Thus, we do not worry about spatial complexity for these motifs.

We ran the algorithm on nine real-world networks, ranging from roughly four
thousande nodes and 88 thousand edges to over two million nodes and around five
million edges (see Table~\ref{tab:networks_cliques}.)  Each network contained at
least one $9$-clique and hence at least one $k$-clique for $k < 9$.  All
networks were downloaded from the SNAP collection at
\url{http://snap.stanford.edu/data/}~\cite{snapnets}.  All computations ran on
the same server as for the triangular motifs and again there was no parallelism.
We terminated computations after two hours.  For five of the nine networks, the
time to compute $W_M$ for the $k$-clique motif was under two hours for $k = 4,
\ldots, 9$ (Table~\ref{tab:cliques_perf}).  And for each network, the
computation finished within two hours for $k = 4, 5, 6$.  The smallest network
(in terms of number of nodes and number of edges) was the Facebook ego network,
where it took just under two hours to comptue $W_M$ for the $6$-clique motif and
over two hours for the $7$-clique motif.  This network has around 80,000 edges.
On the other hand, for the YouTube network, which contains nearly 3 million
edges, we could compute $W_M$ for the $9$-clique motif in under a minute.

We conclude that it is possible to use our frameworks with motifs much larger
than the three-node motifs on which we performed many of our experiments.
However, the number of edges is not that correlated with the running time to
compute $W_M$.  This makes sense becuse the Chiba and Nishizeki algorithm
complexity is $O(a^{k-2}m)$, where $a$ is the arboricity of the graph.  Hence,
the dependence on the number of edges is always linear.

\begin{table}[htb]
\centering
\caption{%
Summary of networks used in scalability experiments with $k$-clique motifs.  For
each graph, we consider all edges as undirected.
}
\begin{tabular}{l l c c}
\toprule
Network          & description                      & \# nodes & \# edges \\
\midrule
ego-Facebook     & Facebook friendships             & 4.04K    & 88.2K \\
wiki-RfA         & Adminship voting on Wikipedia    & 10.8K    & 182K  \\
ca-AstroPh       & author co-authorship             & 18.8K    & 198K  \\
email-EuAll      & Emails in a research institution & 265K     & 364K  \\
cit-HepPh        & paper citations                  & 34.5K    & 421K  \\
soc-Slashdot0811 & Slashdot user interactions       & 77.4K    & 469K  \\
com-DBLP         & author co-authorship             & 317K     & 1.05M \\
com-Youtube      & User friendships                 & 1.13M    & 2.99M \\
wiki-Talk        & Wikipedia users interactions     & 2.39M    & 4.66M \\
\bottomrule
\end{tabular}
\label{tab:networks_cliques}
\end{table}


\begin{table}[h]
\centering
\caption{%
Time to compute $W_M$ for $k$-clique motifs (seconds).  Only computations that
finished within two hours are listed.
}
\begin{tabular}{l c c c c c c}
\toprule
& \multicolumn{6}{c}{Number of nodes in clique ($k$)} \\
Network          & 4 & 5 & 6 & 7 & 8 & 9 \\
\midrule
ego-Facebook     & 14 & 317 & 6816 & -- & -- & -- \\
wiki-RfA         & 6 & 22 & 63 & 134 & 218 & 286 \\
ca-AstroPh       & 5 & 35 & 285 & 2164 & -- & -- \\
email-EuAll      & 1 & 2 & 4 & 5 & 6 & 6 \\
cit-HepPh        & 3 & 6 & 11 & 18 & 30 & 36 \\
soc-Slashdot0811 & 3 & 12 & 55 & 282 & 1018 & 2836 \\
com-DBLP         & 9 & 129 & 3234 & -- & -- & -- \\
com-Youtube      & 12 & 17 & 25 & 33 & 35 & 33 \\
wiki-Talk        & 64 & 466 & 2898 & -- & -- & -- \\
\bottomrule
\end{tabular}
\label{tab:cliques_perf}
\end{table}
\clearpage

\section{Matrix-based interpretation of the motif-weighted adjacency matrix}

For several motifs, the motif adjacency matrix $W_M$
(Equation~\ref{eqn:informal_weighting}) has a simple formula in terms of the
adjacency matrix of the original, directed, unweighted graph, $G$.  Let $A$ be
the adjacency matrix for $G$ and let $U$ and $B$ be the adjacency matrix of the
unidirectional and bidirectional links of $G$.  Formally, $B = A \circ A^T$ and
$U = A - B$, where $\circ$ denotes the Hadamard (entry-wise) product.
Table~\ref{tab:matrix_interp} lists the formula of $W_M$ for motifs
$\motiftype{3}{1}$, $\motiftype{3}{2}$, $\motiftype{3}{3}$, $\motiftype{3}{4}$,
$\motiftype{3}{5}$, $\motiftype{3}{6}$, and $\motiftype{3}{7}$ (see
Figure~\ref{fig:motif_defs}) in terms of the matrices $U$ and $B$.  The central
computational kernel in these computations is $(X \cdot Y) \circ Z$.  When $X$,
$Y$, and $Z$ are sparse, efficient parallel algorithms have been developed and
analyzed~\cite{azad2015parallel}.  If the adjacency matrix is sparse, then
computing $W_M$ for these motifs falls into this framework.

\begin{table}[hb]
\centering
\caption{%
Matrix-based formulations of the weighted motif adjacency matrix $W_M$
(Equation~\ref{eqn:informal_weighting}) for all triangular three-node simple
motifs.  $P \circ Q$ denotes the Hadamard (entry-wise) products of matrices $P$
and $Q$.  If $A$ is the adjacency matrix of a directed, unweighted graph $G$,
then $B = A \circ A^T$ and $U = A - B$.  Note that in all cases, $W_M$ is
symmetric.
}
\begin{tabular}{c @{\hskip 1cm} l @{\hskip 1.25cm} l}
\toprule
Motif & Matrix computations & $W_M =$ \\ \midrule
$\motiftype{3}{1}$  & $C = (U \cdot U) \circ U^T$ & $C + C^T$ \\
$\motiftype{3}{2}$  & $C = (B \cdot U) \circ U^T + (U \cdot B) \circ U^T + (U \cdot U) \circ B$ & $C + C^T$ \\
$\motiftype{3}{3}$  & $C = (B \cdot B) \circ U + (B \cdot U) \circ B + (U \cdot B) \circ B$ & $C + C^T$ \\
$\motiftype{3}{4}$  & $C = (B \cdot B) \circ B$ & $C$ \\
$\motiftype{3}{5}$  & $C = (U \cdot U) \circ U + (U \cdot U^T) \circ U + (U^T \cdot U) \circ U$ & $C + C^T$ \\
$\motiftype{3}{6}$  & $C = (U \cdot B) \circ U + (B \cdot U^T) \circ U^T + (U^T \cdot U) \circ B$ & $C$ \\
$\motiftype{3}{7}$  & $C = (U^T \cdot B) \circ U^T + (B \cdot U) \circ U + (U \cdot U^T) \circ B$ & $C$ \\
%
\bottomrule
\end{tabular}
\label{tab:matrix_interp}
\end{table}

With these matrix formulations, implementing the motif-based spectral
partitioning algorithm for modestly sized graphs is straightforward.  However,
these computations become slower than standard fast triangle enumeration
algorithms when the networks are large and sparse.  Nevertheless, the matrix
formulations provide a simple and elegant computational method for the motif
adjacency matrix $W_M$.  To demonstrate, Figure~\ref{fig:sample_code_M6}
provides a complete MATLAB implementation of Algorithm~\ref{alg:motif_fiedler}
for $\motiftype{3}{6}$ (Figure~\ref{fig:motif_defs}).  The entire algorithm
including comments comrpises 28 lines of code.

\lstset{
  language=matlab,
  basicstyle=\footnotesize,
  numbers=left,
  numberstyle=\tiny\color{gray},
  keywordstyle=\color{blue},
  commentstyle=\color{purple},
  stringstyle=\color{red},
}
\begin{figure}[tb]
\centering
\lstset{language=matlab, basicstyle=\footnotesize}
\lstinputlisting{MotifSpectralPartitionM6.m}
\caption{%
MATLAB implementation of the motif-based spectral partitioning algorithm for
motif $\motiftype{3}{6}$.  For other motifs, line 9 can be replaced with the
formulations from Table~\ref{tab:matrix_interp}.
}
\label{fig:sample_code_M6}
\end{figure}

An alternative matrix formulation comes from a motif-node adjacency matrix.  Let
$M(B, \anchorset)$ be a motif set and number the instances of the motif
$1, \ldots, \lvert M \rvert$,
so that $(\vect{v}_i, \chi_{\mathcal{A}}(\vect{v}_i))$ is the $i$th motif.
Define the $\lvert M \rvert \times n$ motif-node adjacency matrix $A_M$ by
$(A_M)_{ij} = \indicator{j \in \chi_{\mathcal{A}}(\vect{v}_i)}$.  Then
\begin{equation}\label{eqn:motif_node_adj}
(W_M)_{ij} = (A_M^TA_M)_{ij}, \quad i \neq j.
\end{equation}
This provides a convenient algebraic formulation for defining the weighted
motif adjacency matrix.  However, in practice, we do not use this formulation
for any computations.

\section{Alternative clustering algorithms for evaluation}
For our experiments, we compare our spectral motif-based custering to the
following methods:
\begin{itemize}
\item Standard, edge-based spectral clustering, which is a special
case of motif-based clustering.  In particular, the motifs
\begin{equation}
B_1 = \begin{bmatrix} 0 & 1 \\ 1 & 0 \end{bmatrix},\;
B_2 = \begin{bmatrix} 0 & 1 \\ 0 & 0 \end{bmatrix},\;
\anchorset = \{1, 2\}
\end{equation}
correspond to removing directionality from a directed graph.  We refer to the
union of these two motifs as $\medge$.
\item %
Infomap, which is based on the map equation~\cite{rosvall2008maps}.  Software
for Infomap was downloaded from
\url{http://mapequation.org/code.html}.
We run the algorithm the algorithm for directed links when the network under
consideration is directed.
\item %
The Louvain method~\cite{blondel2008fast}.  Software for the Louvain method was
downloaded from
\url{https://perso.uclouvain.be/vincent.blondel/research/louvain.html}
We use the ``oriented'' version of the Louvain method for directed graphs.
\end{itemize}

Infomap and the Louvain method are purely clustering methods in the sense that
they take as input the graph and produce as output a set of labels for the nodes
in the graph.  In contrast to the spectral methods, we do not have control over
the number of clusters.  Also, only the spectral methods provide embeddings of
the nodes into euclidean space, which is useful for visualization.  Thus, for
our analysis of the transportation reachiability network in
Section~\ref{sec:airports}, we only compare spectral methods.

\section{Details and comparison against existing methods for the \emph{C.~elegans} network}

We now provide more details on the cluster found for the \emph{C. elegans}
network of frontal neurons~\cite{kaiser2006nonoptimal}.  In this network, the
nodes are neurons and the edges are synapses.  The network data was downloaded
from \url{http://www.biological-networks.org/pubs/suppl/celegans131.zip}.

\subsection{Connected components of the motif adjacency matrices}

We again first onsider the connected components of the motif adjacency matrices as a
pre-processing step.  For our analysis, we consider use $\mbifan$,
$\motiftype{3}{8}$, and $\medge$ (Figure~\ref{fig:motif_defs}).  The original
network has 131 nodes and 764 edges.  The largest connected component of the
motif adjacency matrix for motif $\mbifan$ contains 112 nodes.  The remaining 19
nodes are isolated and correspond to the neurons AFDL, AIAR, AINR, ASGL/R,
ASIL/R, ASJL/R, ASKL/R, AVL, AWAL, AWCR, RID, RMFL, SIADR, and SIBDL/R.  The
largest connected component of the motif adjacency matrix for motif
$\motiftype{3}{8}$ contains 127 nodes.  The remaining 4 nodes are isolated and
correspond to the neurons ASJL/R and SIBDL/R.  The original network is weakly
connected, so the motif adjacency matrix for $\medge$ is connected.

\subsection{Comparison of bi-fan motif cluster to clusters found by existing methods}

\begin{figure}[htb]
\centering
\includegraphics[width=10cm]{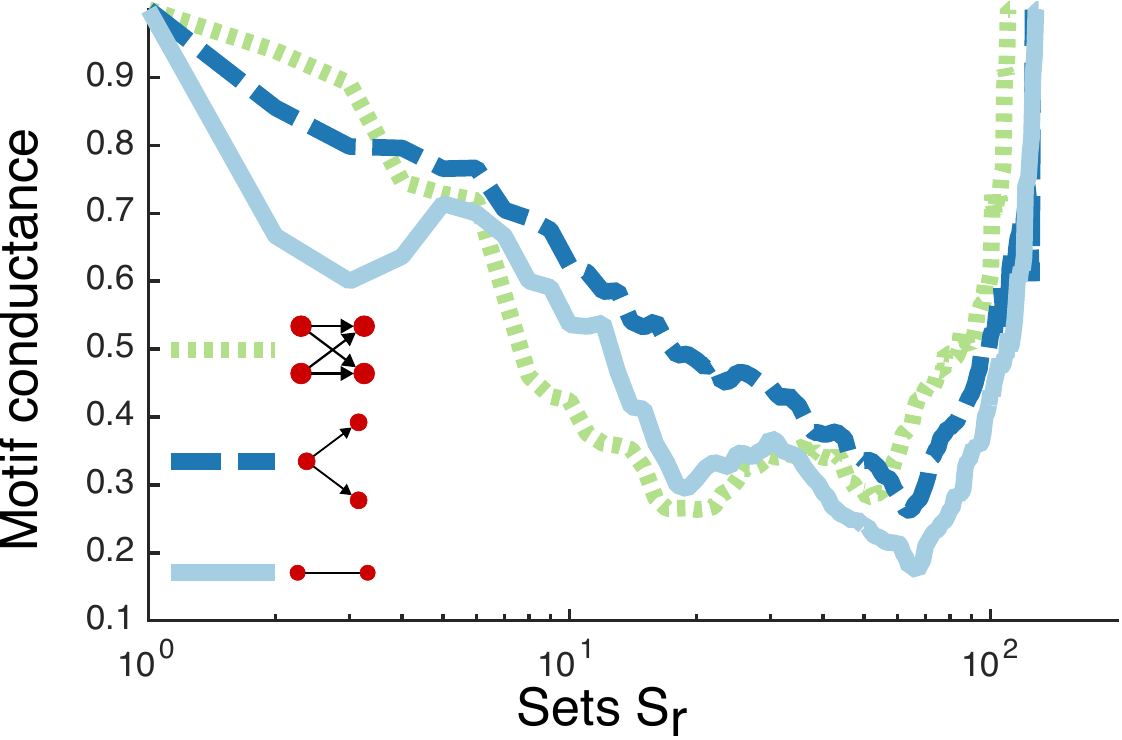}
\caption{%
Sweep profile plot ($\phi_{M}(S)$ as a function of $S$ from the sweep in
Algorithm~\ref{alg:motif_fiedler}) for $\mbifan$ (green) $\motiftype{3}{8}$
(dark blue), and $\medge$ (light blue).
}
\label{fig:celegans_ncp}
\end{figure}

We found the motif-based clusters for motifs $\mbifan$, $\motiftype{3}{8}$, and
$\medge$ by running Algorithm~\ref{alg:motif_fiedler} on the largest connected
component of the motif adjacency matrix.  Sweep profile plots ($\phi_{M}(S)$ as a
function of $S$ from the sweep in Algorithm~\ref{alg:motif_fiedler}) are shown
in Figure~\ref{fig:celegans_ncp} and show that the size of the $\mbifan$
returned by Algorithm~\ref{alg:motif_fiedler} cluster is smaller than the
clusters for $\motiftype{3}{8}$ and $\medge$.  In fact, the motif-based clusters
for $\motiftype{3}{8}$ and $\medge$ essentially bisect the graph, containing 63
of 127 and 64 of 131 nodes, respectively.  Of the 63 nodes in the
$\motiftype{3}{8}$-based cluster, only 2 are in the edge-based cluster, so
these partitions give roughly the same information.

Next, we compare the clusters found by existing methods to the $\mbifan$-based
cluster found by Algorithm~\ref{alg:motif_fiedler}.  We will show that existing
methods do not find the same group of nodes.  Let $S_{\textnormal{bifan}}$ be
the $\mbifan$-based cluster, which consists of 20 nodes.  The nodes
correspond to the following neurons: IL1DL/VL, IL2DL/DR/VL/VR/L/R, OLQDL/R, RIH,
RIPL/R, RMEL/R/V, and URADL/DR/VL/VR.  The partitions based on
$\motiftype{3}{8}$ and $\medge$ provide two sets of nodes each.  For the
subsequent analysis, we consider the set with the largest number of overlapping
nodes with $S_{\textnormal{bifan}}$.  Call these sets $S_{\motiftype{3}{8}}$ and
$S_{\textnormal{edge}}$.  We also consider the cluster found by Infomap and the
Louvain method with the largest overlap with $S_{\textnormal{bifan}}$.  Call
these sets $S_{I}$ and $S_{L}$.

To compare the most similar clusters found by other methods to
$S_{\textnormal{bifan}}$, we look at two metrics.  First, how many neurons in
$S_{\textnormal{bifan}}$ are in a cluster found by existing methods (in other
words, the overlap).  A cluster consisting of all nodes in the graph would
trivially have 100\% overlap with $S_{\textnormal{bifan}}$ but loses all
precision in the cluster identification.  Thus, we also consider the sizes of
the clusters.  These metrics are summarized as follows:
\begin{eqnarray*}
&\lvert S_{\textnormal{bifan}} \cap S_{\motiftype{3}{8}} \rvert = 20,
&\lvert S_{\motiftype{3}{8}} \rvert = 68 \\
&\lvert S_{\textnormal{bifan}} \cap S_{\textnormal{edge}} \rvert = 20,
&\lvert S_{\textnormal{edge}} \rvert = 64 \\
&\lvert S_{\textnormal{bifan}} \cap S_{L} \rvert = 13,
&\lvert S_{L} \rvert = 27 \\
&\lvert S_{\textnormal{bifan}} \cap S_{I} \rvert = 19,
&\lvert S_{I} \rvert = 114
\end{eqnarray*}

We see that $S_{\textnormal{bifan}}$ is a subset of $S_{\motiftype{3}{8}}$ and
$S_{\textnormal{edge}}$ and has substantial overlap with $S_{I}$.  However,
$S_{\textnormal{bifan}}$ is by far the smallest of all of these sets.  We
conclude that existing methods do not capture the same information as motif
$\mbifan$.

To further investigate the structure found by existing methods, we show the
clusters $S_{\textnormal{edge}}$ and $S_{\motiftype{3}{8}}$ in
Figure~\ref{fig:celegans_embed}.  From the figure, we see that spectral
clustering based on edges or motif $\motiftype{3}{8}$ simply finds a spatially
coherent cluster, rather than the control structure formed by the nodes in
$S_{\textnormal{bifan}}$.

\clearpage
\begin{figure}[ht]
\centering
\includegraphics[width=0.75\columnwidth]{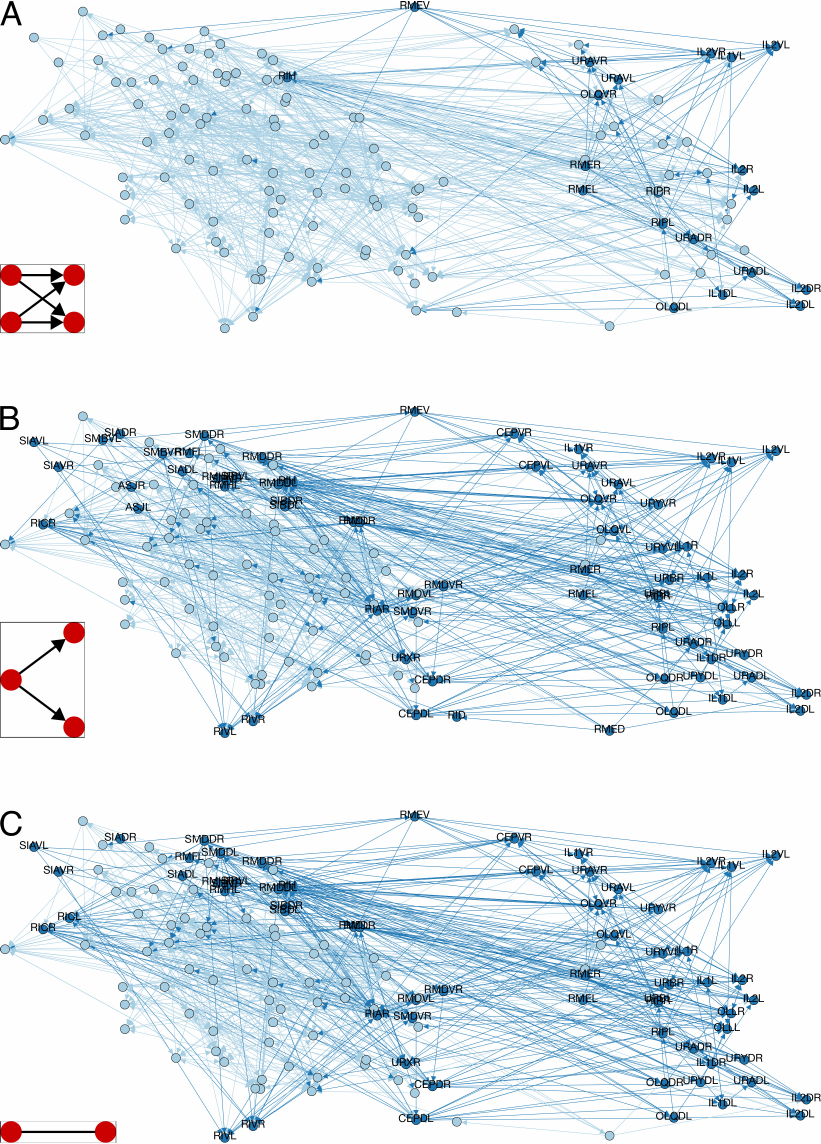}
\caption{%
Illustration of motif-based clusters with true two-dimensional spatial
dimensions of the frontal neurons of \emph{C. elegans}.
{\bf A:}
The $\mbifan$-based cluster consists of the labeled dark blue
nodes.
{\bf B:}
Partitioning the graph based on motif $\motiftype{3}{8}$, where the labeled dark
blue nodes are the nodes on the side of the partition with largest overlap of
the nodes in A.
{\bf C:}
Partitioning the graph based on edges, where the labeled dark blue nodes are the
nodes on the side of the partition with largest overlap of the nodes in A.
Note that the partitions in Figures B and C capture the cluster in Figure A, but
also contain many other nodes.  Essentially, the partitions in B and C are just
capturing spatial information.
}
\label{fig:celegans_embed}
\end{figure}
\clearpage

\section{Details and comparison against existing methods for the transportation reachability network}
\label{sec:airports}

The nodes in the transportation reachability network are airports in the United
States and Canada.  There is an edge from city $i$ to city $j$ if the estimated
travel time from $i$ to $j$ is less than some
threshold~\cite{frey2007clustering}.  The network is not symmetric.  The network
with estimated travel times was downloaded from \\
\url{http://www.psi.toronto.edu/affinitypropagation/TravelRouting.mat} 
and \url{http://www.psi.toronto.edu/affinitypropagation/TravelRoutingCityNames.txt}.
We collected the latitude, longitude, and metropolitan populations of the
cities using WolframAlpha and Wikipedia.  All of the data is available on
our project web page: \projecturl.

\subsection{Methods for spectral embeddings}

We compared the motif-based spectral embedding of the transportation
reachability network to spectral embeddings from other connectivity matrices.
For this analysis, we ignore the travel times times and only consider the
topology of the network.  The two-dimensional spectral embedding for a graph
defined by a (weighted) adjacency matrix $W \in \mathbb{R}^{n \times n}$ comes
from Algorithm~\ref{alg:motif_ngetal}:
\begin{enumerate}
\item %
Form the normalized Laplacian $\mathcal{L} = I - D^{-1/2}WD^{-1/2}$, where $D$
is the diagonal degree matrix with $D_{ii} = \sum_{j}W_{ij}$.
\item %
Compute the first 3 eigenvectors $z_1$, $z_2$, $z_3$ of smallest eigenvalues for
$\mathcal{L}$ ($z_1$ has the smallest eigenvalue).
\item %
Form the normalized matrix $Y \in \mathbb{R}^{n \times 3}$ by
$Y_{ij} = z_{ij} / \sqrt{\sum_{j=1}^{3}z_{ij}^2}$.
\item %
Define the primary and secondary spectral coordinates of node $i$ to be $Y_{i2}$
and $Y_{i3}$, respectively.
\end{enumerate}

We consider the following three matrices $W$.

\begin{enumerate}
\item {\bf Motif}: %
The sum of the motif adjacency matrix (Equation~\ref{eqn:formal_weighting}) for
three different anchored motifs:
\begin{equation}
B_1 = \begin{bmatrix} 0 & 1 & 1 \\ 1 & 0 & 1 \\ 1 & 1 & 0 \end{bmatrix},\;
B_2 = \begin{bmatrix} 0 & 1 & 1 \\ 1 & 0 & 1 \\ 0 & 1 & 0 \end{bmatrix},\;
B_3 = \begin{bmatrix} 0 & 1 & 0 \\ 1 & 0 & 1 \\ 0 & 1 & 0 \end{bmatrix},\;
\anchorset = \{1, 3\}.
\end{equation}
If $S$ is the matrix of bidirectional links in the graph ($S_{ij} = 1$ if and
only if $A_{ij} = A_{ji} = 1$), then the motif adjacency matrix for these motifs
is $W_M = S^2$.  The resulting embedding is shown in Figure 4C of the main text.

\item {\bf Undirected}: %
The adjacency matrix is formed by ignoring edge direction.  This is the standard
spectral embedding.    The resulting embedding is shown in Figure 4D of the main text.

\item {\bf Undirected complement}: %
The adjacency matrix is formed by taking the complement of the undirected
adjacency matrix.  This matrix tends to connect non-hubs to each other.

\end{enumerate}

The networks represented by each adjacency matrices are all connected.

\subsection{Comparison of motif-based embedding to other embeddings}

We computed 99\% confidence intervals for the Pearson correlation of the primary
spectral coordinate with the metropolitan population of the city using the
Pearson correlation coefficient.  Table~\ref{tab:correlations} lists the
confidence intervals.  (Since eigenvectors are only unique up to sign, the
confidence intervals are symmetric about $0$.  We list the interval with the
largest positive end point under this permutation to be consistent across
embeddings.)  The motif-based primary spectral coordinate has the strongest
correlation with the city populations.

We repeated the computations for the correlation between the secondary spectral
coordinate and the longitude of the city.  Again, the motif-based clustering has
the strongest correlation.  Furthermore, the lower end of the confidence
interval for the motif-based embedding was above the higher end of the
confidence interval for the other three embeddings.

\begin{table}[tb]
\centering
\caption{%
Summary of Pearson correlations for spectral embeddings of the transportation
reachability network.  We list the 99\% confidence interval for the Pearson
correlation coefficient.
}
\begin{tabular}{l @{\hskip 0.8cm} l @{\hskip 0.5cm}l}
\toprule
                      & Primary spectral coordinate & Secondary spectral coordinate \\
                      & and metropolitan population & and longitude                 \\ \midrule
Embedding             & 99\% confidence interval    & 99\% confidence interval      \\ \midrule
Motif                 & 0.43 $\pm$ 0.09             & 0.59 $\pm$ 0.08               \\
Undirected            & 0.11 $\pm$ 0.12             & 0.39 $\pm$ 0.11               \\
Undirected complement & 0.31 $\pm$ 0.11             & 0.10 $\pm$ 0.12               \\
\end{tabular}
\label{tab:correlations}
\end{table}

Finally, in order to visualize these relationships, we computed Loess
regressions of city metropolitan population and longitude against the primary
and secondary spectral coordinates for each of the embeddings
(Figure~\ref{fig:airports_loess}).  The sign of the eigenvector used in each
regression was chosen to match correlation shown in Figures 3C and 3D in the
main text (primary spectral coordinate positively correlated with population and
secondary spectral coordinate negatively correlated with longitude).  The Loess
regressions visualize the stronger correlation of the motif-based spectral
coordinates with the metropolitan popuatlion and longitude.

We conclude that the embedding provided by the motif adjacency matrix more
strongly captures the hub nature of airports and West-East geography of the
network.  To gain further insight into the relationship of the primary spectral
coordinate's relationship with the hub airports, we visualize the adjacency
matrix in Figure~\ref{fig:spectral_ordered_adj}, where the nodes are ordered by
the spectral ordering.  We see a clear relationship between the spectral ordering
and the connectivity.

\clearpage
\begin{figure}[t]
\centering
\includegraphics[width=5cm]{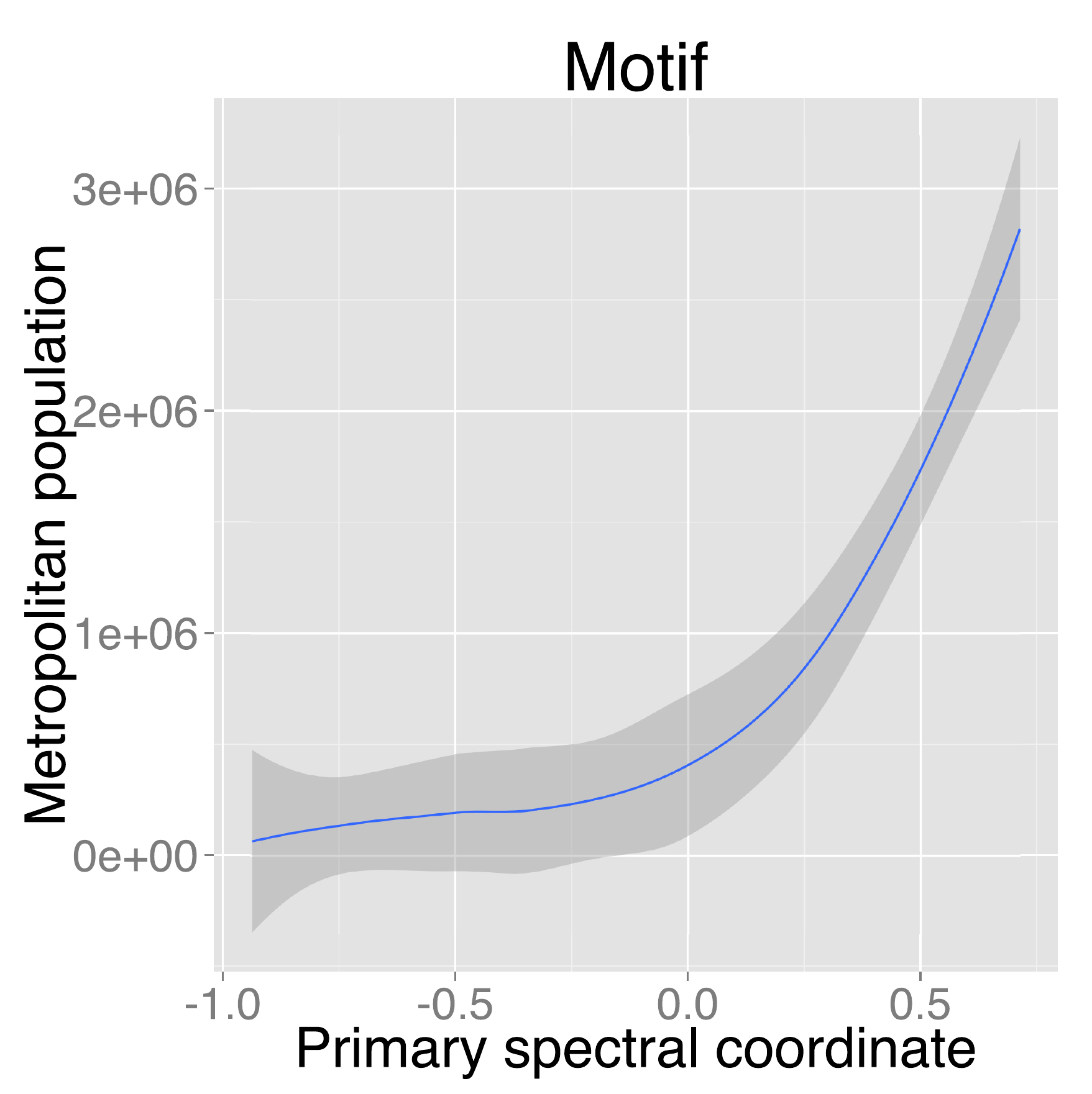}
\includegraphics[width=5cm]{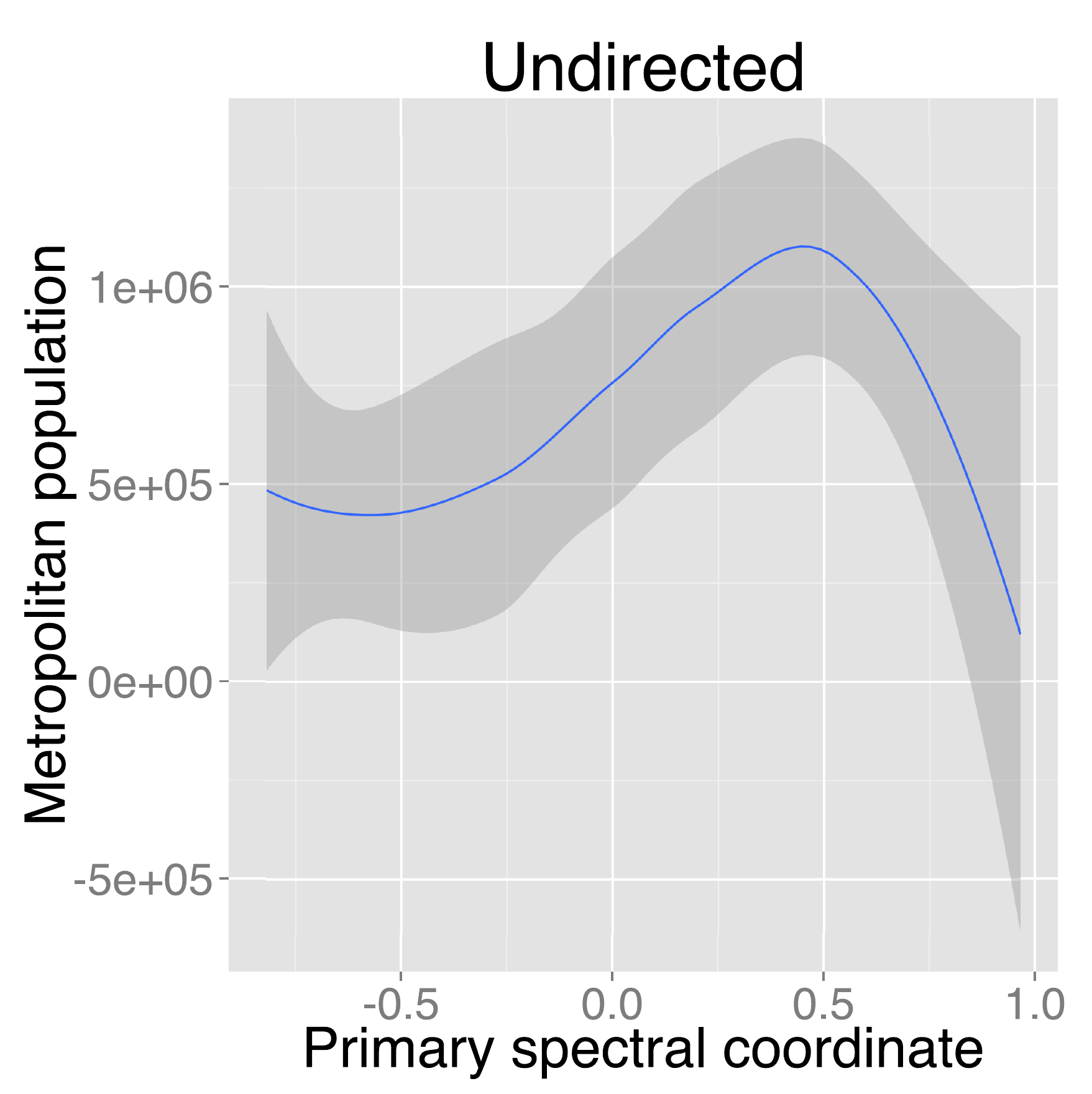}
\includegraphics[width=5cm]{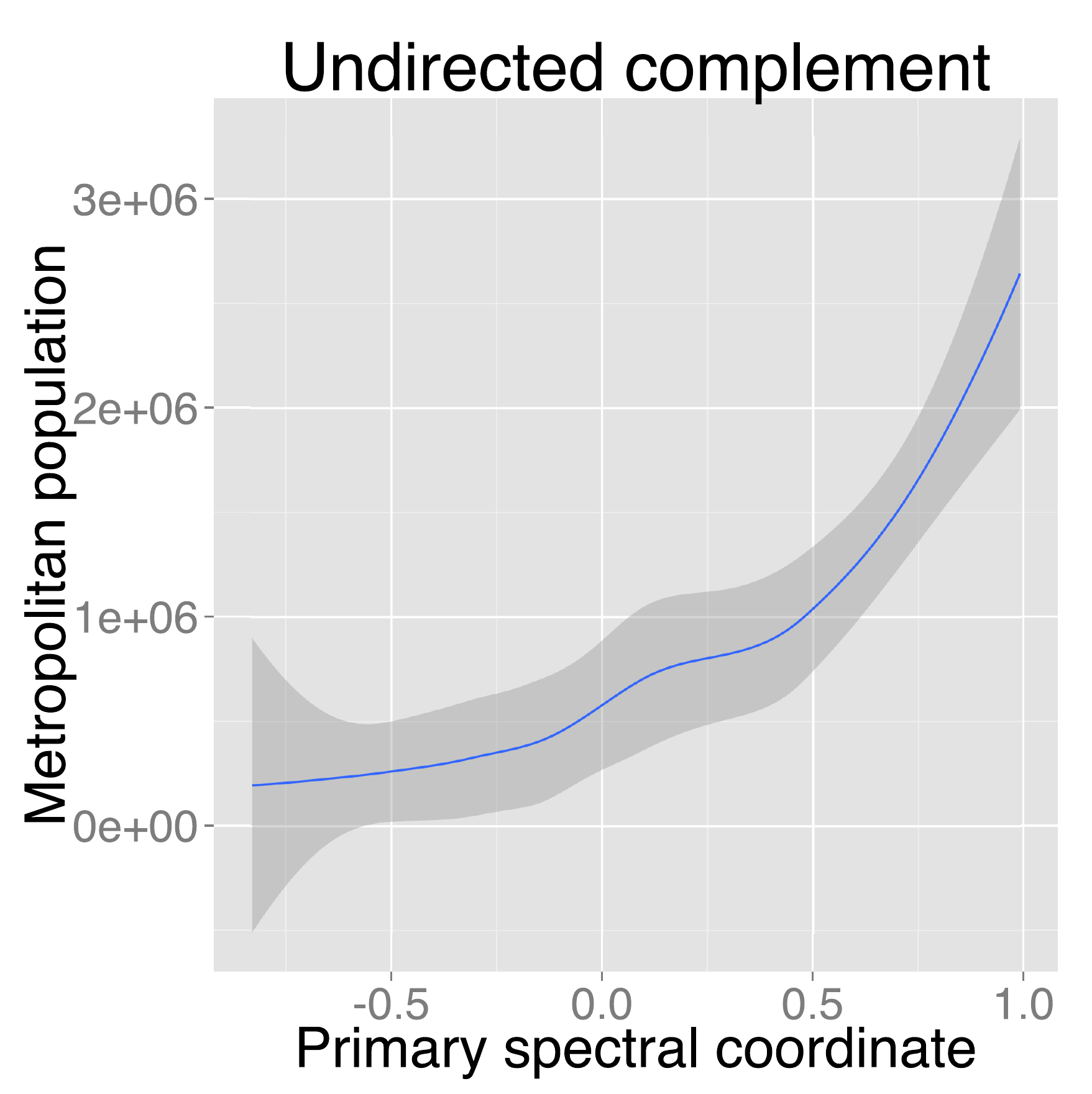}
\includegraphics[width=5cm]{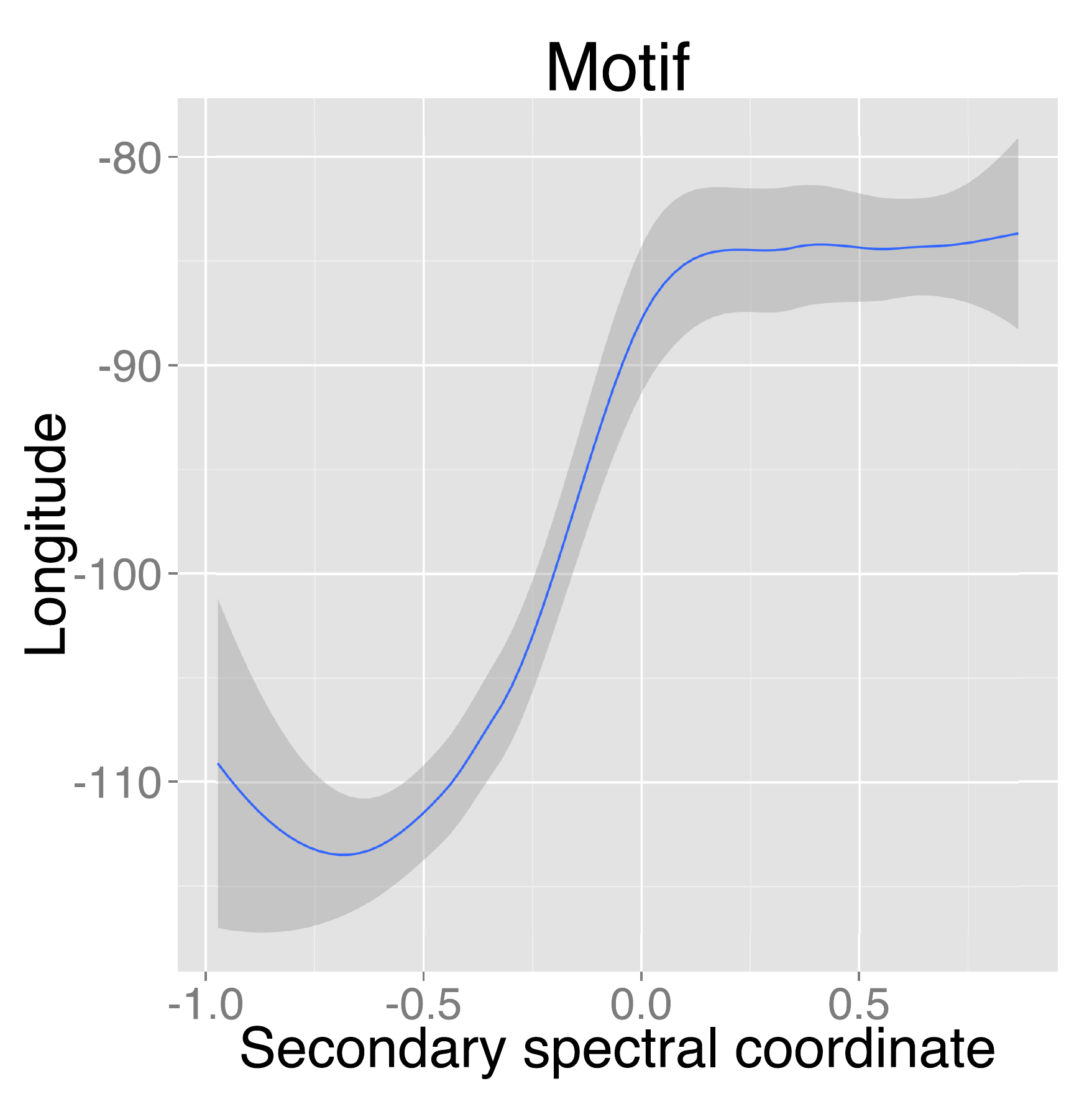}
\includegraphics[width=5cm]{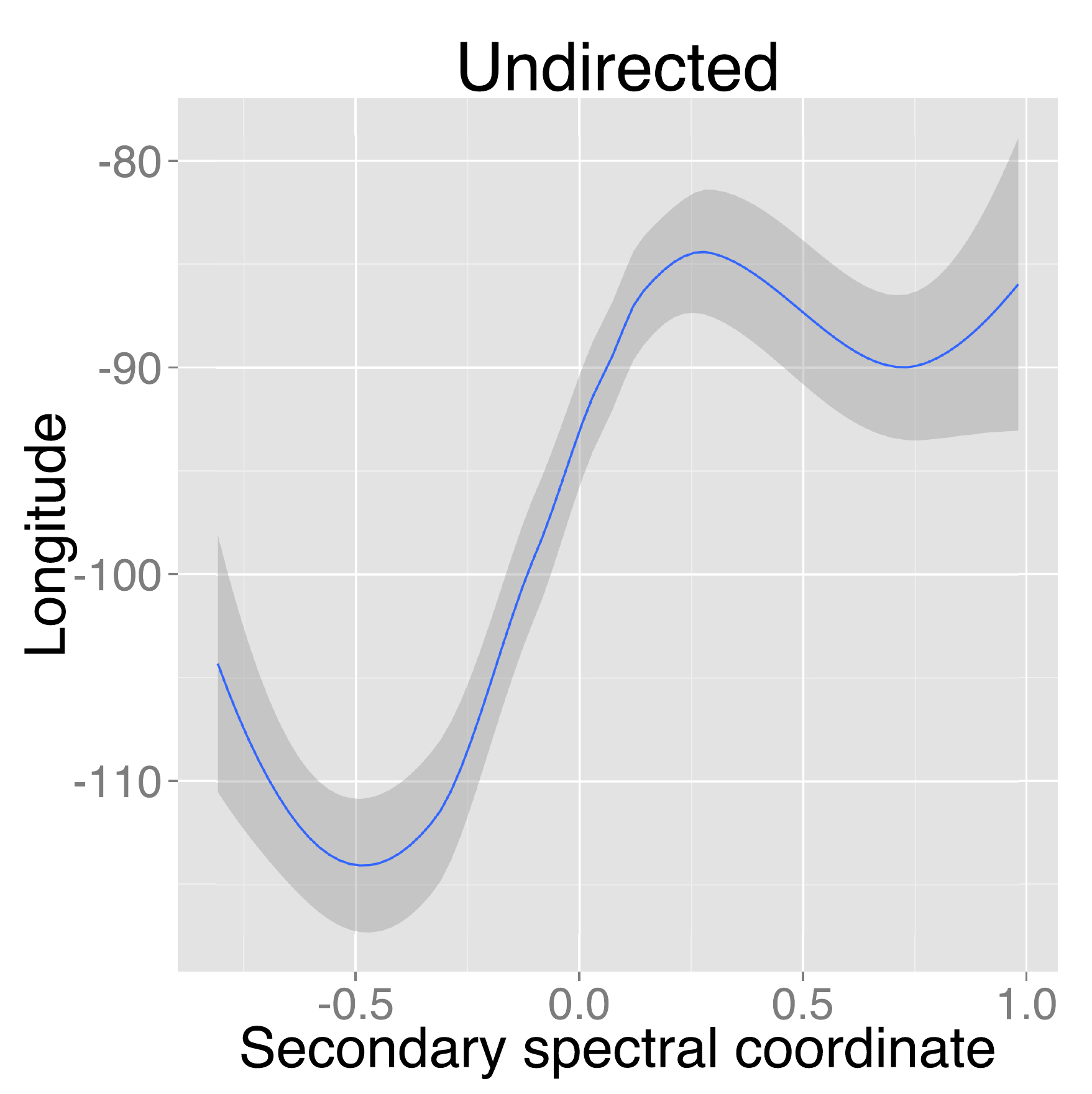}
\includegraphics[width=5cm]{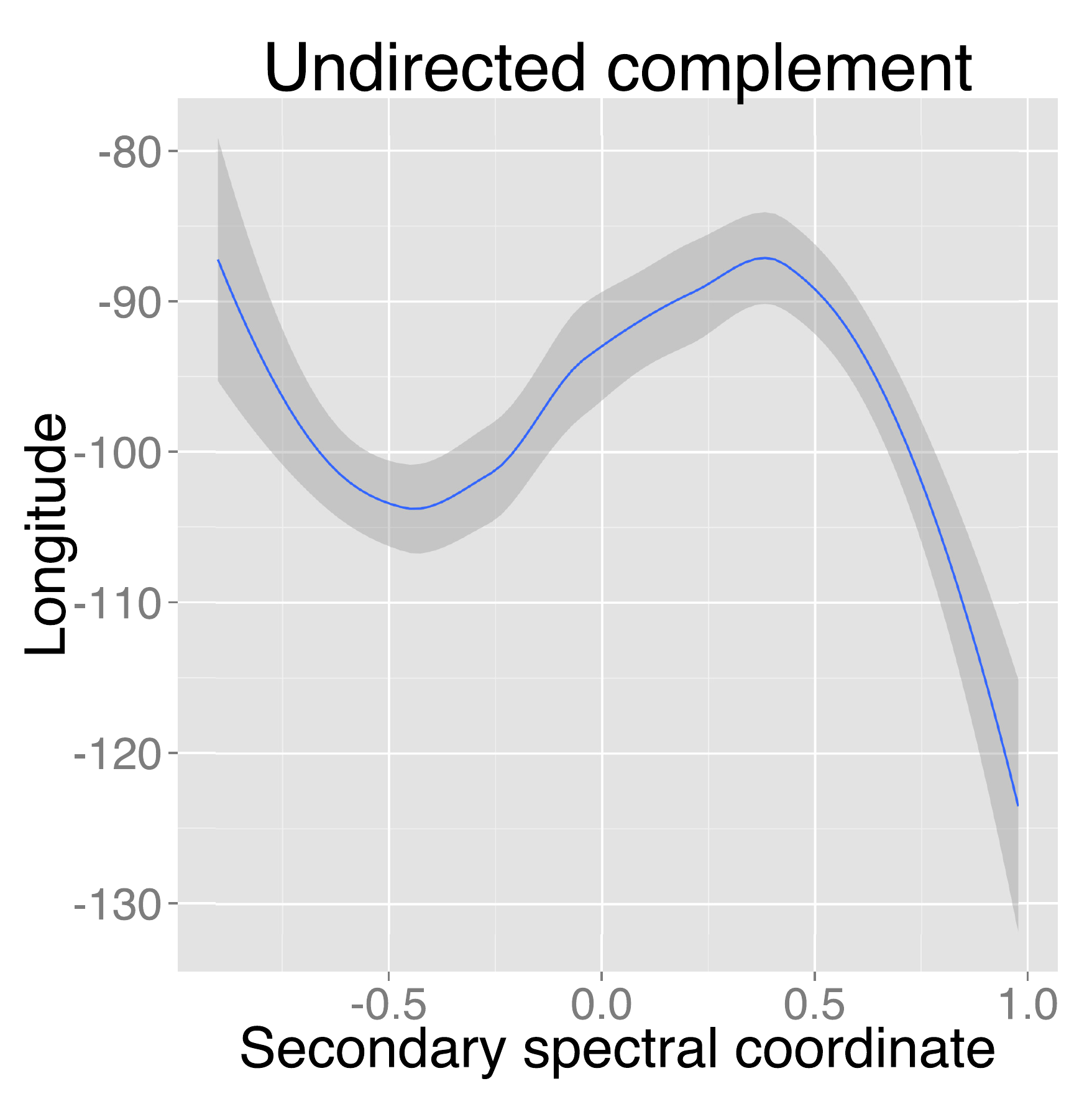}
\caption{%
Loess regressions of city metropolitan population against the primary spectral
coordinate (top) and longitude against secondary spectral coordinate (bottom)
for the motif (left), undirected (middle), and undirected complement (right)
adjacency matrices.
}
\label{fig:airports_loess}
\end{figure}

\begin{figure}[h]
\centering
\includegraphics[width=4.5cm]{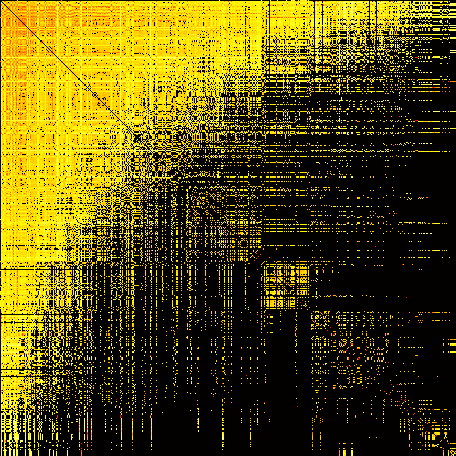}
\caption{%
Visualization of transportation reachability network.  Nodes are ordered by the
spectral ordering provided by the motif adjacency matrix.  A black dot means no
edge exists in the network.  For the edges in the network, lighter colors mean
longer estimated travel times.
}
\label{fig:spectral_ordered_adj}
\end{figure}
\clearpage

\section{Additional case studies}

We next use motif-based clustering to analyze several additional networks.  Our
main goal is to show that motif-based clusters find markedly different
structures in many real-world networks compared to edge-based clusters.  For the
case of a transcription regulation network of yeast, we also show that
motif-based clustering more accurately finds known functional modules compared
to existing methods.  On the English Wikipedia article network and the Twitter
network, we identify motifs that find anomalous clusters.  On the Stanford web
graph and in collaboration networks, we use motifs that have previously been
studied in the literature and see how they reveal organizational structure in
the networks.

\subsection{Motif $\motiftype{3}{6}$ in the Florida Bay food web}
\label{sec:foodweb}

We now apply the higher-order clustering framework on the Florida Bay ecosystem
food web~\cite{ulanowicz1998network}.  The dataset was downloaded from
\url{http://vlado.fmf.uni-lj.si/pub/networks/data/bio/foodweb/Florida.paj}.
In this network, the nodes are compartments (roughly, organisms and species) and
the edges represent directed carbon exchange (in many cases, this means that
species $j$ eats species $i$).  Motifs model energy flow patterns between
several species.

\subsubsection{Identifying higher-order modular organization}

In this case study, we use the framework to identify higher-order modular
organization of networks.  We focus on three motifs: $\motiftype{3}{5}$
corresponds to a hierarchical flow of energy where species $i$ and $j$ are
energy sources (prey) for species $k$, and $i$ is also an energy source for $j$;
$\motiftype{3}{6}$ models two species that prey on each other and then compete
to feed on a common third species; and $\motiftype{3}{8}$ describes a single
species serving as an energy source for two non-interacting species.  Motif
$\motiftype{3}{5}$ is considered a building block for food
webs~\cite{bascompte2005interaction,bascompte2009disentangling}, and the
prevalence of motif $\motiftype{3}{6}$ is predicted by a certain niche
model~\cite{stouffer2007evidence}.

The framework reveals that low motif conductance (high-quality) clusters only
exist for motif $\motiftype{3}{6}$ (motif conductance 0.12), whereas clusters
based on motifs $\motiftype{3}{5}$ or $\motiftype{3}{8}$ have high motif
conductance (see Figure~\ref{fig:foodweb}).  In fact, the motif Cheeger
inequality (Theorem~\ref{thm:motif_cheeger}) guarantees that clustering based on
motif $\motiftype{3}{5}$ or $\motiftype{3}{8}$ will always have larger motif
conductance that clustering based on $\motiftype{3}{6}$.  The inequality says
that the motif conductance for any cluster in a connected motif adjacency matrix
is at least half of the second smallest eigenvalue of the motif-normalized
Laplacian.  However, finding the cluster with optimal conductance is still
computationally infeasible in general~\cite{wagner1993between}.

The lower bounds using the largest connected component of the motif adjacency
matrix for motifs $\motiftype{3}{5}$, $\motiftype{3}{6}$, and $\motiftype{3}{8}$
were 0.2195, 0.0335, and 0.2191, and the clusters found by the
Algorithm~\ref{alg:motif_fiedler} had motif conductances of 0.4414, 0.1200, and
0.4145.  Thus, the cluster $S$ found by the algorithm for $\motiftype{3}{6}$ has
smaller motif $\motiftype{3}{6}$-conductance (0.12) than any possible cluster's
motif-$\motiftype{3}{5}$ or motif-$\motiftype{3}{8}$ conductance.  To state this
formally, let $C$ be the cluster found by the algorithm for motif
$\motiftype{3}{6}$ and let $H_{M}$ be the largest connected component of motif
adjacency matrix for motif $M$. Then
\begin{align}
\phi_{\motiftype{3}{6}}(H_{\motiftype{3}{6}}, C)
\le
\min\left\{
\min_{S} \phi_{\motiftype{3}{5}}(H_{\motiftype{3}{5}}, S),\;
\min_{S} \phi_{\motiftype{3}{8}}(H_{\motiftype{3}{8}}, S)
\right\}.
\end{align}
This means that, in terms of motif conductance, any cluster based on motifs
$\motiftype{3}{5}$ or $\motiftype{3}{8}$ is worse than the cluser found by the
algorithm in Theorem~\ref{thm:motif_cheeger} for motif $\motiftype{3}{6}$.
We note that the same conclusions hold for edge-based clustering.  For
motif $\medge$, the lower bound on conductance was 0.2194 and the cluster found
by the algorithm had conductance 0.4083.

\subsubsection{Analysis of higher-order modular organization}

Subsequently, we used motif $\motiftype{3}{6}$ to cluster the food web,
revealing four clusters (Figure~\ref{fig:foodweb}).  Three represent well-known
aquatic layers: (i) the pelagic system; (ii) the benthic predators of eels,
toadfish, and crabs; (iii) the sea-floor ecosystem of macroinvertebrates.  The
fourth cluster identifies microfauna supported by particulate organic carbon in
water and free bacteria. Table~\ref{tab:foodweb_classification} lists the nodes
in each cluster.

We also measured how well the motif-based clusters correlate to known ground
truth system subgroup classifications of the nodes~\cite{ulanowicz1998network}.  These classes
are microbial, zooplankton, and sediment organism microfauna; detritus; pelagic,
demersal, and benthic fishes; demseral, seagrass, and algae producers; and
macroinvertebrates (Table~\ref{tab:foodweb_classification}).%
\footnote{The classifications are also available on our project web page: \projecturl.}
We also consider a set of labels which does not include the subclassification
for microfauna and producers.  In this case, the labels are microfauna;
detritus; pelagic, demersal, and benthic fishes; producers; and
macroinvertebrates.

To quantify how well the clusters found by motif-based clustering reflect the
ground truth labels, we used several standard evaluation criteria: adjusted rand
index, F1 score, normalized mutual information, and
purity~\cite{manning2008introduction}.  We compared these results to the
clusters of several methods using the same evaluation criteria.  In total, we
evaluated six methods:
\begin{enumerate}
\item Motif-based clustering with the embedding + k-means algorithm
(Algorithm~\ref{alg:motif_ngetal}) with 500 iterations of k-means.
\item Motif-based clustering with recursive bi-partitioning (repeated application of
Algorithm~\ref{alg:motif_fiedler} on the largest remaining compoennt).  The
process continues to cut the largest cluster until there are 4 total.
\item Edge-based clustering with the embedding + k-means algorithm, again with 500
iterations of k-means.
\item Edge-based clustering with recursive bi-partitioning with the same
partitioning process.
\item The Infomap algorithm.
\item The Louvain method.
\end{enumerate}

For the first four algorithms, we control the number of clusters, which we set
to 4.  For the last two algorithms, we cannot control the number of clusters.
However, both methods found 4 clusters.

Table~\ref{tab:foodweb_performance} shows that the motif-based clustering by
embedding + k-means had the best performance for each classification criterion
on both classifications.  We conclude that the organization of compartments in
the Florida Bay foodweb are better described motif $\motiftype{3}{6}$ than by
edges.

\subsubsection{Connected components of the motif adjacency matrices}

Finally, we discuss the discuss the preprocessing step of our method, where we
compute computed connected components of the motif adjacency matrices.  The
original network has 128 nodes and 2106 edges.  The largest connected component
of the motif adjacency matrix for motif $\motiftype{3}{5}$ contains 127 of the
128 nodes.  The node corresponding to the compartment of ``roots'' is the only
node not in the largest connected component.  The two largest connected
components of the motif adjacency matrix for motif $\motiftype{3}{6}$ contain 12
and 50 nodes.  The remaining 66 nodes are isolated.
Table~\ref{tab:foodweb_m6_components} lists the nodes in each component.  We
note that the group of 12 nodes corresponds to the green cluster in
Figure~\ref{fig:foodweb}.  The motif adjacency matrix for $\motiftype{3}{8}$ is
connected.  The original network is weakly connected, so the motif adjacency
matrix for $\medge$ is also connected.

\clearpage
\begin{figure}[htb]
\centering
\includegraphics[width=0.92\textwidth]{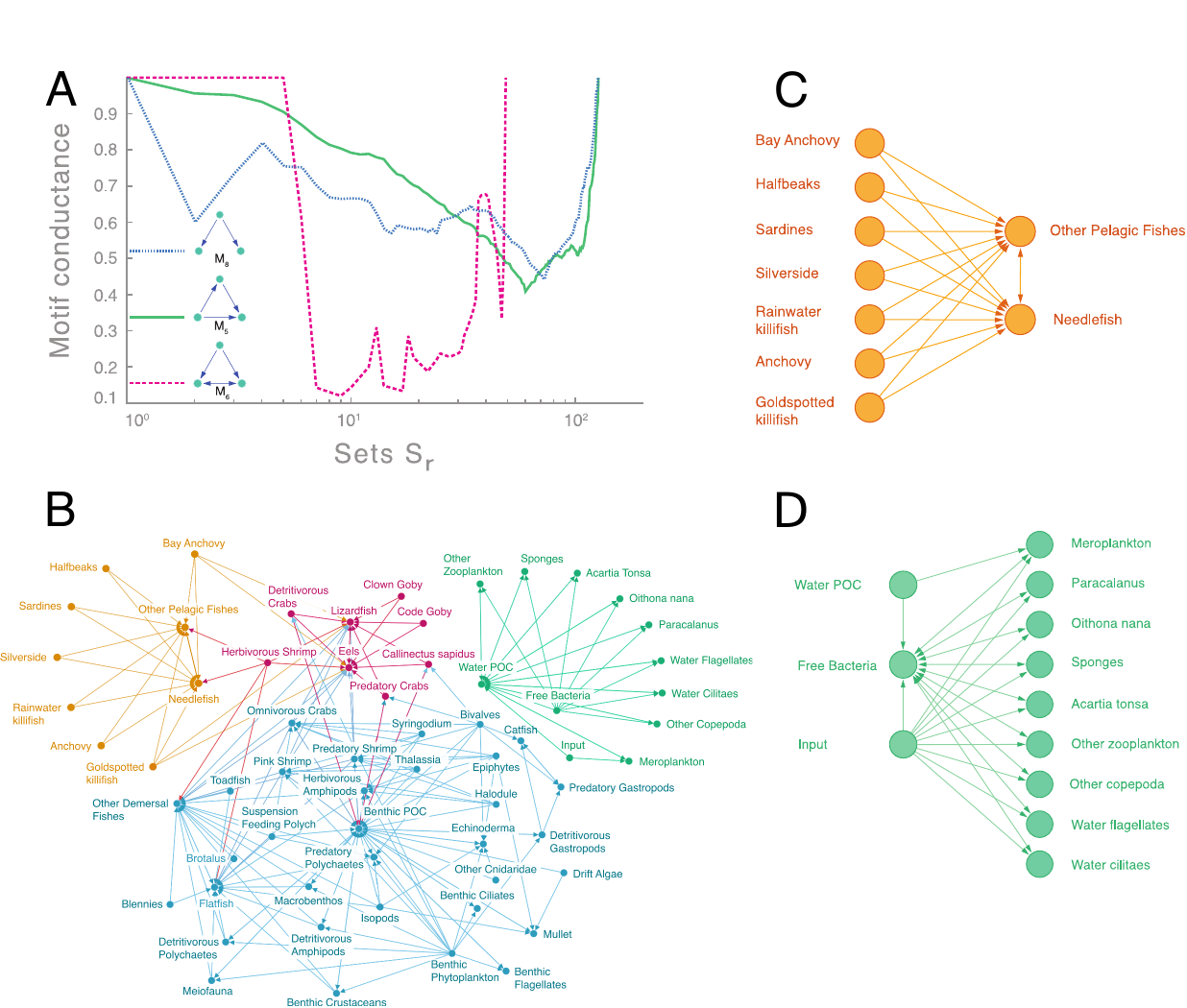}
\caption{%
{\bf Higher-order organization of the Florida Bay food web.}
{\bf A:}
Sweep profile plot ($\newmotifcond{G}{M}{S}$ as a function of $S$ from the sweep in
Algorithm~\ref{alg:motif_fiedler}) for different motifs on the Florida Bay
ecosystem food web~\cite{ulanowicz1998network}. A priori it is not clear whether
the network is organized based on a given motif.  For example, motifs
$\motiftype{3}{5}$ (green) and $\motiftype{3}{8}$ (blue) do not reveal any
higher-order organization (motif conductance has high values). However, the
downward spikes of the red curve show that $\motiftype{3}{6}$ reveals rich
higher-order modular structure~\cite{leskovec2009community}.  Ecologically,
motif $\motiftype{3}{6}$ corresponds to two species mutually feeding on each
other and also preying on a common third species.
{\bf B:}
Clustering of the food web based on motif $\motiftype{3}{6}$.  (For
illustration, edges not participating in at least one instance of the motif are
omitted.)  The clustering reveals three known aquatic layers: pelagic fishes
(yellow), benthic fishes and crabs (red), and sea-floor macroinvertebrates
(blue) as well as a cluster of microfauna and detritus (green).  Our framework
identifies these modules with higher accuracy (61\%) than existing methods
(48--53\%).
{\bf C:} 
A higher-order cluster (yellow nodes in (B)) shows how motif
$\motiftype{3}{6}$ occurs in the pelagic layer.  The needlefish and other
pelagic fishes eat each other while several other fishes are prey for these two
species.
{\bf D:}
Another higher-order cluster (green nodes in (B)) shows how motif
$\motiftype{3}{6}$ occurs between microorganisms.  Here, several microfauna
decompose into Particulate Organic Carbon in the water (water POC) but also
consume water POC.  Free bacteria serves as an energy source for both the
microfauna and water POC.
}
\label{fig:foodweb}
\end{figure}

\clearpage
\begin{table}[t]
\centering
\caption{%
Connected components of the Florida Bay foodweb motif adjacency matrix for motif
$\motiftype{3}{6}$.  There are 50 nodes in component 1, 12 nodes in component 2,
and 66 isolated nodes.
}
\scalebox{0.57}{
\begin{tabular}{l c @{\hskip 5cm} l c}
\toprule
\multicolumn{2}{l}{Two largest components} & Isolated nodes \\
Compartment (node) & Component index & Compartment (node) & \\
\midrule
Benthic Phytoplankton & 1 & Barracuda & \\
Thalassia & 1 & \SI{2}{\micro\metre} Spherical Phytoplankt & \\
Halodule & 1 & Synedococcus & \\
Syringodium & 1 & Oscillatoria & \\
Drift Algae & 1 & Small Diatoms ($<$\SI{20}{\micro\metre}) & \\
Epiphytes & 1 & Big Diatoms ($>$\SI{20}{\micro\metre}) & \\
Predatory Gastropods & 1 & Dinoflagellates & \\
Detritivorous Polychaetes & 1 & Other Phytoplankton & \\
Predatory Polychaetes & 1  & Roots & \\
Suspension Feeding Polych & 1 & Coral & \\
Macrobenthos & 1  & Epiphytic Gastropods & \\
Benthic Crustaceans & 1  & Thor Floridanus & \\
Detritivorous Amphipods & 1 & Lobster & \\
Herbivorous Amphipods & 1 & Stone Crab & \\
Isopods & 1 & Sharks & \\
Herbivorous Shrimp & 1 & Rays & \\
Predatory Shrimp & 1 & Tarpon & \\
Pink Shrimp & 1 & Bonefish & \\
Benthic Flagellates & 1 & Other Killifish & \\
Benthic Ciliates & 1 & Snook & \\
Meiofauna & 1 & Sailfin Molly & \\
Other Cnidaridae & 1 & Hawksbill Turtle & \\ 
Silverside & 1 & Dolphin & \\
Echinoderma & 1 & Other Horsefish & \\
Bivalves & 1 & Gulf Pipefish & \\
Detritivorous Gastropods & 1 & Dwarf Seahorse & \\
Detritivorous Crabs & 1 & Grouper & \\
Omnivorous Crabs & 1 & Jacks & \\
Predatory Crabs & 1 & Pompano & \\
Callinectes sapidus (blue crab) & 1 & Other Snapper & \\
Mullet & 1 & Gray Snapper & \\
Blennies & 1 & Mojarra & \\
Code Goby & 1 & Grunt & \\
Clown Goby & 1 & Porgy & \\
Flatfish & 1 & Pinfish & \\
Sardines & 1 & Scianids & \\
Anchovy & 1 & Spotted Seatrout & \\
Bay Anchovy & 1 & Red Drum & \\
Lizardfish & 1 & Spadefish & \\
Catfish & 1 & Parrotfish & \\
Eels & 1 & Mackerel & \\
Toadfish & 1 & Filefishes & \\
Brotalus & 1 & Puffer & \\
Halfbeaks & 1 & Loon & \\
Needlefish & 1 & Greeb & \\
Goldspotted killifish & 1 & Pelican & \\
Rainwater killifish & 1 & Comorant & \\
Other Pelagic Fishes & 1 & Big Herons and Egrets & \\
Other Demersal Fishes & 1 & Small Herons and Egrets & \\
Benthic Particulate Organic Carbon (Benthic POC) & 1 & Ibis & \\
Free Bacteria & 2 & Roseate Spoonbill & \\
Water Flagellates & 2 & Herbivorous Ducks & \\
Water Cilitaes & 2 & Omnivorous Ducks & \\
Acartia Tonsa & 2 & Predatory Ducks & \\
Oithona nana & 2 & Raptors & \\
Paracalanus & 2 & Gruiformes & \\
Other Copepoda & 2 & Small Shorebirds & \\
Meroplankton & 2 & Gulls and Terns & \\
Other Zooplankton & 2 & Kingfisher & \\
Sponges & 2 & Crocodiles & \\
Water Particulate Organic Carbon (Water POC) & 2 & Loggerhead Turtle & \\
Input & 2 & Green Turtle & \\
& & Manatee & \\
& & Dissolved Organic Carbon (DOC) & \\
& & Output & \\
& & Respiration & \\

\bottomrule
\end{tabular}
}
\label{tab:foodweb_m6_components}
\end{table}

\clearpage
\begin{table}[t]
\centering
\caption{%
Ecological classification of nodes in the Florida Bay foodweb.  Colors
correspond to the colors in the clustering of Figure~\ref{fig:foodweb}.
}
\scalebox{0.62}{
\begin{tabular}{l l l l}
\toprule
Compartment (node) & Classification 1 & Classification 2 & Assignment \\
\midrule
Free Bacteria             & Microbial microfauna         & Microfauna         & Green  \\
Water Flagellates         & Microbial microfauna         & Microfauna         & Green  \\
Water Cilitaes            & Microbial microfauna         & Microfauna         & Green  \\
Acartia Tonsa             & Zooplankton microfauna       & Microfauna         & Green  \\
Oithona nana              & Zooplankton microfauna       & Microfauna         & Green  \\
Paracalanus               & Zooplankton microfauna       & Microfauna         & Green  \\
Other Copepoda            & Zooplankton microfauna       & Microfauna         & Green  \\
Meroplankton              & Zooplankton microfauna       & Microfauna         & Green  \\
Other Zooplankton         & Zooplankton microfauna       & Microfauna         & Green  \\
Sponges                   & Macroinvertebrates           & Macroinvertebrates & Green  \\
Water POC                 & Detritus                     & Detritus           & Green  \\
Input                     & Detritus                     & Detritus           & Green  \\
Sardines                  & Pelagic Fishes               & Pelagic Fishes     & Yellow \\
Anchovy                   & Pelagic Fishes               & Pelagic Fishes     & Yellow \\
Bay Anchovy               & Pelagic Fishes               & Pelagic Fishes     & Yellow \\
Halfbeaks                 & Pelagic Fishes               & Pelagic Fishes     & Yellow \\
Needlefish                & Pelagic Fishes               & Pelagic Fishes     & Yellow \\
Goldspotted killifish     & Fishes Demersal              & Fishes Demersal    & Yellow \\
Rainwater killifish       & Fishes Demersal              & Fishes Demersal    & Yellow \\
Silverside                & Pelagic Fishes               & Pelagic Fishes     & Yellow \\
Other Pelagic Fishes      & Pelagic Fishes               & Pelagic Fishes     & Yellow \\
Detritivorous Crabs       & Macroinvertebrates           & Macroinvertebrates & Red    \\
Predatory Crabs           & Macroinvertebrates           & Macroinvertebrates & Red    \\
Callinectus sapidus       & Macroinvertebrates           & Macroinvertebrates & Red    \\
Lizardfish                & Benthic Fishes               & Benthic Fishes     & Red    \\
Eels                      & Fishes Demersal              & Fishes Demersal    & Red    \\
Code Goby                 & Benthic Fishes               & Benthic Fishes     & Red    \\
Clown Goby                & Benthic Fishes               & Benthic Fishes     & Red    \\
Herbivorous Shrimp        & Macroinvertebrates           & Macroinvertebrates & Red    \\
Benthic Phytoplankton     & Producer Demersal            & Producer           & Blue   \\
Thalassia                 & Producer Seagrass            & Producer           & Blue   \\
Halodule                  & Producer Seagrass            & Producer           & Blue   \\
Syringodium               & Producer Seagrass            & Producer           & Blue   \\
Drift Algae               & Producer Algae               & Producer           & Blue   \\
Epiphytes                 & Producer Algae               & Producer           & Blue   \\
Benthic Flagellates       & Sediment Organism microfauna & Microfauna         & Blue   \\
Benthic Ciliates          & Sediment Organism microfauna & Microfauna         & Blue   \\
Meiofauna                 & Sediment Organism microfauna & Microfauna         & Blue   \\
Other Cnidaridae          & Macroinvertebrates           & Macroinvertebrates & Blue   \\
Echinoderma               & Macroinvertebrates           & Macroinvertebrates & Blue   \\
Bivalves                  & Macroinvertebrates           & Macroinvertebrates & Blue   \\
Detritivorous Gastropods  & Macroinvertebrates           & Macroinvertebrates & Blue   \\
Predatory Gastropods      & Macroinvertebrates           & Macroinvertebrates & Blue   \\
Detritivorous Polychaetes & Macroinvertebrates           & Macroinvertebrates & Blue   \\
Predatory Polychaetes     & Macroinvertebrates           & Macroinvertebrates & Blue   \\
Suspension Feeding Polych & Macroinvertebrates           & Macroinvertebrates & Blue   \\
Macrobenthos              & Macroinvertebrates           & Macroinvertebrates & Blue   \\
Benthic Crustaceans       & Macroinvertebrates           & Macroinvertebrates & Blue   \\
Detritivorous Amphipods   & Macroinvertebrates           & Macroinvertebrates & Blue   \\
Herbivorous Amphipods     & Macroinvertebrates           & Macroinvertebrates & Blue   \\
Isopods                   & Macroinvertebrates           & Macroinvertebrates & Blue   \\
Predatory Shrimp          & Macroinvertebrates           & Macroinvertebrates & Blue   \\
Pink Shrimp               & Macroinvertebrates           & Macroinvertebrates & Blue   \\
Omnivorous Crabs          & Macroinvertebrates           & Macroinvertebrates & Blue   \\
Catfish                   & Benthic Fishes               & Benthic Fishes     & Blue   \\
Mullet                    & Pelagic Fishes               & Pelagic Fishes     & Blue   \\
Benthic POC               & Detritus                     & Detritus           & Blue   \\
Toadfish                  & Benthic Fishes               & Benthic Fishes     & Blue   \\
Brotalus                  & Fishes Demersal              & Fishes Demersal    & Blue   \\
Blennies                  & Benthic Fishes               & Benthic Fishes     & Blue   \\
Flatfish                  & Benthic Fishes               & Benthic Fishes     & Blue   \\
Other Demersal Fishes     & Fishes Demersal              & Fishes Demersal    & Blue   \\

\bottomrule
\end{tabular}
}
\label{tab:foodweb_classification}
\end{table}

\clearpage
\begin{table}[t]\def\arraystretch{1.3}
\centering
\caption{%
Comparison of motif-based algorithms against other methods in finding ground
truth structure in the Florida Bay food web~\cite{ulanowicz1998network}.
Performance for identifying the two classifications provided in
Table~\ref{tab:foodweb_classification} was evaluated based on Adjusted Rand
Index (ARI), F1 score, Normalized Mutual Information (NMI), and Purity.  In all
cases, the motif-based methods have the best performance.
}
\scalebox{0.75}{
\begin{tabular}{l @{\hskip 1.25cm} l c @{\hskip 0.5cm} c @{\hskip 0.5cm} c @{\hskip 0.5cm} c @{\hskip 0.5cm} c @{\hskip  0.5cm} c } %
\toprule
& Evaluation & \multicolumn{1}{l}{Motif embedding} & \multicolumn{1}{l}{Motif recursive}
& \multicolumn{1}{l}{Edge embedding}  & \multicolumn{1}{l}{Edge recursive}
& \phantom{XX}Infomap\phantom{XX}         & Louvain \\
& 
& \multicolumn{1}{l}{+ k-means}       & \multicolumn{1}{l}{bi-partitioning}
& \multicolumn{1}{l}{+ k-means}       & \multicolumn{1}{l}{bi-partitioning} 
&                                     & \\ \midrule
\parbox[t]{2mm}{\multirow{4}{*}{\rotatebox[origin=c]{90}{Classification 1}}}
& ARI    & \textbf{0.3005} & 0.2156 & 0.1564 & 0.1226 & 0.1423 & 0.2207 \\
& F1     & \textbf{0.4437} & 0.3853 & 0.3180 & 0.2888 & 0.3100 & 0.4068 \\
& NMI    & \textbf{0.5040} & 0.4468 & 0.4112 & 0.3879 & 0.4035 & 0.4220 \\
& Purity & \textbf{0.5645} & 0.5323 & 0.4032 & 0.4194 & 0.4194 & 0.5323 \\ \midrule
\parbox[t]{2mm}{\multirow{4}{*}{\rotatebox[origin=c]{90}{Classification 2}}}
& ARI    & \textbf{0.3265} & 0.2356 & 0.1814 & 0.1190 & 0.1592 & 0.2207 \\
& F1     & \textbf{0.4802} & 0.4214 & 0.3550 & 0.3035 & 0.3416 & 0.4068 \\
& NMI    & \textbf{0.4822} & 0.4185 & 0.3533 & 0.3034 & 0.3471 & 0.4220 \\
& Purity & \textbf{0.6129} & 0.5806 & 0.4839 & 0.4355 & 0.4677 & 0.5323 \\

\bottomrule
\end{tabular}
}
\label{tab:foodweb_performance}
\end{table}

\subsection{Coherent feedforward loops in the \emph{S.~cerevisiae}
  transcriptional regulation network}

In this network, each node is an operon (a group of genes in a mRNA molecule),
and a directed edge from operon $i$ to operon $j$ means that $i$ is regulated by
a transcriptional factor encoded by $j$~\cite{alon2007network}.  Edges are
directed and signed.  A positive sign represents activation and a negative sign
represents repression.  The network data was downloaded from
\url{http://www.weizmann.ac.il/mcb/UriAlon/sites/mcb.UriAlon/files/uploads/NMpaper/yeastdata.mat}
and
\url{http://www.weizmann.ac.il/mcb/UriAlon/sites/mcb.UriAlon/files/uploads/DownloadableData/list_of_ffls.pdf}.

For this case study, we examine the coherent feedforward loop motif (see
Figure~\ref{fig:yeast}), which act as sign-sensitive delay elements in
transcriptional regulation
networks~\cite{mangan2003structure,mangan2003coherent}.  Formally, the
feedforward loop is represented by the following signed motifs
\begin{align}\label{eqn:cffls}
B_1 = \begin{bmatrix} 0 & + & + \\ 0 & 0 & + \\ 0 & 0 & 0 \end{bmatrix},\;
B_2 = \begin{bmatrix} 0 & - & - \\ 0 & 0 & + \\ 0 & 0 & 0 \end{bmatrix},\;
B_3 = \begin{bmatrix} 0 & + & - \\ 0 & 0 & - \\ 0 & 0 & 0 \end{bmatrix},\;
B_4 = \begin{bmatrix} 0 & - & + \\ 0 & 0 & - \\ 0 & 0 & 0 \end{bmatrix}.
\end{align}
These motifs have the same edge pattern and only differ in sign.  All of the
motifs are simple ($\anchorset = \{1, 2, 3\}$).  For our analysis, we consider
all coherent feedforward loops that are subgraphs on the induced subgraph of any
three nodes.  However, there is only one instance where the coherent feedforward
loop itself is a subgraph but not an induced subgraph on three nodes.
Specifically, the induced subgraph by DAL80, GAT1, and GLN3 contains a
bi-directional edge between DAL80 and GAT1, unidirectional edges from DAL80 and
GAT1 to GLN3.

\subsubsection{Connected components of the adjacency matrices}

\begin{table}[tb]
\centering
\caption{%
Connected components of size greater than one for the motif adjacency matrix in
the \emph{S.~cerevisiae} network for the coherent feedforward loop.
}
\scalebox{0.9}{
\begin{tabular}{c@{\hskip 0.5cm} l l}
\toprule
Size & operons  \\ \midrule
18 & ALPHA1, CLN1, CLN2, GAL11, HO, MCM1, MFALPHA1, PHO5, SIN3, \\
     & SPT16, STA1, STA2, STE3, STE6, SWI1, SWI4/SWI6, TUP1, SNF2/SWI1 \\
9 & HXT11, HXT9, IPT1, PDR1, PDR3, PDR5, SNQ2, YOR1, YRR1  \\
9 & GCN4, ILV1, ILV2, ILV5, LEU3, LEU4, MET16, MET17, MET4  \\
6 & CHO1, CHO2, INO2, INO2/INO4, OPI3, UME6 \\
6 & DAL80, DAL80/GZF3, GAP1, GAT1, GLN1, GLN3 &  \\
5 & CYC1, GAL1, GAL4, MIG1, HAP2/3/4/5 \\
3 & ADH2, CCR4, SPT6 \\
3 & CDC19, RAP1, REB1 \\
3 & DIT1, IME1, RIM101 \\
\bottomrule
\end{tabular}
}
\label{tab:yeast_ccs}
\end{table}

Again, we analyze the component structure of the motif adjacency matrix as a
pre-processing step.  The original network consists of 690 nodes and 1082 edges,
and its largest weakly connected component consists of 664 nodes and 1066 edges.
Every coherent feedforward loop in the network resides in the largest weakly
connected component, so we subsequently consider this sub-network in the
following analysis.  Of the 664 nodes in the network, only 62 participate in a
coherent feedforward loop.  Forming the motif adjacency matrix results in nine
connected components, of sizes 18, 9, 9, 6, 6, 5, 3, 3, and 3.  The operons for
the connected components consisting of more than one node is listed in
Table~\ref{tab:yeast_ccs}.

\subsubsection{Comparison against existing methods}

\begin{figure}[htb]
\centering
\includegraphics[width=1\textwidth]{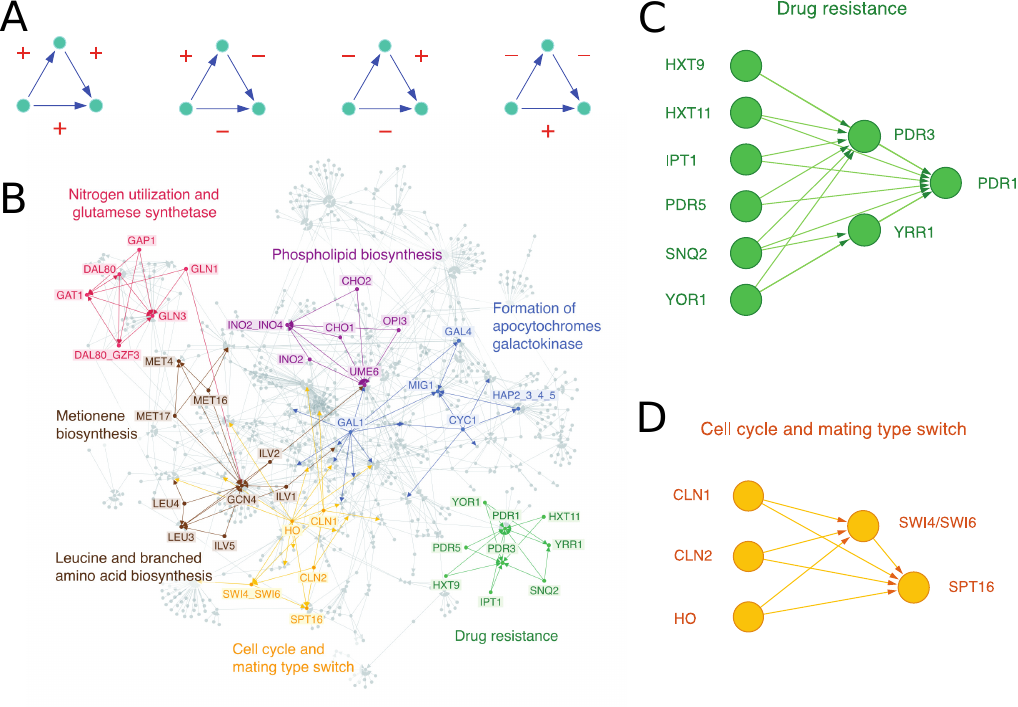}
\caption{%
Higher-order organization of the \emph{S.~cerevisiae} transcriptional regulation
network.
{\bf A:}
The four higher-order structures used by our higher-order clustering method,
which can model signed motifs. These are coherent feedfoward loop motifs, which
act as sign-sensitive delay elements in transcriptional regulation
networks~\cite{mangan2003coherent}.  The edge signs refer to activation
(positive) or repression (negative).
{\bf B:}
Six higher-order clusters revealed by the motifs in (A).  Clusters show
functional modules consisting of several motifs (coherent feedforward loops),
which were previously studied individually~\cite{mangan2003structure}.  The
higher-order clustering framework identifies the functional modules with higher
accuracy (97\%) than existing methods (68--82\%).
{\bf C--D:}
Two higher-order clusters from (B). In these clusters, all edges have positive
sign.  The functionality of the motifs in the modules correspond to drug
resistance (C) or cell cycle and mating type match (D).  The clustering suggests
that coherent feedforward loops function together as a single processing unit
rather than as independent elements.
}
\label{fig:yeast}
\end{figure}

We note that, although the original network is connected, the motif adjacency
matrix corresponds to a disconnected graph.  This already reveals much of the
structure in the network (Figure~\ref{fig:yeast}).  Indeed, this ``shattering''
of the graph into components for the feedforward loop has previously been
observed in transcriptional regulation networks~\cite{dobrin2004aggregation}.
We additionally used Algorithm~\ref{alg:motif_fiedler} to partition the largest
connected component of the motif adjacency matrix (consisting of 18 nodes).
This revealed the cluster $\{$CLN2, CLN1, SWI4/SWI6, SPT16, HO$\}$, which
contains three coherent feedforward loops (Figure~\ref{fig:yeast}).  All three
instances of the motif correspond to the function ``cell cycle and mating type
switch".  The motifs in this cluster are the only feedforward loops for which
the function is described in Reference~\cite{mangan2003structure}.  Using the
same procedure on the undirected version of the induced subgraph of the 18 nodes
(\emph{i.e.}, using motif $\medge$) results in the cluster $\{$CLN1, CLN2,
SPT16, SWI4/SWI6 $\}$.  This cluster breaks the coherent feedforward loop formed
by HO, SWI4/SWI6, and SPT16.

We also evaluated our method based on the classification of motif
functionality~\cite{mangan2003structure}.%
\footnote{The functionalities may be downloaded from our project web page: \projecturl.}
In total, there are 12 different
functionalities and 29 instances of labeled coherent feedforward loops.  We
considered the motif-based clustering of the graph to be the connected
components of the motif adjacency matrix with the additional partition of the
largest connected component.  To form an edge-based clustering, we used the
embedding + k-means algorithm on the undirected graph (\emph{i.e.}, motif
$\medge$) with $k = 12$ clusters.  We also clustered the graph using Infomap and
the Louvain method.  Table~\ref{tab:ffl_classification} summarizes the results.
We see that the motif-based clustering coherently labels all 29 motifs in the
sense that the three nodes in every instance of a labeled motif is placed in the
same cluster.  The edge-based spectral, Infomap, and Louvain clustering
coherently labeled 25, 23, and 23 motifs, respectively.

We measured the accuracy of each clustering method as the rand
index~\cite{manning2008introduction} on the coherently labeled motifs,
multiplied by the fraction of coherently labeled motifs.  The motif-based
clustering had the highest accuracy.  We conclude that motif-based clustering
provides an advantage over edge-based clustering methods in identifying
functionalities of coherent feedforward loops in the the \emph{S.~cerevisiae}
transcriptional regulation network.

\clearpage
\begin{table}[h]
\centering
\caption{%
Classification of coherent feedforward loop motifs by several clustering
methods.  In a given motif instance, we say that it is coherently labeled
if the nodes comprising the motif are in the same cluster.  If a motif is
not coherently labeled, a ``-1" is listed. 
The accuracy is the rand index on the labels and motif
functionality on coherently labeled motifs, multiplied by the fraction of
coherently labeled motifs.
}
\scalebox{0.56}{
\begin{tabular}{l l l l l c c c c}
\toprule
                           & \multicolumn{3}{c}{Motif nodes} & Function   & \multicolumn{4}{c}{Class label}                                                                          \\
                           &                                 &            &            &                                              & Motif-based & Edge-based & Infomap & Louvain \\ \midrule
                           & GAL11                           & ALPHA1     & MFALPHA1   & pheromone response                           & 1           & 1          & -1      & -1      \\
                           & GCN4                            & MET4       & MET16      & Metionine biosynthesis                       & 2           & 2          & 1       & -1      \\
                           & GCN4                            & MET4       & MET17      & Metionine biosynthesis                       & 2           & 2          & 1       & -1      \\
                           & GCN4                            & LEU3       & ILV1       & Leucine and branched amino acid biosynthesis & 2           & 2          & 1       & 1       \\
                           & GCN4                            & LEU3       & ILV2       & Leucine and branched amino acid biosynthesis & 2           & 2          & 1       & 1       \\          
                           & GCN4                            & LEU3       & ILV5       & Leucine and branched amino acid biosynthesis & 2           & 2          & 1       & 1       \\         
                           & GCN4                            & LEU3       & LEU4       & Leucine and branched amino acid biosynthesis & 2           & 2          & 1       & 1       \\
                           & GLN3                            & GAT1       & GAP1       & Nitrogen utilization                         & 3           & 3          & 1       & 2       \\
                           & GLN3                            & GAT1       & DAL80      & Nitrogen utilization                         & 3           & 3          & 1       & 2       \\
                           & GLN3                            & GAT1       & DAL80/GZF3 & Glutamate synthetase                         & 3           & 3          & 1       & 2       \\
                           & GLN3                            & GAT1       & GLN1       & Glutamate synthetase                         & 3           & 3          & 1       & 2       \\
                           & MIG1                            & HAP2/3/4/5 & CYC1       & formation of apocytochromes                  & 4           & 4          & -1      & -1      \\
                           & MIG1                            & GAL4       & GAL1       & Galactokinase                                & 4           & -1         & -1      & -1      \\
                           & PDR1                            & YRR1       & SNQ2       & Drug resistance                              & 5           & 5          & 2       & 3       \\
                           & PDR1                            & YRR1       & YOR1       & Drug resistance                              & 5           & 5          & 2       & 3       \\
                           & PDR1                            & PDR3       & HXT11      & Drug resistance                              & 5           & 5          & 2       & 3       \\
                           & PDR1                            & PDR3       & HXT9       & Drug resistance                              & 5           & 5          & 2       & 3       \\
                           & PDR1                            & PDR3       & PDR5       & Drug resistance                              & 5           & 5          & 2       & 3       \\
                           & PDR1                            & PDR3       & IPT1       & Drug resistance                              & 5           & 5          & 2       & 3       \\       
                           & PDR1                            & PDR3       & SNQ2       & Drug resistance                              & 5           & 5          & 2       & 3       \\
                           & PDR1                            & PDR3       & YOR1       & Drug resistance                              & 5           & 5          & 2       & 3       \\
                           & RIM101                          & IME1       & DIT1       & sporulation-specific                         & 6           & 6          & 3       & 4       \\
                           & SPT16                           & SWI4/SWI6  & CLN1       & Cell cycle and mating type switch            & 7           & -1         & 4       & 5       \\
                           & SPT16                           & SWI4/SWI6  & CLN2       & Cell cycle and mating type switch            & 7           & -1         & -1      & 5       \\
                           & SPT16                           & SWI4/SWI6  & HO         & Cell cycle and mating type switch            & 7           & -1         & -1      & -1      \\
                           & TUP1                            & ALPHA1     & MFALPHA1   & Mating factor alpha                          & 1           & 1          & -1      & 5       \\
                           & UME6                            & INO2/INO4  & CHO1       & Phospholipid biosynthesis                    & 8           & 6          & 5       & 4       \\
                           & UME6                            & INO2/INO4  & CHO2       & Phospholipid biosynthesis                    & 8           & 6          & 5       & 4       \\
                           & UME6                            & INO2/INO4  & OPI3       & Phospholipid biosynthesis                    & 8           & 6          & 5       & 4       \\ \midrule
Frac. coherently labeled &                                 &            &            &                                              & 29 / 29     & 25 / 29    & 23 / 29  & 23 / 29 \\
Accuracy                   &                                 &            &            &                                              & 0.97        & 0.82       & 0.68    & 0.76    \\

\bottomrule
\end{tabular}
}
\label{tab:ffl_classification}
\end{table}

\subsection{Motif $\motiftype{3}{6}$ in the English Wikipedia article network}

The English Wikipedia
network~\cite{boldi2004webgraph,boldi2011layered,boldi2004ubicrawler} consists
of 4.21 million nodes (representing articles) and 101.31 million edges, where an
edge from node $i$ to node $j$ means that there is a hyperlink from the $i$th
article to the $j$th article.  The network data was downloaded from
\url{http://law.di.unimi.it/webdata/enwiki-2013/}.

We used Algorithm~\ref{alg:motif_fiedler} to find a motif-based cluster for
motif $\motiftype{3}{6}$ and $\medge$ (the algorithm was run on the largest
connected component of the motif adjacency matrix).  The clusters are shown in
Figure~\ref{fig:enwiki_comms}.  The nodes in the motif-based cluster are cities
and barangays (small administrative divisions) in the Philippines.  The cluster
has a set of nodes with many outgoing links that form the source node in motif
$\motiftype{3}{6}$.  In total, the cluster consists of 22 nodes and 338 edges.
The linking pattern appears anomalous and suggests that perhaps the pages
uplinking should receive reciprocated links.  On the other hand, the edge-based
cluster is much larger cluster and does not have too much structure.  The
cluster consists of several high-degree nodes and their neighbors.

\clearpage
\begin{figure}[htb]
\begin{centering}
\includegraphics[width=1\columnwidth]{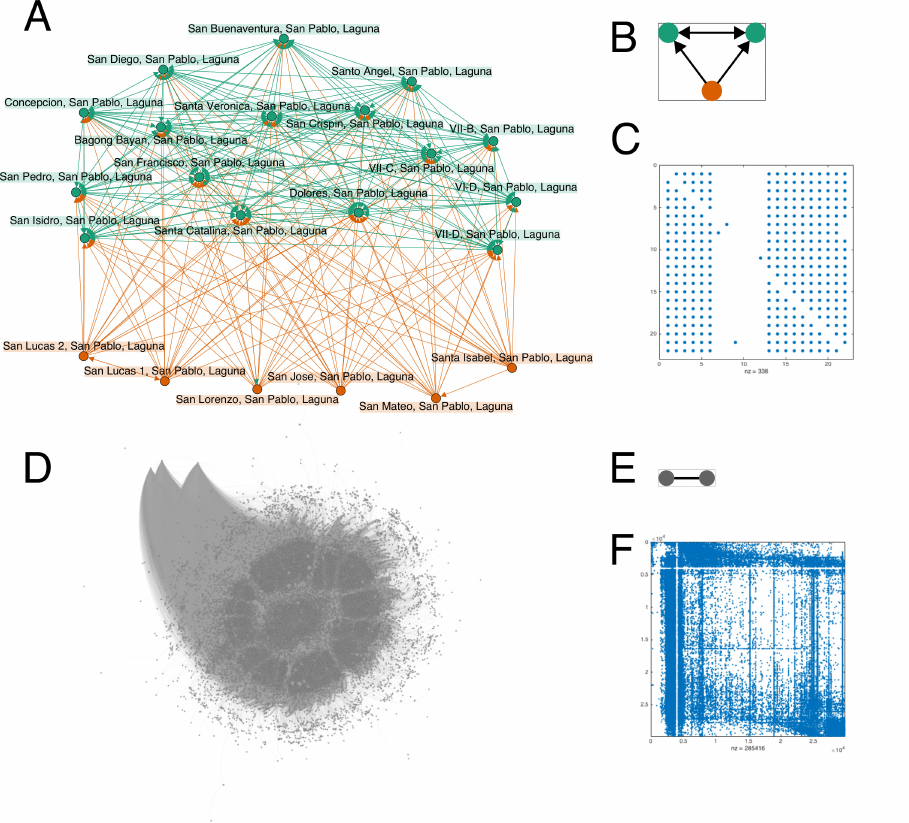}
\caption{%
Clusters from the English Wikipedia hyperlink
network~\cite{boldi2004webgraph,boldi2011layered,boldi2004ubicrawler}.
{\bf A--C}:
Motif-based cluster (A) for motif $\motiftype{3}{6}$ (B).  The cluster
consists of cities and small administrative divisions in the Philippines.  The
green nodes have many bi-direction links with each other and many incoming links
from orange nodes at the bottom of the figure.  The spy plot illustrates this
network structure (C).
{\bf D--F}:
Cluster (D) for undirected edges (E).  The cluster has a few very
high-degree nodes, as evidenced by the spy plot (F).
}
\label{fig:enwiki_comms}
\end{centering}
\end{figure}
\clearpage

\subsection{Motif $\motiftype{3}{6}$ in the Twitter follower network}
We also analyzed the complete 2010 Twitter follower
graph~\cite{boldi2004webgraph,boldi2011layered,kwak2010twitter}.  The graph
consists 41.65 million nodes (users) and 1.47 billion edges, where an edge from
node $i$ to node $j$ signifies that user $i$ is followed by user $j$ on the
social network.  The network data was downloaded from
\url{http://law.di.unimi.it/webdata/twitter-2010/}.

We used Algorithm~\ref{alg:motif_fiedler} to find a motif-based cluster for
motif $\motiftype{3}{6}$ (the algorithm was run on the largest connected
component of the motif adjacency matrix).  The cluster contains 151 nodes and
consists of two disconnected components.  Here, we consider the smaller of the
two components, which consists of 38 nodes.  We also found an edge-based cluster
on the undirected graph (using Algorithm~\ref{alg:motif_fiedler} with motif
$\medge$).  This cluster consists of 44 nodes.

Figure~\ref{fig:twitter_comms} illustrates the motif-based and edge-based
clusters.  Both clusters capture anomalies in the graph.  The motif-based
cluster consists of holding accounts for a photography company.  The nodes that
form bi-directional links have completed profiles (contain a profile picture)
while several nodes with incomplete profiles (without a profile picture) are
followed by the completed accounts.  The edge-based cluster is a near clique,
where the user screen names all begin with ``LC\_''.  We suspect that the
similar usernames are either true social communities, holding accounts, or bots.
(For the most part, their tweets are protected, so we could not verify if any of
these scenarios are true).  Interestingly, both $\motiftype{3}{6}$ and $\medge$
find anomalous clusters.  However, their structures are quite different.  We
conclude that $\motiftype{3}{6}$ can lead to the detection of new anomalous
clusters in social networks.

\clearpage
\begin{figure}[htb]
\begin{centering}
\includegraphics[width=1.0\columnwidth]{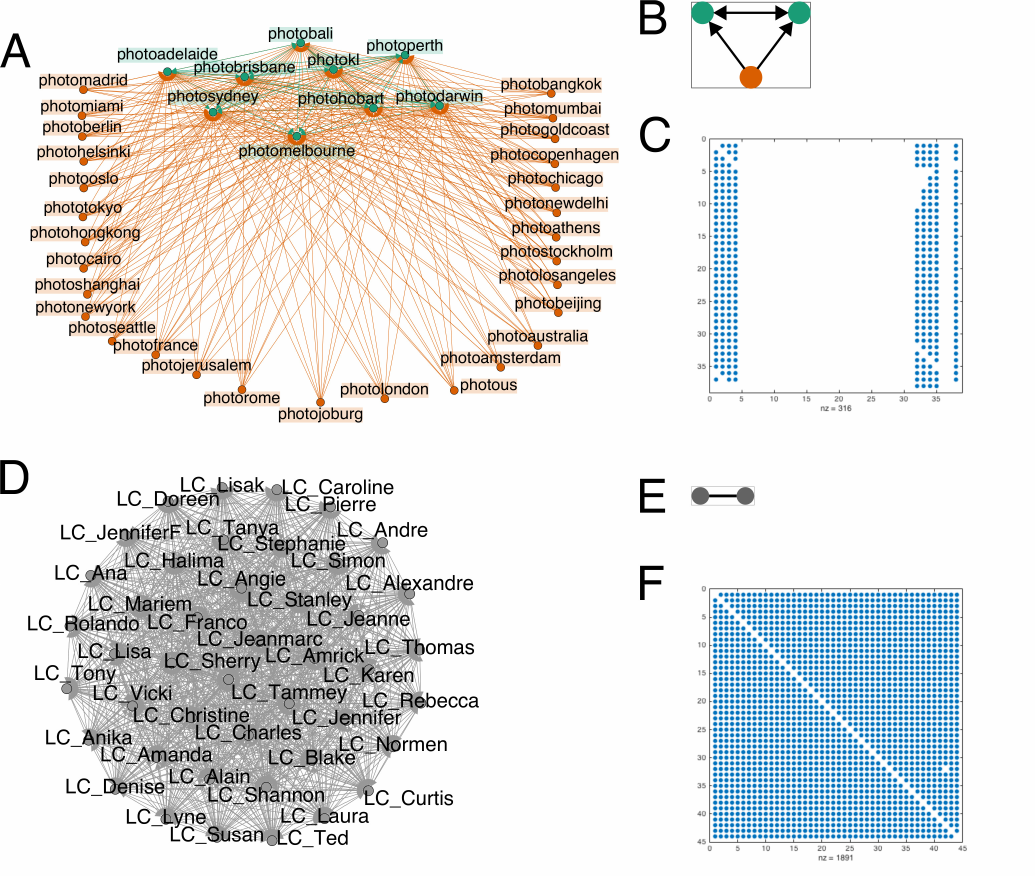}
\caption{%
Clusters in the 2010 Twitter follower
network~\cite{boldi2004webgraph,boldi2011layered,kwak2010twitter}.
{\bf A--C}:
Motif-based cluster (A) for motif $\motiftype{3}{6}$ (B).  All accounts are
holding accounts for a photography company.  The green nodes correspond to
accounts that have completed profiles, while the orange accounts have incomplete
profiles.  The spy plot illustrates how the cluster is formed around this motif
(C).
{\bf D--F}:
Cluster (D) for edge-based clustering (E).  The cluster consists of a
near-clique (F) where all users have the prefix ``LC\_''.
}
\label{fig:twitter_comms}
\end{centering}
\end{figure}
\clearpage

\subsection{Motif $\motiftype{3}{7}$ in the Stanford web graph}

The Stanford web graph~\cite{leskovec2009community,snapnets} consists of 281,903
nodes and 2,312,497 edges, where an edge from node $i$ to node $j$ means that
there is a hyperlink from the $i$th web page to the $j$th web page.  Here, all
of the web pages come from the Stanford domain.  The network data was downloaded
from \url{http://snap.stanford.edu/data/web-Stanford.html}.

We used Algorithm~\ref{alg:motif_fiedler} to find a motif-based cluster for
motif $\motiftype{3}{7}$, a motif that is over-expressed in web
graphs~\cite{milo2002network}.  An illustration of the cluster and an edge-based
cluster (\emph{i.e.}, using Algorithm~\ref{alg:motif_fiedler} with $\medge$) are
in Figure~\ref{fig:web_stanford}.  Interestingly, both clusters exhibits a
core-periphery structure, albeit markedly different ones.  The motif-based
cluster contains several core nodes with large in-degree.  Such core nodes
comprise the sink node in motif $\motiftype{3}{7}$.  On the periphery are
several clusters within which are many bi-directional links (as illustrated by
the spy plot in Figure~\ref{fig:web_stanford}).  The nodes in these clusters
then up-link to the core nodes.  This type of organizational unit suggests an
explanation for why motif $\motiftype{3}{7}$ is over-expressed: clusters of
similar pages tend to uplink to more central pages.  The edge-based cluster also
has a few nodes with large in-degree, serving as a small core.  On the periphery
are the neighbors of these nodes, which themselves tend \emph{not} to be
connected (as illustrated by the spy plot).

\clearpage
\begin{figure}[htb]
\begin{centering}
\includegraphics[width=0.85\columnwidth]{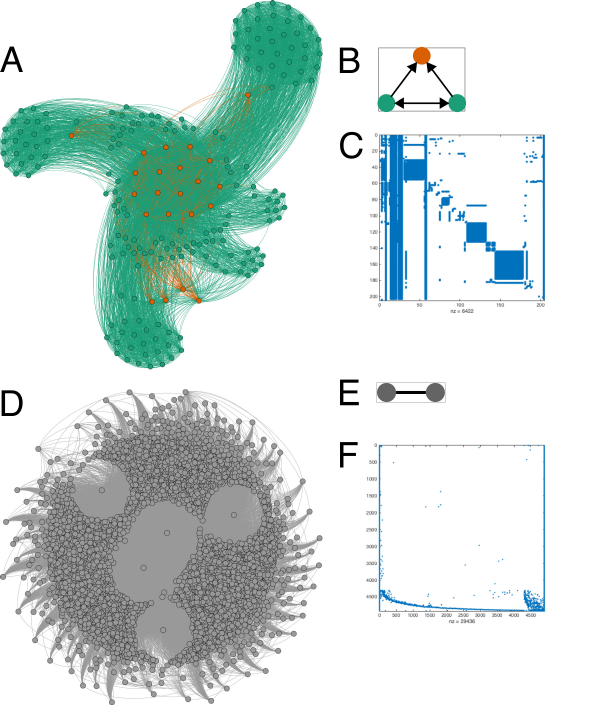}
\caption{%
Clusters in the Stanford web graph~\cite{leskovec2009community}.
{\bf A--C:}
Motif-based cluster (A) for motif $\motiftype{3}{7}$ (B).  The cluster has
a core group of nodes with many incoming links (serving as the sink node in
$\motiftype{3}{7}$; shown in orange) and several periphery groups that are tied
together (the bi-directional link in $\motiftype{3}{7}$; shown in green) and
also up-link to the core.  This is evident from the spy plot (C).
{\bf D--F:}
Cluster (C) for undirected edges (B).  The cluster contains a few
high-degree nodes and their neighbors, and the neighbors tend to not be
connected, as illustrated by the splot (F).
}
\label{fig:web_stanford}
\end{centering}
\end{figure}
\clearpage

\subsection{Semi-cliques in collaboration networks}

We used Algorithm~\ref{alg:motif_fiedler} to identify clusters of a four-node
motif (the semi-clique) that has been studied in conjunction with researcher
productivity in collaboration networks~\cite{chakraborty2014automatic} (see
Figure~\ref{fig:collab}).  We found a motif-based cluster in two different
collaboration networks.  Each one is derived from co-authorship in papers
submitted to the arXiv under a certain category; here, we analyze the "High
Energy Physics--Theory" (HepTh) and "Condensed Matter Physics" (CondMat)
categories~\cite{leskovec2007graph,snapnets}.  The HepTh network has 23,133
nodes and 93,497 edges and the CondMat network has 9,877 nodes and 25,998 edges.
The HepTh network data was downloaded from
\url{http://snap.stanford.edu/data/ca-HepTh.html}
and the CondMat network data was downloaded from
\url{http://snap.stanford.edu/data/ca-CondMat.html}.

Figure~\ref{fig:collab} shows the two clusters for each of the collaboration
networks.  In both networks, the motif-based cluster consists of a core group of
nodes and similarly-sized groups on the periphery.  The core group of nodes
correspond to the nodes of degree 3 in the motif and the periphery group nodes
correspond to the nodes of degree 2.  One explanation for this organization is
that there is a small small group of authors that writes papers with different
research groups.  Alternatively, the co-authorship could come from a single
research group, where senior authors are included on all of the papers and
junior authors on a subset of the papers.

On the other hand, the edge-based clusters (\emph{i.e.}, result of
Algorithm~\ref{alg:motif_fiedler} for $\medge$) are a clique in the HepTh
netowork and a clique with a few dangling nodes in the CondMat network.  The
dense clusters are quite different from the sparser clusters based on the
semi-clique.  Such dense clusters are not that surprising.  For example, a
clique could arise from a single paper published by a group of authors.

\clearpage
\begin{figure}[htb]
\begin{centering}
\includegraphics[width=1.0\columnwidth]{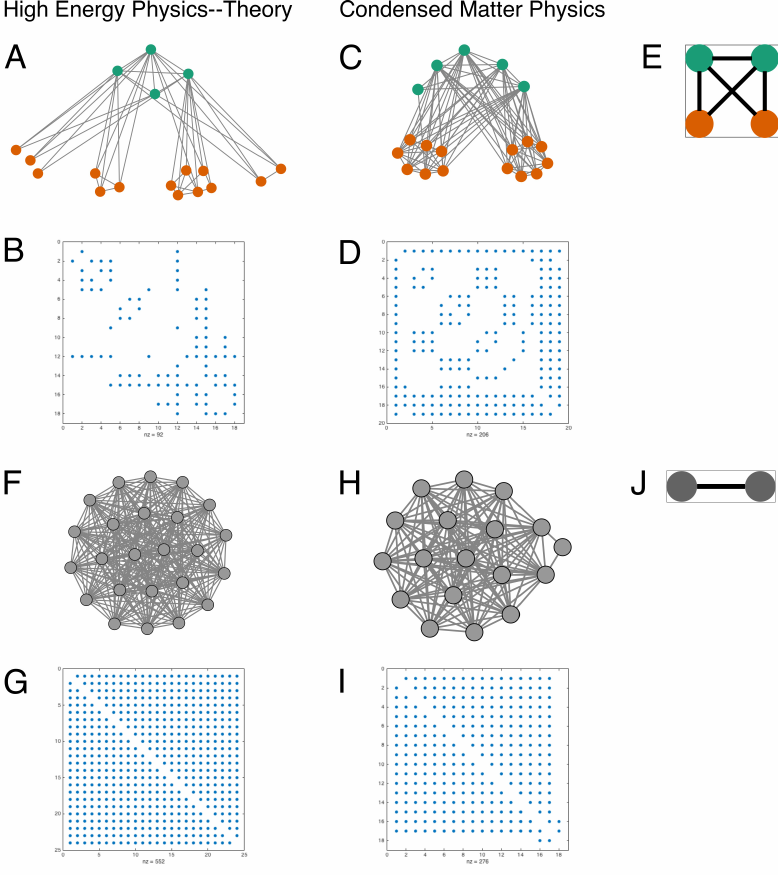}
\caption{%
Clusters in co-authorship networks~\cite{leskovec2007graph}.
{\bf A--E:}
Best motif-based cluster for the semi-clique motif (E) in the High Energy
Physics--Theory collaboration network (A) and the Condensed Matter Physics
collaboration network (C).  Corresponding spy plots are shown in (B) and (D).
{\bf F--I:}
Best edge-based (I) cluster in the High Energy Physics--Theory collaboration
network (F) and the Condensed Matter Physics collaboration network (H).
Corresponding spy plots are shown in (G) and (I).
}
\label{fig:collab}
\end{centering}
\end{figure}
\clearpage

\section{Data availability}
All data is available at our project web site at \projecturl.  The web site
includes links to datasets used for experiments throughout the supplementary
material%
~\cite{west2014exploiting,snapnets,leskovec2005graphs,gehrke2003overview,albert1999internet,leskovec2007dynamics,leskovec2010governance,leskovec2012learning,boldi2004webgraph,boldi2011layered,boldi2014bubing,takac2012data,backstrom2006group,leskovec2009community,yang2012defining}.

\newif\ifmanualbib
\manualbibtrue

\ifarxiv
\manualbibfalse
\fi

\ifmanualbib
\else
\nocite{wagner1993between}
\fi

\clearpage
\ifmanualbib
\input{manual-SM-bib}
\else
\bibliography{scibib}
\bibliographystyle{Science}
\fi

\clearpage
Authors would like to thank Rok Sosi\v{c} for insightful comments.
ARB acknowledges the support of a Stanford Graduate Fellowship.
DFG acknowledges the support of NSF CCF-1149756 and IIS-1422918
and DARPA SIMPLEX.
JL acknowledges the support of 
NSF IIS-1149837 and CNS-1010921, NIH BD2K,
DARPA XDATA and SIMPLEX,
Boeing, Lightspeed, and Volkswagen.



\clearpage

\end{document}